\providecommand{\tabularnewline}{\\}
 \newcommand\thmsname{\protect\theoremname}
 \newcommand\nm@thmtype{theorem}
 \theoremstyle{plain}
 \newenvironment{namedthm}[1][Undefined Theorem Name]{
   \ifx{#1}{Undefined Theorem Name}\renewcommand\nm@thmtype{theorem*}
   \else\renewcommand\thmsname{#1}\renewcommand\nm@thmtype{namedtheorem}
   \fi
   \begin{\nm@thmtype}}
   {\end{\nm@thmtype}}
\theoremstyle{plain}
\newtheorem{thm}{\protect\theoremname}[section]
  \theoremstyle{remark}
  \newtheorem{rem}[thm]{\protect\remarkname}
  \theoremstyle{plain}
  \newtheorem{prop}[thm]{\protect\propositionname}
  \newtheorem{lem}[thm]{\protect\lemmaname}
\date{October, 21rst 2014}
  \providecommand{\lemmaname}{Lemma}
   \providecommand{\fg}{\ifdim\lastskip>\z@\unskip\fi~\frqq}%
  \providecommand{\propositionname}{Proposition}
  \providecommand{\remarkname}{Remark}
  \providecommand{\theoremname}{Theorem}
\providecommand{\theoremname}{Theorem}
\begin{document}

\title{State and Parameter Estimation of Partially Observed Linear Ordinary Differential Equations with Deterministic Optimal Control}

 \author{{\large  Quentin Clairon,  Nicolas J-B. Brunel}}
\institution{ENSIIE \& Laboratoire de Math\'ematiques et Mod\'elisation d'Evry,\\  UMR CNRS 8071, Universit\'e
d'Evry, France}

\maketitle
\begin{abstract}
Ordinary Differential Equations are a simple but powerful framework
for modeling complex systems. Parameter estimation from times series
can be done by Nonlinear Least Squares (or other classical approaches),
but this can give unsatisfactory results because the inverse problem can be ill-posed, even when the differential equation is linear.  

Following recent approaches that use approximate solutions of the
ODE model, we propose a new method that converts parameter estimation
into an optimal control problem: our objective is to determine a control
and a parameter that are as close as possible to the data. We derive
then a criterion that makes a balance between discrepancy with data
and with the model, and we minimize it by using optimization in functions
spaces: our approach is related to the so-called Deterministic Kalman Filtering,
but different from the usual statistical Kalman filtering. 

We show the root-$n$ consistency and asymptotic normality of the
estimators for the parameter and for the states. Experiments in a
toy model and in a real case shows that our approach is generally
more accurate and more reliable than Nonlinear Least Squares and Generalized
Smoothing, even in misspecified cases. 

\end{abstract}
\begin{keywords}
Ordinary Differential Equation, Optimal Control, Parameter Estimation,
Smoothing, Riccati Equation, M-estimation. 
\end{keywords}

\section{Introduction}

Ordinary Differential Equations (ODE) are a widely used class of mathematical
models in biology, physics, engineering, \dots Indeed, it is a relatively
simple but powerful framework for expressing the main mechanisms and
interactions of potentially complex systems. It is often a reference
framework in population dynamics and epidemiology \cite{Ellner2006},
virology \cite{Nowak2000}, or in genetics for describing gene regulation
networks \cite{Marbach2012,Wu2014}. The model takes the form $\dot{x}=f(t,x,\theta)$,
where $f$ is a vector field, $x$ is the state, and $\theta$ is
a parameter that can be partly known. The parameter $\theta$ is often
of high interest, as it represents rates of changes, phenomenological
constants needed for interpretability and analysis of the system.
Typically, $\theta$ can be related to the sensitivity of a variable
with respect to other variables. 

Hence, the parameter estimation of ODEs from experimental data is
a long-standing statistical subject that have been adressed with many
different tools. Estimation can be done with classical estimators
such as Nonlinear Least Squares (NLS) and Maximum Likelihood Estimator
(MLE) \cite{LiOsborne2005,walter1997identification,Pronzato2008}
or Bayesian approaches \cite{Huang2006,gelman1996,ghasemi2011,calderhead2009}.
Nevertheless, the statistical estimation of an ODE model by NLS leads
to a difficult nonlinear estimation problem. Some difficulties were
pointed out by Ramsay et al. \cite{Ramsay2007} such as computational
complexity, due to ODE integration and nonlinear optimization. These
difficulties are in fact reminiscent of intrinsic difficulties in
the parameter estimation problem, that makes it an ill-posed inverse
problem, that needs some regularization \cite{Engl2009,Stuart2010}
. 

Alternative statistical estimators have been developped to deal with
this particular framework, such as Generalized Smoothing \cite{Ramsay2007,QiZhao2010,Campbell2011,Campbell2013}
or Two-Step estimators \cite{Varah1982,Brunel2008,LiangWujasa2008,GugushviliKlaassen2010,Brunel2014}.
Two-step estimators use a nonparametric estimator $\hat{X}$ and aim
at minimizing quantities characterizing the differential models, such
as the weighted $L^{2}$ distance $\int_{0}^{T}\left\Vert \dot{\widehat{X}}(t)-f(t,\widehat{X}(t),\theta)\right\Vert ^{2}w(t)dt$.
These estimators have a good computational efficiency as they avoid
repeated ODE integration. In practice, the used criteria are also
smoother and easier to optimize than the NLS criterion. Two-step estimators
are consistent in general, but there is a trade-off with the statistical
precision, and some care in the use of nonparametric estimate $\dot{\widehat{X}}$
has to be taken in order to keep a parametric rate \cite{Brunel2008,GugushviliKlaassen2010}.

In the case of Generalized Smoothing \cite{Ramsay2007}, the solution
$X^{*}$ is approximated by a basis expansion that solves approximately
the ODE model; hence, the parameter inference is performed by dealing
with an imperfect model. Based on
the Generalized Profiling approach, Hooker proposed a criteria that
estimates the lack-of-fit through the estimation of a ``forcing function''
$t\mapsto u(t)$ in the ODE $\dot{x}-f(t,x,\tilde{\theta})=u(t)$,
where $\tilde{\theta}$ is a previous estimate obtained by Generalized
Profiling. 

In \cite{BrunelClairon_Riccati2014}, the authors have proposed a
two-step estimator for linear models, that avoids the use of $\dot{\widehat{X}}$
and introduces a forcing function without the finite basis decomposition
by using control theory. The principle is to transform the estimation problem
into a control problem: we have to find the best (or smallest) control
$u$ such that the ODE is close to the data. The limitations of the
results provided in \cite{BrunelClairon_Riccati2014} were the restriction
to fully observed system with known initial condition. The objective
of this paper is to provide a similar two-step estimate that permits
the estimation of $\theta$ without knowing $x_{0}$, that deals with
the partially observed case and provides state estimates. 

One interest of the approach used is to deal directly with the optimization
in a function space without using of series expansion for function
estimation. Moreover, infinite dimensional optimization tools give
a powerful characterization of the solutions, useful in practice.
This work can be seen as an extension of the previous one \cite{BrunelClairon_Riccati2014},
aiming to use control theory result for parameter inference. We deal
now with the partially observed case with unknown initial condition, 
that gives rise to a methodology close to the so-called ``Deterministic Kalman Filter''.
Indeed, in that paper, we assume that the system is linear, with a
linear observation function. 

Our method provides a consistent parametric estimator when the model
is correct. We show that it is root-n consistent and asymptotically
normal. At the same time, we get a discrepancy measure between the
model and the data under the form of an optimal control $u$ analogous
to the forcing function in \cite{Hooker2009}, and we show that we
can estimate the final and initial conditions and hence all the states
if needed, in particular the hidden ones. 

In the next section, we introduce the notations and we motivate our
approach by discussing the Generalized Smoothing approach, and the
link with Optimal Control Theory. In section \ref{sec:Estimator_properties},
we investigate the existence and regularity of our new criterion;
in particular, we derive necessary and sufficient conditions for defining
our approach in partially observed case. We show that the estimator is consistent under some regularity assumption
about the model. Then in section \ref{sec:Asymptotic_Normality},
we show that we reach the root$-n$ rate using regression splines
for $\widehat{Y}$ the nonparametric estimator of the observed signal.
We derive then the consistency of the state estimator derived.  Finally,
we show the interest of our method on a toy model and in a real model
used in chemical engineering, by a comparison with Nonlinear Least
Squares and Generalized Smoothing.

\section{Model and methodology\label{sec:Model-and-methodology}}

We introduce first the statistical ODE model of interest, and the basic notations for defining our estimator. We  relate this work to the Generalized Smoothing  estimator and the Tracking estimator. 

\subsection{Model and Notations}

We partially observe a ``true'' trajectory $X^{*}$ at random
times $0=t_{1}<t_{2}\dots<t_{n}=T$, such that we have $n$ observations
$(Y_{1},\dots,Y_{n})$ defined as \[
Y_{i}=CX^{*}(t_{i})+\epsilon_{i}
\]
where $\epsilon_{i}$ is a random noise and $C$ is the observation matrix of size $d'\times d$.  

We assume that there is a true parameter $\theta^{*}$ belonging to a subset $\Theta$
of $\mathbb{R}^{p}$, such that $X^{*}$ is the unique solution of
the linear ODE 
\begin{equation}
\dot{x}(t)=A_{\theta}(t)x(t)+r_{\theta}(t)\label{eq:LinearODEmodel}
\end{equation}
with initial condition $X^{*}(0)=x_{0}^{*}$; where $t\mapsto A_{\theta}(t)\in\mathbb{R}^{d\times d}$
and $t\mapsto r_{\theta}(t)\in\mathbb{R}^{d}$. More generally, we
denote $X_{\theta,x_{0}}$ the solution of (\ref{eq:LinearODEmodel})
for a given $\theta$, and initial condition $x_{0}$. We assume that
 $x_{0}^{*}$ and  $\theta^{*}$ are unknown, and that they must be estimated  from the data $(y_{1},\dots,y_{n})$. The parameter $\theta^{*}$ is the main parameter of interest, whereas the initial condition is considered as a nuisance parameter, needed essentially for the computation of candidate trajectories $X_{\theta, x_0}$. 

For linear equations, a central role is played by the solutions of the homogeneous ODE
\begin{equation}
\dot{x}(t)=A_{\theta}(t)x(t). \label{eq:ParametricHomogenousLinearODEmodel}
\end{equation}
Indeed, for each $s$ in $[0,T]$, we denote $t \mapsto \Phi_{\theta}\left(t,s\right)$ the solution to the matrix ODE (\ref{eq:ParametricHomogenousLinearODEmodel}),    
with initial condition $I_d$ at time $s$ (i.e  $\Phi_{\theta}\left(s,s\right)=I_d$). The function $(t,s) \mapsto \Phi_{\theta}\left(t,s\right)$ is a  $d\times d$ matrix valued function, called the resolvant of the ODE. It permits to give an explicit dependence of the solutions of (\ref{eq:LinearODEmodel}) in $r_\theta$ and the initial condition $x_0$, thanks to Duhamel's formula: 
\[
X_{\theta,x_{0}}(t)=\Phi_{\theta}(t,0)x_{0}+\int_{0}^{t}\Phi_{\theta}(t,s)r_{\theta}(s)ds.
\]
A consistent and classical method for the estimation of  $\theta^{*}$ is 
Nonlinear Least Squares (NLS),  that minimizes 
\[
\sum_{i=1}^{n}\left\Vert Y_{i}-CX_{\theta,x_{0}}(t_{i})\right\Vert _{2}^{2}. 
\]
A classical alternative is Generalized Smoothing (GS), that uses approximate solutions of the ODE (\ref{eq:LinearODEmodel}). GS replaces the solutions $X_{\theta,x_0}$ by splines that smooth data and solve approximately the ODE with a penalty based on the ODE model.  A basis expansion $\widehat{X}(t,\theta)=\widehat{\beta}(\theta)^{T}p(t)$
is computed for each $\theta$, where $\hat{\beta}(\theta)$ is obtained
by minimizing in $\beta$ the criterion 
\begin{equation}
J_{n}(\beta\vert\theta,\lambda)=\sum_{i=1}^{n}\left\Vert Y_{i}-C\beta^{T}p(t)\right\Vert _{2}^{2}+\lambda\int_{0}^{T}\left\Vert \beta^{T}\dot{p}(t)-\left(A_{\theta}(t)\beta^{T}p(t)+r_{\theta}(t)\right)\right\Vert _{2}^{2}dt\label{eq:Cost_GS}
\end{equation}
This first step is considered as profiling along the nuisance parameter
$\beta$, whereas the estimation of the parameter of interest is obtained
by minimizing the sum of squared errors of the proxy $\hat{X}(t,\theta)$:
\begin{equation}
\hat{\theta}^{GS}=\arg\min_{\theta}\sum_{i=1}^{n}\left\Vert Y_{i}-C\hat{X}(t_{i},\theta)\right\Vert ^{2}\label{eq:Estimator_GS}
\end{equation}
In practice, the hyperparameter $\lambda$ needs to be selected from the data with adaptive procedures, see \cite{Chernovena2014}. 

The essential difference with NLS is the replacement of the exact solution $X_{\theta,x_{0}}$
by the approximation $\hat{X}(\cdot,\theta)$ (that depends also on
the data). This change
induces a new source of error in the estimation of the true trajectory
$t\mapsto X^{*}(t)$ as the functions $\hat{X}(\cdot,\theta)$
are splines that do not solve exactly the ODE model (\ref{eq:LinearODEmodel}).
The ODE constraint is relaxed into an inequality constraint defined on the interval
$[0,T]$. 
The model constraint is never set to 0 because of the
trade-off with the data-fitting term $\sum_{i=1}^{n}\left\Vert Y_{i}-C\beta^{T}p(t)\right\Vert _{2}^{2}$.
For this reason, the ODE model (\ref{eq:LinearODEmodel}) is not solved
and it is useful to introduce the discrepancy term $\hat{u}_{\theta}(t)=\beta^{T}\dot{p}(t)-\left(A_{\theta}(t)\beta^{T}p(t)+r_{\theta}(t)\right)$
that corresponds to a model error. In fact, the proxy $\hat{X}(\cdot,\theta)$
satisfies the perturbed ODE $\dot{x}=A_{\theta}x+r_{\theta}+\hat{u}_{\theta}$.
This forcing function $\hat{u}_{\theta}$ is an outcome of the optimization
process and can be relatively hard to analyze or understand, but its analysis provides a good insight into the relevancy of the model \cite{Hooker2009,HookerEllner2013}. 

Based on these remarks, we introduce the perturbed linear ODE  \begin{equation}
\dot{x}(t)=A_{\theta}(t)x(t)+r_{\theta}(t)+u(t)\label{eq:ControlLinearODEmodel-1}
\end{equation} where the function $t\mapsto u(t)$ can be any function in $L^{2}$.
The solution of the corresponding Initial Value Problem 
\[
\begin{cases}
\dot{x}(t) & =A_{\theta}(t)x(t)+r_{\theta}(t)+u(t) \\
x(0) & =x_{0} 
\end{cases}
\]
is denoted $X_{\theta,x_{0},u}$. Instead of using the spline proxy
$\hat{X}(\cdot,\theta)$ for approximating $X^{*}$, we use the trajectories
$X_{\theta,x_{0},u}$ of the ODE (\ref{eq:ControlLinearODEmodel-1})
controlled by the additional functional parameter $u$. 

In \cite{BrunelClairon_Riccati2014},  the same perturbed model is introduced but the cost function is simpler as the observation matrix $C$ is the identity, and the initial condition is fixed. In that framework, an M-estimator for $\theta$ is proposed, based on the optimization of the criterion 
\begin{equation}
\tilde{S}(\widehat{Y} ; x_0, \theta, \lambda) = \inf_{u \in L^2} \{ \Vert \widehat{Y} - X_{\theta,x_{0},u}\Vert_{L^2}^2 + \Vert u \Vert_{L^2}^2 \}.  \label{eq:S_Tracking_criterion}
\end{equation}
The proper definition of $\tilde{S}$ and the derivation of its properties were obtained by using some classical results of Optimal Control Theory. Essentially, the computation of $\tilde{S}$ corresponds to the classical "tracking problem" that can be solved by the Linear-Quadratic theory (LQ theory). LQ theory solves the minimization problem in $L^2$ of the cost function  
\begin{equation}
C(u)  = \Vert  X_{\theta,x_{0},u}(t)\Vert_{L^2}^2 + \Vert u(t)\Vert_{L^2}^2  +  X_{\theta,x_{0},u}(T)^\top Q X_{\theta,x_{0},u}(T) \\
 \label{eq: QuadraticCost}
\end{equation}
The criteria $\tilde{S}$ used for parameter estimation is associated to the \emph{value function} defined in Optimal Control as $S(t,x)=\inf \{ C(u) \vert X_{\theta,x,u}(t)=x \} $. The value function plays a critical role in the analysis of optimal control problems, typically for the computation of an optimal policy. Under regularity assumptions, the value function $S$ is the solution of the Hamilton-Jacobi-Bellman Equation, which is a first order Partial Differential Equation \cite{Bardi2008}. Quite remarkably, for a linear ODE with a quadratic cost such as (\ref{eq: QuadraticCost}), the value function is a quadratic form in the state $x$, i.e $S(t,x)= - x^\top E(t) x$, where $E(t)$ is the solution of a matrix ODE (the Riccati equation), which makes its computation  very tractable in practice.\\

LQ theory can be adapted for tracking of an output signal $\widehat{Y}= C X^* + \epsilon$ with a perturbed linear ODE, see chapter 7 in \cite{Sontag1998}.  When we do not know the initial condition, some adaptations are required. Indeed, as the initial condition can have a strong influence on the optimal control and the optimal cost; it seems much harder to solve the control problem when the initial condition is not known: the current state $x(t)$ is unknown and all the admissible trajectories must be considered. Nevertheless, this problem is solved by the Deterministic Kalman Filter (DKF) by using the fact that the value function $S$ is a quadratic form on the state. \\
We show in the next section that the Deterministic Kalman Filtering (DKF) is well adapted for developing parameter estimation, as it enables to profile on $x_0$, considered as a nuisance parameter. In a two-step approach, it is critical as we need to control the influence of the nonparametric estimate of $\widehat{Y}$ on the convergence rate. As we use $\widehat{Y}(0)$ as a proxy for $Cx_0^*$, we need to show that the rate of the two-step estimator is not polluted by the use of nonparametric estimates of the boundary conditions, and that we keep a parametric rate for $\theta^*$ and $x_0^*$. This property was carefully checked in \cite{Brunel2008, Brunel2014,LiangWujasa2008}; in that paper, as we do not use implicitly or explicitly the derivative of the nonparametric estimate, the mechanics of the proof are different. 

In the next section, we give some details on LQ theory and on the criterion $S$. The classical costs in optimal control consist of an integral term plus a penalty term on the final state, such as $ X_{\theta,x_{0},u}(T)^\top Q X_{\theta,x_{0},u}(T)$. A preliminary time-reversing transformation is used for introducing properly the initial state in the cost $C$, rather than the final state. In a second step, we derive the criterion $S$, and we give a tractable expression for estimation.  Finally, we discuss the importance of identifiability and observability in the definition on our criterion. 

\subsection{The Deterministic Kalman Filter and the profiled cost}

Following the Tracking estimator, we look for a candidate
$X_{\theta,x_{0},u}$ that minimizes at the same time the discrepancy
with the data and the size of the perturbations $\left\Vert u\right\Vert _{L^{2}}$.  We consider nearly the same cost as in \cite{BrunelClairon_Riccati2014}
\begin{equation}
\tilde{C}\left(\hat{Y};x_{0},u,\theta,\lambda\right)=\int_{0}^{T}\left\Vert \hat{Y}(t)-CX_{\theta,x_{0},u}(t)\right\Vert _{2}^{2}dt+\lambda\int_{0}^{T}\left\Vert u(t)\right\Vert _{2}^{2}dt \label{eq:kalman_cost_function}
\end{equation}
for given $\lambda>0$. We can also add a positive quadratic form $x_{0}^{T}Qx_{0}$, where $Q$ is a positive symmetric matrix $Q$. This additional term permits to introduce easily some prior knowledge on $x_0$ such that we have a cost defined as 
\[
C\left(\hat{Y};u,x_{0},\theta,\lambda\right)  = x_{0}^{T}Qx_{0} +  \tilde{C}\left(\hat{Y};x_{0},u,\theta,\lambda\right). 
\]
Moreover, the matrix $Q$ avoids some technical problems in the definition of our criterion $S$. \\
For each $\theta$ in $\Theta$, we denote 
\begin{equation}
S\left(\hat{Y};\theta,\lambda\right)=\inf_{x_{0}} \left\{ \inf_{u\in L^{2}} C\left(\hat{Y};x_{0},u,\theta,\lambda\right) \right\} \label{eq:ProfiledCost}
\end{equation}
obtained by ``profiling'' on the function $u$ and then in the initial condition
$x_{0}$. The function $\tilde{S}\left(\hat{Y};x_0,\theta,\lambda \right) = \inf_{u\in L^{2}}C\left(\hat{Y};x_{0},u,\theta,\lambda\right) $ is the criterion used in the case of fixed and known initial conditions. Our approach is rather "natural" as we simply profile the regularized criterion  $x_{0}^{T}Qx_{0} + \tilde{S}\left(\hat{Y};x_0,\theta,\lambda\right)$. 

The definition of $S$ mimics the minimization of $J_{n}(\beta\vert\theta,\lambda)$ except that GS uses a discretized solution, defined on a B-splines basis. Nevertheless, our estimator possesses two other essential differences with Generalized Smoothing. As it was already mentioned in \cite{BrunelClairon_Riccati2014}, we define our estimator as the global minimum of  the profiled cost:
\begin{equation}
\widehat{\theta}^{K}=\arg\min_{\theta\in\Theta}S\left(\hat{Y};\theta,\lambda\right)\label{eq:OurEstimator}
\end{equation}
whereas the GS estimator minimizes a different criterion $\sum_{i=1}^{n}\left\Vert Y_{i}-C\hat{X}(t_{i},\theta)\right\Vert ^{2}$.
This means that in our methodology, we try to find a parameter $\theta$
that maintain a reasonable trade-off between model and data, whereas
the Generalized Smoothing Estimator $\hat{\theta}^{GS}$ is dedicated
to fit the data with the proxy $\hat{X}(\cdot,\theta)$, without considering
the size of model error represented by $\bar{u}_{\theta}$. 
Another important difference is in the way we deal with the unobserved part of the system. For simplicity, let us consider that we observe only the first $k<p$ components of $X$, such that the state vector can be written $X=\left( X^{obs}, X^{unobs} \right)$. For Generalized Smoothing, both functions $X^{obs}$ and $X^{unobs}$ are decomposed in a B-spline basis, and the corresponding coefficients $\beta^{obs}$ and $\beta^{unobs}$ are obtained by minimizing $J_n\left(\beta^{obs},\beta^{unobs} \vert \theta, \lambda \right)$. Because $\beta^{unobs} $ does not have to make a trade-off between the data and the ODE model, the estimated missing part $\hat{X}^{unobs}(\cdot,\theta)$ is the exact solution to the ODE (\ref{eq:ControlLinearODEmodel-1}). 
At the contrary, even in the case of partial observations, the perturbed solution $X_{\theta,x_0,u}$ is used for estimating the missing states and a perturbation exists for each component. Consequently, the estimated hidden states are not solution of the initial ODE. We think that this an advantage for state and parameter estimation with respect to Generalized Smoothing (and NLS) because it avoids to rely too strongly on a uncertain model during estimation. This uncertainty can be caused by errors in parameter estimation, or it can be due model misspecification, such as the presence of a forcing function $u^*$. In our experiments, we show that imposing model uncertainty for the unobserved variables is beneficial for error prediction. \\

Before going deeper into the interpretation and analysis of our estimator,
we need to show that the criterion $S\left(\hat{Y};\theta,\lambda\right)$
is properly defined and that we can obtain a tractable expression
for computations and for the theoretical analysis of (\ref{eq:OurEstimator}).
We use the Deterministic Kalman Filter (DKF) to obtain a closed-form expression for the minimal cost w.r.t the control
$u$ and $x_{0}$ (\ref{eq:ProfiledCost}). 

The initial aim of the DKF is to propose an
estimation of the final state $X^{*}(T)$ by making a balance between
the information brought by the noisy signal $\widehat{Y}$ and the
ODE model (see \cite{Sontag1998} for an introduction). We recall the two steps necessary for the filter construction, more
details are given in appendix:
\begin{enumerate}
\item For a given initial condition $x_{0}$, we find the minimum
cost thanks the fundamental theorem in LQ Theory (presented in \ref{thm:LQ_existence_unicity}), 
\item We minimize the quadratic form w.r.t the final
condition. 
\end{enumerate}
We give now the main theorem of that section about the existence of the criterion defined in equation (\ref{eq:ProfiledCost}). 

\begin{namedthm}[Theorem and Definition of $S\left(\zeta;\theta,\lambda\right)$]
\label{thm:kalman_existence_unicity} Let $t\mapsto\zeta(t)$ be a function belonging to $L^{\infty}(\left[0,\, T\right],\mathbb{R}^{d'})$
and $X_{\theta,x_{0},u}$ be the solution to the controlled ODE (\ref{eq:ControlLinearODEmodel-1}).
\\
For any $\theta$ in $\Theta$, $\lambda>0$, $Q > 0$, there exists a unique optimal control
$\bar{u}_{\theta,\lambda}$ and initial condition $\widehat{x_{0}}$
that minimizes the cost function 
\begin{equation}
C\left(\zeta;u,x_{0},\theta,\lambda\right)=x_{0}^{T}Qx_{0}+\int_{0}^{T}\left\{ \left\Vert \zeta(t)-CX_{\theta,x_{0},u}(t)\right\Vert _{2}^{2}+\lambda\left\Vert u(t)\right\Vert _{2}^{2}\right\} dt\label{eq:LQCost}
\end{equation}
The optimal control $\bar{u}_{\theta,\lambda}$ is
\begin{equation}
\bar{u}_{\theta,\lambda}(t)=\frac{1}{\lambda}\left(E_{\theta}(t)X_{\theta,\widehat{x_{0}},\bar{u}_{\theta,\lambda}}(t)+h_{\theta}(t,\zeta)\right)\label{eq:OptimalControl}
\end{equation}
where $E_\theta$ and $h_\theta$ are solutions of the Initial Value Problem 
\begin{equation}
\begin{cases}
\dot{E_{\theta}}(t)=C^{T}C-A_{\theta}^{T}E_{\theta}-E_{\theta}A_{\theta}-\frac{1}{\lambda}E_{\theta}^{2},\\
\dot{h_{\theta}}(t,\zeta)=-\alpha_{\theta}(t)h_{\theta}(t,\zeta)-\beta_{\theta}(t,\zeta)\\
\end{cases}
\label{eq:Riccati_ode}
\end{equation}
with $\left(E_{\theta}(0),h_{\theta}(0,\zeta)\right)=\left(Q,0\right)$.  The functions $\alpha_\theta$ and $\beta_\theta$ are defined by 
\[
\left\{ \begin{array}{l}
\alpha_{\theta}(t)=\left(A_{\theta}(t)^{T}+\frac{E_{\theta}(t)}{\lambda}\right)\\
\beta_{\theta}(t,\zeta)=C^{T}\zeta+E_{\theta}r_{\theta}
\end{array}\right.
\] For all $t\in[0,T]$, the matrix $E_{\theta}(t)$ is symmetric, and
the ODE defining the matrix-valued function $t\mapsto E_{\theta}(t)$
is called the Matrix Riccati Differential Equation of the ODE\textup{
(\ref{eq:ControlLinearODEmodel-1}). }

Finally, the Profiled Cost $S$ has the closed form: \textup{
\begin{equation}
\begin{array}{lll}
S\left(\zeta;\theta,\lambda\right)&=  &\int_{0}^{T}\left(  \Vert \zeta(t)\Vert^2  - 2 r_{\theta}(t)^{T}h_{\theta}(t,\zeta)- \frac{1}{\lambda} \Vert h_{\theta}(t,\zeta) \Vert^2 \right)dt\\
&  &  - h_{\theta}(T,\zeta)^{T}E_{\theta}(T)^{-1}h_{\theta}(T,\zeta).
  \label{eq:ExpressionS}
\end{array}
\end{equation}
}
and the final state is estimated by 
 \textup{
\begin{equation}
X_{\theta,\widehat{x_{0}},\bar{u}_{\theta,\lambda}}(T)=-E_{\theta}(T)^{-1}h_{\theta}(T,\zeta).\label{eq:final_state_expression}
\end{equation}
}\end{namedthm}

\begin{rem}
The functions $t\mapsto \left( E(t),h(t)\right) $ are classically called the \emph{adjoint model}. They depend also on $\theta$,
$\lambda$ and $\zeta$ because of their definition via equation (\ref{eq:Riccati_ode}).
Nevertheless, we do not write it systematically for notational brevity.
As mentioned in the theorem, it is possible to compute $X_{\theta,\widehat{x_{0}},\overline{u}_{\theta,\lambda}}$
in a ``closed-loop'' form as we can solve in a preliminary stage
the adjoint model (\ref{eq:Riccati_ode}) that gives the function $E$
and $h$ for all $t\in\left[0,T\right]$. Thanks to equation (\ref{eq:OptimalControl}), 
the closed-form expression of the optimal control  
$\bar{u}_{\theta,\lambda}$ can be plugged into (\ref{eq:ControlLinearODEmodel-1}). We can compute  $X_{\theta,\widehat{x_{0}},\overline{u}_{\theta,\lambda}}$ by solving the following Final Value Problem: 
\begin{equation}
\begin{cases}
\dot{x}(t)=\left(A_{\theta}(t)+\frac{E(t)}{\lambda}\right)x(t)+r_{\theta}(t)+\frac{h(t,\widehat{Y})}{\lambda}\\
x(T)=-E(T)^{-1}h(T,\widehat{Y}). \label{eq:BackwardClosedLoop}
\end{cases}
\end{equation}
The estimate of $\widehat{x_0}=X_{\theta,\widehat{x_{0}},\overline{u}_{\theta,\lambda}}(0)$ of the initial condition is simply the initial value of the Backward ODE (\ref{eq:BackwardClosedLoop}).  
Then by using $X_{\theta,\widehat{x_{0}},\overline{u}_{\theta,\lambda}}$, 
we can compute effectively the control $\bar{u}_\theta$ thanks to (\ref{eq:OptimalControl}). 
\end{rem}

The existence of the criterion $S$ and the fundamental expression (\ref{eq:ExpressionS}) heavily relies on the nonsingularity of the final value of the Riccati solution $E_\theta(T)$. In particular, the final state is estimated by $X_{\theta,\widehat{x_{0}},\bar{u}_{\theta,\lambda}}(T)=-E_{\theta}(T)^{-1}h_{\theta}(T,\zeta)$, and it is then critical to identify the assumptions that could prevent $E_\theta$ to be singular. Our "Theorem and Definition" is legitimate (and proved in the appendix), because the assumption $Q>0$ ensures that $E_\theta(t)$ is nonsingular for all $t$ in $[0,T]$. 
Moreover, the matrix $Q$ can be thought as a kind of prior for helping the state inference. In our basic definition of the cost (\ref{thm:kalman_existence_unicity}), we put a prior on the norm of the initial condition and our regularization penalizes "huge" solutions. Nevertheless, we can have a more refined prior and use a preliminary guess $\mu \in \mathbb{R}^d$. The modification of  the criterion is straightforward by setting 
\[ C_\mu \left(\zeta;u,x_{0},\theta,\lambda\right)=(x_{0}-\mu)^{T}Q (x_{0}-\mu)+ \left\Vert \zeta(t)-CX_{\theta,x_{0},u}(t)\right\Vert _{L^2}^{2}+\lambda \left\Vert u(t)\right\Vert _{L^2}^{2}. \]
By re-parameterizing the initial conditions with $y_0=x_0 - \mu$ and exploiting the relation $X_{\theta,x_0-\mu,u}(t)=X_{\theta,x_0-\mu,u}(t) - \Phi_\theta(t,0)\mu$ (consequence of the linearity of the ODE) , we get that  \[ \inf_{x_0} \inf_u C_\mu \left(\zeta;x_{0},u,\theta,\lambda\right) = S\left(\zeta - C\Phi_\theta(\cdot,0)\mu ;\theta,\lambda\right).
 \]
At the opposite, it might be inappropriate in some circumstances to impose such kind of information for the initial condition. This can be the case if the number of observations tends to infinity and  $\hat{Y}$ becomes quite close to the truth. Another situation is when the initial conditions of the unobserved part are largely unknown.  
Hence, we extend our estimator to the case $Q=0$, that corresponds also to our framework for studying the asymptotics of $\hat{\theta}^K$ . In order to derive relevant and tractable conditions for ensuring the existence of $S$, we need to ensure that only one trajectory, with a unique initial condition (or final condition), is the global minimum of $C(\zeta;u,x_0,\theta,\lambda)$. The nonsingularity of $E_\theta(t)$ is in fact related to the concept of observability in control theory. In the next proposition, we will pave the way to the assumptions on $C$ and the vector field $A_\theta$ that can guarantee the general existence of our method. 

\begin{prop} \label{prop:invertibility_general_criteria} 
For a given parameter $\theta\in\Theta$ and observation matrix $C$,  the properties 1 and 2  are equivalent:  
\begin{enumerate}
\item The system outputs $Y(t)=C\Phi(t,0)x_0$ satisfy 
\begin{equation}
\int_{0}^{T}\left\Vert C\Phi_\theta(t,0) x_{0,1} - C\Phi_\theta(t,0) x_{0,2}  \right\Vert_{2}dt=0\Longrightarrow x_{0,1}=x_{0,2} \label{eq:first_equivalent_invertibility_criteria}
\end{equation}
\item The (final) observability matrix $O_\theta(T)$ is nonsingular
\begin{equation}
O_\theta(T)=\int_{0}^{T}\Phi_{\theta}(t,0)^{\top}C^{\top}C\Phi_{\theta}(t,0)dt\label{eq:matrix_criteria_invertibility}
\end{equation}
\end{enumerate}
If one of the properties is satisfied, then $E_{\theta}(T)$ is nonsingular and $S$ is defined for $Q\geq0$.
\end{prop}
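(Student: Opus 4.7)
The plan splits into two parts: I would first establish the equivalence (1) $\Leftrightarrow$ (2), then deduce invertibility of $E_\theta(T)$ from either. For the equivalence, I would set $y = x_{0,1} - x_{0,2}$ and exploit linearity of $x_0 \mapsto C\Phi_\theta(\cdot, 0) x_0$. The key observation is that $t \mapsto \|C\Phi_\theta(t, 0) y\|_2$ is continuous and nonnegative, so its $L^1$-norm vanishes iff $C\Phi_\theta(t, 0) y \equiv 0$ on $[0, T]$, iff its $L^2$-norm squared, $y^\top O_\theta(T) y$, vanishes. Since $O_\theta(T)$ is symmetric positive semi-definite by its integral representation, this is in turn equivalent to $O_\theta(T) y = 0$, so (1) becomes $\ker O_\theta(T) = \{0\}$, which is (2).

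For the second part, I would extract a variational characterization of $E_\theta(T)$ from the proof of the ``Theorem and Definition of $S$''. Since $C(\zeta; u, x_0, \theta, \lambda)$ is jointly convex quadratic in $(x_0, u)$, the profiled cost $\tilde S(\zeta; x_0, \theta, \lambda) := \inf_u C$ is convex quadratic in $x_0$, and the closed form (\ref{eq:ExpressionS}) follows by minimizing in $x_0$. Matching the correction $-h_\theta(T, \zeta)^\top E_\theta(T)^{-1} h_\theta(T, \zeta)$ against the completion-of-square formula $\min_{x_0}(x_0^\top M x_0 + 2 b^\top x_0) = -b^\top M^{-1} b$ identifies the Hessian of $\tilde S$ in $x_0$ as $E_\theta(T)$. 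Specializing to $\zeta \equiv 0$ and $r_\theta \equiv 0$, the ODE for $h_\theta$ forces $h_\theta \equiv 0$, yielding $\tilde S(0; x_0, \theta, \lambda) = x_0^\top E_\theta(T) x_0$, which also equals $x_0^\top Q x_0 + \inf_u \int_0^T \bigl( \|CX_{\theta, x_0, u}(t)\|_2^2 + \lambda \|u(t)\|_2^2 \bigr) dt$.

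I would then argue that $E_\theta(T)$ is positive definite by contradiction: fix $x_0 \neq 0$ with $x_0^\top E_\theta(T) x_0 = 0$ and take a minimizing sequence $(u_n) \subset L^2$. The nonnegative contributions must each tend to zero, so $\|u_n\|_{L^2} \to 0$ and $\|CX_{\theta, x_0, u_n}\|_{L^2} \to 0$. Duhamel's formula $X_{\theta, x_0, u_n}(t) = \Phi_\theta(t, 0) x_0 + \int_0^t \Phi_\theta(t, s) u_n(s) ds$, combined with Cauchy--Schwarz and $\sup_{[0, T]^2} \|\Phi_\theta\| < \infty$, shows the second summand tends to $0$ in $L^\infty$, hence in $L^2$. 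Therefore $x_0^\top O_\theta(T) x_0 = \int_0^T \|C\Phi_\theta(t, 0) x_0\|_2^2 dt = 0$, and (2) would force $x_0 = 0$, contradicting $x_0 \neq 0$. Consequently $E_\theta(T)$ is invertible, and (\ref{eq:ExpressionS}) defines $S$ for any $Q \geq 0$.

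I expect the main obstacle to be the identification $\tilde S(0; x_0, \theta, \lambda) = x_0^\top E_\theta(T) x_0$: once this is granted from the appendix's LQ completion-of-square, the Duhamel estimate and the contradiction are essentially one-liners.
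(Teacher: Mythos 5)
Your proof of the equivalence between properties 1 and 2 is fine: passing to $y=x_{0,1}-x_{0,2}$, using continuity and nonnegativity of $t\mapsto\left\Vert C\Phi_{\theta}(t,0)y\right\Vert _{2}$ to get $C\Phi_{\theta}(\cdot,0)y\equiv0$ from the vanishing integral, and then reading property 1 as $y^{\top}O_{\theta}(T)y=0\Rightarrow y=0$ together with positive semi-definiteness of $O_{\theta}(T)$ is exactly the standard Gramian argument, and nothing is missing there.

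The gap is in the second part, at the identification $\tilde{S}(0;x_{0},\theta,\lambda)=x_{0}^{\top}E_{\theta}(T)x_{0}$. First, the "matching" step is not a valid inference: equality of the scalar minima $-h_{\theta}(T,\zeta)^{\top}E_{\theta}(T)^{-1}h_{\theta}(T,\zeta)=-b^{\top}M^{-1}b$ does not pin down the pair $(M,b)$, so it cannot tell you which variable the quadratic form with Hessian $E_{\theta}(T)$ lives in. Second, the identification is in fact false: in the time-reversed LQ derivation (Appendix \ref{sec: DerivationKalman}), $E_{\theta}(T)$ and $h_{\theta}(T,\zeta)$ are the coefficients of the minimal cost viewed as a quadratic function of the \emph{final} state $x_{T}=X_{\theta,x_{0},u}(T)$, i.e.\ of the arrival cost $\inf\{C(\zeta;u,x_{0},\theta,\lambda)\,:\,X_{\theta,x_{0},u}(T)=x_{T}\}$ where one minimizes over $(x_{0},u)$ with $x_{T}$ fixed --- not of $x_{0}$ with the final state free. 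The Hessian in $x_{0}$ of $\inf_{u}C$ is $Q+P(0)$ with $P$ the backward LQR Riccati solution $-\dot{P}=C^{\top}C+A_{\theta}^{\top}P+PA_{\theta}-\lambda^{-1}P^{2}$, $P(T)=0$, which generally differs from $E_{\theta}(T)$ (forward equation, opposite sign on the $A_{\theta}$ terms, initial value $Q$): already for the scalar system $\dot{x}=ax+u$, $C=1$, the two disagree at order $T^{2}$ when $a\neq0$, and for large $Q$ your right-hand side grows like $Q$ while $E_{\theta}(T)$ stays bounded. So, as written, your contradiction argument proves nonsingularity of $Q+P(0)$, not of $E_{\theta}(T)$. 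The strategy is salvageable with one substitution: use the homogeneous arrival cost $x_{T}^{\top}E_{\theta}(T)x_{T}=\inf\{x_{0}^{\top}Qx_{0}+\int_{0}^{T}(\left\Vert CX_{\theta,x_{0},u}(t)\right\Vert _{2}^{2}+\lambda\left\Vert u(t)\right\Vert _{2}^{2})dt\,:\,X_{\theta,x_{0},u}(T)=x_{T}\}$ (legitimate since $E_{\theta}$ depends on neither $\zeta$ nor $r_{\theta}$); then for $x_{T}\neq0$ with $x_{T}^{\top}E_{\theta}(T)x_{T}=0$, a minimizing sequence $(x_{0}^{n},u_{n})$ gives $\left\Vert u_{n}\right\Vert _{L^{2}}\rightarrow0$ and $\left\Vert CX_{n}\right\Vert _{L^{2}}\rightarrow0$, and the backward Duhamel formula $X_{n}(t)=\Phi_{\theta}(t,T)x_{T}+\int_{T}^{t}\Phi_{\theta}(t,s)u_{n}(s)ds$ yields $\int_{0}^{T}\left\Vert C\Phi_{\theta}(t,T)x_{T}\right\Vert _{2}^{2}dt=0$, so $\Phi_{\theta}(0,T)x_{T}$ is a nonzero element of $\ker O_{\theta}(T)$, contradicting property 2. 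With that correction your Duhamel estimate and the rest of the argument go through.
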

An important feature of that proposition is that the criterion does not depend on $r_{\theta}$. Moreover, if $C$ is full rank, the matrix $E_{\theta}(T)$
is always nonsingular for all $\theta$ in $\Theta$. 
The criterion 1 means that for a given $\theta$, any solution
$X_{\theta,x_{1,0}}$ and $X_{\theta,x_{2,0}}$ of (\ref{eq:LinearODEmodel})
can be distinguished by their partial observation $Y_{\theta}^{i}(t):=CX_{\theta,x_{i,0}}(t), i=1,2$. The matrix $C$
"gives" enough information about the system so that the observed
part is sufficient to uniquely characterize the whole system's state.

The next section is dedicated to the derivation of the regularity
properties of $S$. Thanks to the different possible expressions for the criterion  $S$, we can show the smoothness in $\zeta$
and $\theta$, and compute directly the needed derivatives.

\section{\label{sec:Estimator_properties}Consistency of the Deterministic Kalman Filter Estimator}

\subsection{Properties of the criterion $S(\widehat{Y};\theta,\lambda)$}

We have a tractable expression of the cost function $S(\widehat{Y};\theta,\lambda)$
for a given $\theta$, but we still need to derive the properties of
$\theta \mapsto S(\widehat{Y};\theta,\lambda)$ and $\theta\rightarrow S(Y^{*};\theta,\lambda)$
on $\Theta$, and shows some convergence properties.
First of all, we need to ensure the existence of $S(\widehat{Y};\theta,\lambda)$; this is the case if the non-parametric estimator $\widehat{Y}$  belongs to $L^{\infty}(\left[0,\, T\right],\mathbb{R}^{d'})$ (more explanations are given in appendix A). 
We show that for all   $Y$ in $L^{\infty}(\left[0,\, T\right],\mathbb{R}^{d'})$, the function 
$\theta \mapsto S(Y;\theta,\lambda)$ is well defined and $C^{1}$
on $\Theta$, under some regularity and identifiability assumptions, detailed below:
\begin{description}
\item[C1:]    $\Theta$ is a compact subset of $\mathbb{R}^{p}$ and $\theta^{*}$ is in the interior $\mathring{\Theta}$,
\item[C2a:]  $Q=0$  and for all $\theta$ in $\Theta$, $O_\theta(T)$ is nonsingular,
\item[C2b:] The model is identifiable at $(\theta^{*},x_{0}^{*})$ i.e 
\[
\forall\left(\theta,x_{0}\right)\in\Theta\times\mathcal{X}\,;\, CX_{\theta,x_{0}}=CX_{\theta^{*},x_{0}^{*}}\Longrightarrow (\theta, x_0)= (\theta^{*}, x_{0}^{*}),
\]
\item[C3:] $\forall\left(t,\theta\right)\in\left[0\,,\, T\right]\times\Theta,\:(t,\theta)\rightarrow A_{\theta}(t)$
and $(t,\theta)\rightarrow r_{\theta}(t)$ are continuous,
\item[C4:] $\forall\left(t,\theta\right)\in\left[0,\, T\right]\times\Theta,\,\left(t,\theta\right)\longmapsto\frac{\partial A_{\theta}}{\partial\theta}$
and $\left(t,\theta\right)\longmapsto\frac{\partial r_{\theta}}{\partial\theta}$
are continuous.
\end{description}
Condition 2 is about identifiability condition: condition 2a is needed for the existence of the criterion $S$, and is related to the identifiability of the initial condition. But C2a is not sufficient, and we need Condition 2b for structural identifiability, based on the joint identifiability at $\left(\theta^{*},x_{0}^{*}\right)$. We require that the observed
output $CX_{\theta^{*},x_{0}^{*}}$ can be generated on by the couple
$\left(\theta^{*},x_{0}^{*}\right)$. 
The identifiability problem of systems can be difficult (more than observability). For linear
system, several approaches can be used, such as Laplace Transform \cite{Bellman1970}, or Power Expansions \cite{Pohjanpalo1978}, see \cite{MiaoXiaPerelsonWu-siam2011} for a review. 
So far, most of existing methods are poorly used because they rely on (heavy) formal computations, which
limit their interest to low dimensional system. Nonetheless, progress in automatic formal computation has promoted new methods based on
differential algebra and the Ritt's algorithm, that improves identifiability checking, \cite{Hubert1999,Lyzell2011,Vidal2000}. 

According to the context, the norm $\left\Vert \right\Vert _{2}$ will
denote the Euclidean norm in $\mathbb{R}^{d}$, $\left\Vert X\right\Vert _{2}=\sqrt{\sum_{i=1}^{d}X_{i}^{2}}$
or the Frobenius matrix norm $\left\Vert A\right\Vert _{2}=\sqrt{\sum_{i,j}\left|a_{i,j}\right|^{2}}$. We use the functional norm in $L^{2}\left(\left[0,\, T\right],\mathbb{R}^{d'}\right)$
defined by: $\left\Vert f\right\Vert _{L^{2}}=\sqrt{\int_{0}^{T}\left\Vert f(t)\right\Vert _{2}^{2}dt}$.
Continuity and differentiability have to be understood according to
these norms.
\begin{prop}
\label{prop:existence_A_E_h}Under conditions 1, 2a and 3 we have: 

\[
\overline{A}=\sup_{\theta\in\Theta}\left\Vert A_{\theta}\right\Vert _{L^{2}}<+\infty
\]
\[
\overline{r}=\sup_{\theta\in\Theta}\left\Vert r_{\theta}\right\Vert _{L^{2}}<+\infty
\]
 
\[
\overline{X}=\sup_{\theta\in\Theta}\left\Vert X_{\theta}\right\Vert _{L^{2}}<+\infty
\]
 
\[
\bar{E}=\sup_{\theta\in\Theta}\left\Vert E_{\theta}\right\Vert _{L^{2}}<+\infty
\]
 and 
\[
\forall Y\in L^{\infty}(\left[0,\, T\right],\mathbb{R}^{d'}),\,\bar{h_{Y}} = \sup_{\theta\in\Theta}\left\Vert h_{\theta}(.,Y)\right\Vert _{L^{2}}<+\infty
\] Hence, for all  $Y$ in $L^{\infty}(\left[0,\, T\right],\mathbb{R}^{d'})$, 
the map $\theta\longmapsto S(Y;\theta,\lambda)$ is well defined on
$\Theta$ (i.e $\sup_{\theta\in\Theta}S(Y;\theta,\lambda)<+\infty$) 
\end{prop}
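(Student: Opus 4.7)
The plan is to establish the five supremum bounds in the order in which they are stated, since each one uses the previous ones, and then to deduce the finiteness of $S(Y;\theta,\lambda)$ from the closed-form expression (\ref{eq:ExpressionS}). The guiding principle throughout is: continuity on the compact set $[0,T]\times\Theta$ provides uniform boundedness of raw data, and Grönwall-type arguments transport this uniformity along the relevant ODEs.

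The bounds $\overline{A}$ and $\overline{r}$ are immediate from C1 and C3: the maps $(t,\theta)\mapsto A_\theta(t)$ and $(t,\theta)\mapsto r_\theta(t)$ are continuous on the compact $[0,T]\times\Theta$, hence uniformly bounded in operator/Euclidean norm, and $L^2([0,T])$ bounds follow by integration over a bounded interval. The bound $\overline{X}$ then follows from Duhamel's formula combined with Grönwall's inequality applied uniformly in $\theta$, since the integrating coefficient $\|A_\theta(s)\|$ and the forcing $\|r_\theta(s)\|$ are both uniformly dominated by the previous bounds and the starting point $x_0^*$ is fixed.

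The $\overline{E}$ bound is the central nontrivial ODE step, because the Riccati equation in (\ref{eq:Riccati_ode}) is quadratic in $E_\theta$, so one must first rule out finite-time blow-up. Since C2a imposes $Q=0$, we have $E_\theta(0)=0\succeq 0$, and a standard comparison argument shows that $E_\theta(t)$ remains positive semidefinite on $[0,T]$ and satisfies $E_\theta(t)\preceq\tilde{E}_\theta(t)$ in the Loewner order, where $\tilde{E}_\theta$ is the solution of the linear Lyapunov equation $\dot{\tilde{E}}_\theta=C^{T}C-A_\theta^{T}\tilde{E}_\theta-\tilde{E}_\theta A_\theta$ with $\tilde{E}_\theta(0)=0$. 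The linear equation admits a unique solution on $[0,T]$ that is controlled by $\overline{A}$ via Grönwall's lemma, giving a uniform $L^\infty$ (hence $L^2$) bound that transfers to $E_\theta$. The bound $\overline{h_Y}$ then follows directly from a linear Grönwall argument on the ODE satisfied by $h_\theta(\cdot,Y)$: the coefficients $\alpha_\theta$ and the forcing $\beta_\theta(\cdot,Y)=C^{T}Y+E_\theta r_\theta$ are uniformly bounded thanks to $\overline{A}$, $\overline{E}$, $\overline{r}$, and the hypothesis $Y\in L^\infty$.

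To conclude that $\sup_{\theta\in\Theta}S(Y;\theta,\lambda)<+\infty$, we plug these bounds into (\ref{eq:ExpressionS}). The integral over $[0,T]$ is uniformly controlled by $\|Y\|_{L^\infty}$, $\overline{r}$ and $\overline{h_Y}$, so the only delicate term is the boundary term $h_\theta(T,Y)^{T}E_\theta(T)^{-1}h_\theta(T,Y)$, which requires a \emph{uniform} upper bound on $\|E_\theta(T)^{-1}\|$. This is where C2a truly matters: by Proposition \ref{prop:invertibility_general_criteria}, the nonsingularity of $O_\theta(T)$ ensures that $E_\theta(T)$ is nonsingular for every $\theta\in\Theta$. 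Continuous dependence of Riccati solutions on parameters (guaranteed by C3 together with the just-obtained uniform $L^\infty$ bound ruling out blow-up) makes $\theta\mapsto\det E_\theta(T)$ continuous and nonvanishing on the compact set $\Theta$, hence bounded away from zero; the matrix inverse is then uniformly bounded. I expect the main obstacle to be precisely this uniform invertibility at the boundary: the PSD comparison trick handles the nonlinear Riccati, but the uniform lower eigenvalue bound on $E_\theta(T)$ relies genuinely on the combination of compactness of $\Theta$, continuous parameter dependence, and the pointwise nonsingularity delivered by C2a.
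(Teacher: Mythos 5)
Your proof is correct, and on the two delicate points it takes a genuinely different route from the paper. For the uniform bound on the Riccati solution, you rule out finite-time blow-up and get $\overline{E}$ by a Loewner-order comparison: since $Q=0$, $E_\theta$ stays symmetric positive semidefinite and the difference $D=\tilde{E}_\theta-E_\theta$ solves a Lyapunov equation with source $\tfrac{1}{\lambda}E_\theta^2\succeq 0$, hence $0\preceq E_\theta(t)\preceq\tilde{E}_\theta(t)$ with $\tilde{E}_\theta$ the solution of the linear Lyapunov equation, which Gr\"onwall bounds uniformly over the compact $\Theta$. The paper instead takes global existence of $E_\theta$ on $[0,T]$ as an output of the LQ theorem (\ref{thm:LQ_existence_unicity}) — the Riccati solution exists because it represents the finite value function — and then applies Gr\"onwall directly to (\ref{eq:Riccati_ode}), which yields only an implicit bound of the form $\overline{E_\lambda}\leq K e^{c\,\overline{E_\lambda}/\lambda}$; your comparison argument is more self-contained and gives an explicit $\lambda$-free bound. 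Likewise, for the boundary term in (\ref{eq:ExpressionS}) you obtain uniform control of $E_\theta(T)^{-1}$ by combining the pointwise nonsingularity from condition 2a (Proposition \ref{prop:invertibility_general_criteria}) with continuity of $\theta\mapsto E_\theta(T)$ and compactness of $\Theta$, so that $\det E_\theta(T)$ is bounded away from zero; the paper's appendix instead works with the ODE satisfied by $E_\theta^{-1}$ and Gr\"onwall to get a quantitative bound $\Vert E_\theta^{-1}(T)\Vert\leq K_{15}/\lambda$, which it reuses later for the state-estimation results, whereas your soft argument suffices for the present proposition but is not quantitative in $\lambda$. One further simplification you could have exploited: since $S(Y;\theta,\lambda)=\inf_{x_0}\inf_u C(Y;x_0,u,\theta,\lambda)$ is an infimum of nonnegative costs, evaluating the cost at the fixed candidate $(x_0,u)=(0,0)$ already gives $0\leq S(Y;\theta,\lambda)\leq\int_0^T\Vert Y(t)-CX_{\theta,0,0}(t)\Vert_2^2\,dt$, which is uniformly bounded by $\Vert Y\Vert_{L^\infty}$ and $\overline{X}$; this proves $\sup_{\theta}S(Y;\theta,\lambda)<+\infty$ without touching the boundary term at all, though the individual bounds $\overline{E}$ and $\bar{h}_Y$ are of course still needed as standalone claims of the proposition.
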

We have shown that for all $Y$ in $L^{\infty}\left(\left[0\, T\right],\mathbb{R}^{d'}\right)$, the maps
$\theta \mapsto S(Y;\theta,\lambda)$ is well defined and so are
$\theta \mapsto S(\widehat{Y};\theta,\lambda)$ and $\theta\mapsto S(Y^{*};\theta,\lambda)$
as long as the non-parametric estimator $\widehat{Y}$ are well-defined on $\left[0,\, T\right]$.
\begin{prop}
\label{prop:c1_caracterisation_S} Under conditions 1, 2, 3 
\[
\forall Y\in L^{\infty}(\left[0,\, T\right],\mathbb{R}^{d'}),\,\theta\longmapsto S(Y;\theta,\lambda)
\]
 is continuous on $\Theta$. Under conditions 1, 2a, 3, 4 it is $C^{1}$
on $\Theta$.
\end{prop}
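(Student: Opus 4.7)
The approach is to exploit the closed-form expression for $S$ given in equation (\ref{eq:ExpressionS}) and reduce the statement to the continuity, respectively $C^1$-regularity, in $\theta$ of the three building blocks: the matrix-valued map $\theta \mapsto E_\theta(\cdot)$, the vector-valued map $\theta \mapsto h_\theta(\cdot, Y)$, and the pointwise inverse $\theta \mapsto E_\theta(T)^{-1}$. Once the regularity of these blocks is established in an appropriate functional topology, the statement about $S$ follows by algebraic combination and by interchanging $\partial_\theta$ with $\int_0^T$, a step that will be justified via dominated convergence using the uniform bounds gathered in Proposition \ref{prop:existence_A_E_h}.

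For the continuity part, I first show that $\theta \mapsto E_\theta$ is continuous from $\Theta$ into $C([0,T])$ by invoking the classical continuous-dependence theorem for ODEs applied to the matrix Riccati equation (\ref{eq:Riccati_ode}); the required hypotheses are met thanks to C1, C3 and the fact that Proposition \ref{prop:existence_A_E_h} already guarantees that $E_\theta$ stays bounded on $[0,T]$ uniformly in $\theta$, so no finite-time blow-up can occur along convergent sequences in $\Theta$. The same theorem applied to the linear IVP for $h_\theta(\cdot, Y)$, whose coefficients $\alpha_\theta$ and $\beta_\theta(\cdot,Y)$ depend continuously on $\theta$ as soon as $E_\theta$ does (and $Y$ is a fixed $L^\infty$ function), yields continuity of $\theta \mapsto h_\theta(\cdot,Y)$ in $C([0,T])$. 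Assumption C2a ensures that $E_\theta(T)$ is nonsingular for every $\theta$; combined with compactness of $\Theta$ and the continuity just obtained, this implies that $\theta \mapsto E_\theta(T)^{-1}$ is continuous with a uniformly bounded operator norm. Plugging these three facts into (\ref{eq:ExpressionS}) and applying dominated convergence to the integral term (the dominating functions being provided by Proposition \ref{prop:existence_A_E_h}) delivers continuity of $S(Y;\cdot,\lambda)$.

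For the $C^1$ part, I differentiate formally the IVP (\ref{eq:Riccati_ode}) with respect to each component $\theta_j$: the sensitivities $G_j:=\partial E_\theta/\partial \theta_j$ and $g_j:=\partial h_\theta(\cdot,Y)/\partial \theta_j$ satisfy themselves linear ODEs with coefficients that are continuous in $(t,\theta)$ under C3--C4 and with forcing terms built from $\partial A_\theta/\partial \theta_j$ and $\partial r_\theta/\partial \theta_j$. The classical smooth-dependence-on-parameters theorem for ODEs then gives existence, uniqueness and joint continuity in $(t,\theta)$ of $G_j$ and $g_j$, and establishes that $E_\theta$ and $h_\theta(\cdot,Y)$ are in fact $C^1$ in $\theta$ as elements of $C([0,T])$. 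Using the matrix identity
\[
\partial_{\theta_j}\bigl(E_\theta(T)^{-1}\bigr)=-E_\theta(T)^{-1}\,G_j(T)\,E_\theta(T)^{-1},
\]
I differentiate the closed form (\ref{eq:ExpressionS}) under the integral sign to obtain an explicit formula for $\partial S(Y;\theta,\lambda)/\partial \theta_j$ in terms of $r_\theta$, $h_\theta$, $E_\theta(T)$, and their $\theta$-derivatives, and verify that this formula is continuous in $\theta$. This proves $S(Y;\cdot,\lambda)\in C^1(\Theta)$.

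The main obstacle is not algebraic but consists in pasting together the classical ODE regularity results with Lebesgue's theorems: one must exhibit uniform (in $\theta$) dominating functions for the expressions produced after commuting $\partial_\theta$ and $\int_0^T$. This is exactly where C1 (compactness of $\Theta$) and C4 (continuity of $\partial A_\theta/\partial \theta$ and $\partial r_\theta/\partial \theta$ on the compact set $[0,T]\times \Theta$) intervene: they yield uniform bounds on the forcing terms of the sensitivity ODEs, which propagate by Gronwall's lemma to uniform bounds on $G_j$ and $g_j$ over $[0,T]\times\Theta$, and hence to the dominating functions required to legitimate the differentiation under the integral.
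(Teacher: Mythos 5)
Your proposal is correct and follows essentially the same route as the paper: work from the closed-form expression for $S$, establish regularity in $\theta$ of the adjoint quantities $E_\theta$, $h_\theta(\cdot,Y)$ and $E_\theta(T)^{-1}$ (the paper does this through explicit Gronwall-type controls of their variations, while you invoke the classical continuous/smooth dependence theorems, which amounts to the same mechanism), and obtain the $C^1$ property by differentiating the adjoint ODEs in $\theta$ via sensitivity equations and the identity $\partial_{\theta_j}(E_\theta(T)^{-1})=-E_\theta(T)^{-1}\,\partial_{\theta_j}E_\theta(T)\,E_\theta(T)^{-1}$, exactly as in the paper's gradient computation. The only minor point to watch is that for $Y\in L^\infty$ the forcing term of the $h_\theta$ equation is merely bounded measurable in $t$, so the dependence argument should be run through the Duhamel representation of $h_\theta$ (as the paper does) rather than the textbook continuity-in-$t$ version of the dependence theorem.
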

In proposition \ref{prop:existence_A_E_h} we have shown that our
criteria $\theta\longmapsto S(Y;\theta,\lambda)$ is well defined
( i.e $0\leq S(Y;\theta,\lambda)<+\infty$) and here we have demonstrated
(using regularity assumptions on the model) that our finite and asymptotic
criteria are continuous or even $C^{1}$ on $\Theta$. Theses regularity
properties justify the use of classical optimization method to retrieve
the minimum of $S(\widehat{Y};.,\lambda)$.

\subsection{\label{sec:Consistency}Consistency}

We show the consistency of the parameter estimator $\hat{\theta}^K $ when the model is well-specified. As already mentioned, we have defined an $M$-estimator, and we can proove the consistency (see \cite{Vaart1998}), by showing 
\begin{enumerate}
\item the uniform convergence of $S(\widehat{Y};\theta,\lambda)$ to  $S(Y^{*};\theta,\lambda)$
on $\Theta$, 
\item $\theta^{*}$ is the unique global minimum of the asymptotic criterion $S(Y^{*};\theta,\lambda)$ on $\Theta$.
\end{enumerate}
The second point is assessed in proposition \ref{prop:asym_identifiability}, and it is related to the structure identifiability of the model provided by condition 2b.
\begin{prop}
\label{prop:asym_identifiability}Under conditions 1, 2a, 2b,  we have: 
\[
S(Y^{*};\theta,\lambda)=0\Longleftrightarrow\theta=\theta^{*}
\]
\end{prop}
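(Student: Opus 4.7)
The plan is to establish the two implications separately, exploiting the fact that the asymptotic criterion $S(Y^*;\theta,\lambda)$ is nonnegative and, thanks to the Theorem and Definition of $S$ together with Proposition \ref{prop:invertibility_general_criteria}, that its infimum is actually attained by a unique pair $(\widehat{x_0}(\theta),\bar{u}_{\theta,\lambda})$.

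For the easy direction ($\Leftarrow$), I would plug the candidate $(x_0,u)=(x_0^*,0)$ into the cost $C(Y^*;x_0,u,\theta^*,\lambda)$ with $Q=0$. Then $X_{\theta^*,x_0^*,0}=X_{\theta^*,x_0^*}=X^*$, so that $CX_{\theta^*,x_0^*,0}=CX^*=Y^*$ pointwise and the cost equals zero. Since $S\geq 0$ by construction (both integrands are squared norms) and $S(Y^*;\theta^*,\lambda)\leq C(Y^*;x_0^*,0,\theta^*,\lambda)=0$, we conclude $S(Y^*;\theta^*,\lambda)=0$.

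For the harder direction ($\Rightarrow$), assume $S(Y^*;\theta,\lambda)=0$. By the Theorem and Definition of $S$ combined with condition 2a and Proposition \ref{prop:invertibility_general_criteria}, the infimum defining $S(Y^*;\theta,\lambda)$ is attained at a unique pair $(\widehat{x_0},\bar{u}_{\theta,\lambda})$. Because $\lambda>0$ and each of the two nonnegative integrals composing $C$ must vanish, we first obtain $\bar{u}_{\theta,\lambda}=0$ a.e.\ on $[0,T]$, and hence $X_{\theta,\widehat{x_0},\bar{u}_{\theta,\lambda}}=X_{\theta,\widehat{x_0}}$ is the unperturbed solution of (\ref{eq:LinearODEmodel}). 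Second, we obtain
\[
\int_0^T \|Y^*(t)-CX_{\theta,\widehat{x_0}}(t)\|_2^2\,dt=0,
\]
and since both $Y^*=CX_{\theta^*,x_0^*}$ and $CX_{\theta,\widehat{x_0}}$ are continuous on $[0,T]$ as images of ODE solutions under a linear map, the equality holds pointwise: $CX_{\theta,\widehat{x_0}}=CX_{\theta^*,x_0^*}$ on $[0,T]$. Applying the structural identifiability assumption C2b to this identity yields $(\theta,\widehat{x_0})=(\theta^*,x_0^*)$, and in particular $\theta=\theta^*$.

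The only real subtlety is ensuring that the infimum in (\ref{eq:ProfiledCost}) is actually attained when $Q=0$, which is the crux that is not guaranteed by the Theorem and Definition alone (stated under $Q>0$): this is exactly what Proposition \ref{prop:invertibility_general_criteria} provides through the nonsingularity of $E_\theta(T)$ under condition 2a, so that the closed-form expressions (\ref{eq:final_state_expression}) and (\ref{eq:OptimalControl}) still deliver a bona fide minimizer. Once attainment is in hand, the rest is a direct argument that splits the cost into its two components and invokes the identifiability hypothesis.
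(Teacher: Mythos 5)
Your proof is correct and follows essentially the same route the paper relies on: the cost vanishes at $(\theta^*,x_0^*,u=0)$ giving one direction, and conversely, since condition C2a (via Proposition \ref{prop:invertibility_general_criteria} and the nonsingularity of $E_\theta(T)$) guarantees the infimum in (\ref{eq:ProfiledCost}) is attained even with $Q=0$, a zero value forces $\bar{u}_{\theta,\lambda}=0$ and $CX_{\theta,\widehat{x_0}}=CX_{\theta^*,x_0^*}$, so structural identifiability C2b yields $\theta=\theta^*$. Your explicit attention to the attainment issue when $Q=0$ is exactly the point the paper addresses through Proposition \ref{prop:invertibility_general_criteria}, so no gap remains.
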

Point 1 is proved by studying the regularity of the map $(\zeta,\theta) \mapsto S(\zeta; \theta,\lambda)$ and by obtaining appropriate controls of the variations by $\hat{Y}-Y^*$, see \emph{Supplementary Materials}. Theorem (\ref{thm:consistency})  can be claimed with some generality on the nonparametric proxy $\hat{Y}$. 
\begin{thm}
\label{thm:consistency} Under conditions 1, 2a, 2b, 3 and if 
$\widehat{Y}$ is consistent in probability in $L2$, then $\widehat{\theta}^{K}\overset{P}{\rightarrow}\theta^{*}$. \end{thm}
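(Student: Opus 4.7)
The plan is to invoke the standard consistency theorem for $M$-estimators (e.g. van der Vaart, Theorem 5.7) for the contrast $\theta \mapsto S(\widehat{Y};\theta,\lambda)$ with limit $\theta \mapsto S(Y^{*};\theta,\lambda)$. Compactness of $\Theta$ is given by Condition 1 and continuity of the limit criterion is Proposition \ref{prop:c1_caracterisation_S}. It remains to verify two ingredients: (i) a well-separated minimum of the limit criterion at $\theta^{*}$, and (ii) uniform (in $\theta$) convergence in probability of $S(\widehat{Y};\cdot,\lambda)$ to $S(Y^{*};\cdot,\lambda)$.

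For (i), Proposition \ref{prop:asym_identifiability} states that $\theta^{*}$ is the \emph{unique} zero of the nonnegative function $S(Y^{*};\cdot,\lambda)$. Combined with continuity of $S(Y^{*};\cdot,\lambda)$ on the compact $\Theta$, a standard argument yields that for every $\varepsilon>0$,
\[
\inf_{\|\theta-\theta^{*}\|\ge\varepsilon} S(Y^{*};\theta,\lambda) \;>\; S(Y^{*};\theta^{*},\lambda) = 0,
\]
which is the well-separated minimum condition required by the M-estimator theorem.

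For (ii), the key observation is that the closed-form expression (\ref{eq:ExpressionS}) is a \emph{quadratic functional} of $\zeta$, since $\zeta \mapsto h_{\theta}(\cdot,\zeta)$ is affine (the ODE for $h_\theta$ depends linearly on $\zeta$ through $\beta_\theta(t,\zeta)=C^{\top}\zeta+E_\theta r_\theta$). Writing $h_\theta(t,\zeta)=h_\theta^{0}(t)+L_\theta\zeta(t)$ where $L_\theta$ is a bounded linear operator from $L^{2}$ into $L^{2}$ and $L^{\infty}$ (at $t=T$), a direct expansion gives
\[
S(\widehat{Y};\theta,\lambda) - S(Y^{*};\theta,\lambda) \;=\; R_1(\theta) + R_2(\theta) + R_3(\theta) + R_4(\theta),
\]
where each $R_j$ is either linear or quadratic in $\widehat{Y}-Y^{*}$, with coefficients built from $r_\theta$, $E_\theta$, $h_\theta(\cdot, Y^{*})$, and $E_\theta(T)^{-1}$. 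Using Proposition \ref{prop:existence_A_E_h} to bound $\sup_\theta \|r_\theta\|_{L^{2}}$, $\sup_\theta \|E_\theta\|_{L^{2}}$ and $\sup_\theta \|h_\theta(\cdot,Y^{*})\|_{L^{2}}$, together with continuity of $\theta\mapsto E_\theta(T)$ on the compact $\Theta$ and nonsingularity from Condition 2a (so that $\sup_\theta \|E_\theta(T)^{-1}\|<\infty$), one obtains a deterministic constant $K$ such that
\[
\sup_{\theta\in\Theta}\bigl|S(\widehat{Y};\theta,\lambda)-S(Y^{*};\theta,\lambda)\bigr| \;\le\; K\bigl(\|\widehat{Y}-Y^{*}\|_{L^{2}} + \|\widehat{Y}-Y^{*}\|_{L^{2}}^{2}\bigr).
\]
Since $\|\widehat{Y}-Y^{*}\|_{L^{2}}\xrightarrow{P}0$ by hypothesis, the right-hand side is $o_P(1)$, which yields the required uniform convergence in probability.

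The main obstacle is the last term $-h_\theta(T,\zeta)^{\top}E_\theta(T)^{-1}h_\theta(T,\zeta)$, because it involves the pointwise value of $h_\theta$ at $t=T$ and the inverse of $E_\theta(T)$. To handle it, I would show that $\zeta \mapsto h_\theta(T,\zeta)$ is a bounded linear functional on $L^{2}(0,T)$ uniformly in $\theta$, exploiting the Gronwall-type estimate for the affine ODE defining $h_\theta$ and the uniform $L^{2}$-bound on $\alpha_\theta$ inherited from the bounds on $A_\theta$ and $E_\theta$; continuity of the nonsingular matrix $E_\theta(T)$ on the compact $\Theta$ then makes $\|E_\theta(T)^{-1}\|$ uniformly bounded. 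With these bounds in place, the conclusion $\widehat{\theta}^{K}\xrightarrow{P}\theta^{*}$ follows from the M-estimator master theorem applied to $\theta \mapsto S(\widehat{Y};\theta,\lambda)$.
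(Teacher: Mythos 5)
Your proposal is correct and takes essentially the same route as the paper: consistency of the $M$-estimator via (i) Proposition \ref{prop:asym_identifiability} together with compactness and continuity to get a well-separated minimum at $\theta^{*}$, and (ii) uniform convergence of $S(\widehat{Y};\cdot,\lambda)$ to $S(Y^{*};\cdot,\lambda)$ obtained by exploiting the affine dependence of $h_{\theta}(\cdot,\zeta)$ on $\zeta$ and Gronwall-type uniform bounds on the adjoint variables, which is precisely the ``control of the variations by $\widehat{Y}-Y^{*}$'' the paper relegates to the Supplementary Materials and supports with Lemmas \ref{lem:h_discrepancy_lambda} and \ref{lem:E_inverse_discrepancy_wrt_lambda}. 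No substantive gap to report.
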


\section{\label{sec:Asymptotic_Normality}Asymptotics of $\widehat{\theta}^K$}

The aim of this section is to derive the rate of convergence and asymptotic law of  $\widehat{\theta}^K$. For this reason, we need more precise assumptions on $\hat{Y}$. The way we proceed is based on the plug-in properties of nonparametric estimates, when the functionals of interest are relatively smooth. In the case of series expansion, these properties are well understood \cite{Newey1997,BickelRitov2003}. We focus here on regression splines, as they are well-used in practice and relatively simple to study, although more refined nonparametric estimators can be used in the same context, such as Penalized Splines. We assume that $\hat{Y}$ has a B-Spline expansion 
\[ \widehat{Y}(t)=\sum_{k=1}^{K}\beta_{kK}p_{kK}(t)=\beta_{K}^{T}p_{K}(t)\]
where $\beta_{K}$ is computed by linear least-squares, and the dimension $K$ increases with $n$.
We introduce then additional regularity conditions on the ODE model, and on the distribution of observations:
\begin{description}
\item[C5:] $\forall\left(t,\theta\right)\in\left[0,\, T\right]\times\Theta,\,\left(t,\theta\right)\longmapsto\frac{\partial^{2}A_{\theta}}{\partial\theta^{T}\partial\theta}$
and $\left(t,\theta\right)\longmapsto\frac{\partial r_{\theta}^{2}}{\partial\theta^{T}\partial\theta}$
are continuous, 
\item[C6:] $\frac{\partial^{2}S(Y^{*};\theta^{*},\lambda)}{\partial\theta^{T}\partial Y}$
is nonsingular,
\item[C7:] The observations $\left(t_{i},Y_{i}\right)$ are i.i.d with $Var(Y_{i}\mid t_{i})=\sigma I_{d'}$
with $\sigma<+\infty$,
\item[C8:] The observation times $t_{i}$ are uniformly distributed on $\left[0\,,\, T\right]$,
\item[C9:]  It exists $s\geq1$ such that $t\longmapsto A_{\theta^{*}}(t)$, $t\longmapsto r_{\theta^{*}}(t)$  are $C^{s-1}\left(\left[0\,,\, T\right])  \mathbb{R}^{d}\right)$ and $\sqrt{n}K^{-s}\longrightarrow0, \frac{K^{4}}{n}\longrightarrow0$
\item[C10:] The meshsize $\max_{i}\left|\tau_{i+1,K+1}-\tau_{i,K}\right|\longrightarrow0$ when$ K\longrightarrow\infty$
\end{description}
The proofs of the rate and asymptotic normality are somewhat technical, and they are relegated in the \emph{Supplementary Materials}. We obtain a  parametric convergence rate, and  the asymptotic
normality, by using two facts: 
\begin{enumerate}
\item $\widehat{\theta}^{K}-\theta^{*}$
behaves like the  difference $\Gamma(\widehat{Y})-\Gamma(Y^{*})$, where $\Gamma$ is a linear functional, 
\item if $\Gamma$ is smooth enough, $\Gamma( \widehat{Y}-Y^{*} )$ is asymptotically
normal in the case of regression splines.
\end{enumerate}
Conditions C5 and C6 ensures the sufficiency of second order optimality
conditions for the criteria $S$. Conditions C7 to C10 are sufficient for
the consistency of $\widehat{Y}$, as well as for 
the consistency and the asymptotic normality of the plug-in estimators of linear functionals. 

\begin{thm}
\label{thm:asymptotic} If conditions C1-C10 are satisfied, then $\widehat{\theta}^{K}-\theta^{*}= O_{P}(n^{-1/2})$ and  $\widehat{\theta}$
is asymptotically normal.
\end{thm}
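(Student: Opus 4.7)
The plan is to carry out the standard M--estimator linearization, combined with the quadratic structure of $S(\zeta;\theta,\lambda)$ in its first argument and with known plug--in results for linear functionals of regression splines. By Theorem \ref{thm:consistency} we have $\widehat{\theta}^{K}\overset{P}{\to}\theta^{*}$, and by C1, $\theta^{*}\in\mathring{\Theta}$, so on an event of probability tending to one, the first--order condition $\nabla_{\theta}S(\widehat{Y};\widehat{\theta}^{K},\lambda)=0$ holds. First I would Taylor--expand this score in $\theta$ around $\theta^{*}$: using C5 and Proposition \ref{prop:c1_caracterisation_S} (extended one order thanks to C5), $S$ is $C^{2}$ in $\theta$ with Hessian continuous in $(\theta,\zeta)$, so
\[
0=\nabla_{\theta}S(\widehat{Y};\theta^{*},\lambda)+H_{n}(\widehat{\theta}^{K}-\theta^{*}),
\]
with $H_{n}\overset{P}{\to}H:=\nabla_{\theta}^{2}S(Y^{*};\theta^{*},\lambda)$. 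Invertibility of $H$ (the natural reading of C6, giving a well--posed implicit equation near $(\theta^{*},Y^{*})$) yields
\[
\widehat{\theta}^{K}-\theta^{*}=-H^{-1}\bigl[\nabla_{\theta}S(\widehat{Y};\theta^{*},\lambda)-\nabla_{\theta}S(Y^{*};\theta^{*},\lambda)\bigr]+o_{P}(1)\cdot(\widehat{\theta}^{K}-\theta^{*}),
\]
since $\nabla_{\theta}S(Y^{*};\theta^{*},\lambda)=0$ by Proposition \ref{prop:asym_identifiability} and C1.

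Next I would exploit the fact that, by (\ref{eq:ExpressionS}), $\zeta\mapsto S(\zeta;\theta,\lambda)$ is an inhomogeneous quadratic functional: indeed the adjoint $h_{\theta}(\cdot,\zeta)$ solves a linear ODE whose forcing $C^{\top}\zeta+E_{\theta}r_{\theta}$ is affine in $\zeta$, hence $h_{\theta}$ decomposes as $h_{\theta}^{\mathrm{lin}}(\cdot,\zeta)+h_{\theta}^{0}(\cdot)$ with $h_{\theta}^{\mathrm{lin}}$ linear and continuous in $\zeta$ (bounded uniformly in $\theta\in\Theta$ by Proposition \ref{prop:existence_A_E_h}). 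Therefore $\nabla_{\theta}S(\cdot;\theta^{*},\lambda)$ is itself quadratic in $\zeta$, and one obtains
\[
\nabla_{\theta}S(\widehat{Y};\theta^{*},\lambda)-\nabla_{\theta}S(Y^{*};\theta^{*},\lambda)=\Gamma(\widehat{Y}-Y^{*})+R(\widehat{Y}-Y^{*}),
\]
where $\Gamma:L^{2}\to\mathbb{R}^{p}$ is the Fréchet derivative in $\zeta$ at $(Y^{*},\theta^{*})$, and $R$ is a bilinear remainder with $\|R(\widehat{Y}-Y^{*})\|=O(\|\widehat{Y}-Y^{*}\|_{L^{2}}^{2})$. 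Differentiating (\ref{eq:ExpressionS}) in $\zeta$ along the adjoint flow gives an explicit integral representation $\Gamma(\zeta)=\int_{0}^{T}\Psi(t)^{\top}\zeta(t)\,dt$ for some bounded, continuous $\mathbb{R}^{p\times d'}$--valued kernel $\Psi$ built from $\Phi_{\theta^{*}}$, $E_{\theta^{*}}$ and $h_{\theta^{*}}^{0}$; this smoothness of $\Psi$ is what will later guarantee parametric rates.

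With this representation, the asymptotics reduce to controlling $\Gamma(\widehat{Y}-Y^{*})$ for a regression--spline estimator. Under C7--C10 it is standard (cf.\ the plug--in theory of linear functionals in Newey 1997, and Huang's results for regression splines) to split $\widehat{Y}-Y^{*}=(\widehat{Y}-\mathbb{E}[\widehat{Y}])+(\mathbb{E}[\widehat{Y}]-Y^{*})$: the bias term satisfies $\|\mathbb{E}[\widehat{Y}]-Y^{*}\|_{\infty}=O(K^{-s})$ under C9--C10, so $|\Gamma(\mathbb{E}[\widehat{Y}]-Y^{*})|\leq\|\Psi\|_{L^{2}}\,O(K^{-s})=o(n^{-1/2})$; the variance term is $\sqrt{n}\,\Gamma(\widehat{Y}-\mathbb{E}[\widehat{Y}])\Rightarrow\mathcal{N}(0,V)$ with $V=\sigma\int_{0}^{T}\Psi(t)^{\top}\Psi(t)\,dt$ (computed as a bilinear form of the noise variables, exploiting boundedness of the hat--matrix of least--squares splines under C8 and C10). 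Finally the quadratic remainder is controlled via $\|\widehat{Y}-Y^{*}\|_{L^{2}}^{2}=O_{P}(K/n+K^{-2s})=o_{P}(n^{-1/2})$, where $K/n=o(n^{-1/2})$ follows from $K^{4}/n\to0$ in C9. Combining, $\sqrt{n}(\widehat{\theta}^{K}-\theta^{*})\Rightarrow\mathcal{N}(0,H^{-1}VH^{-1})$.

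The main obstacle, and where most of the work in the supplement will be hidden, is the rigorous derivation of $\Gamma$ and its smooth kernel representation. Technically one must differentiate the backward ODE defining $h_{\theta}(\cdot,\zeta)$ with respect to $\zeta$, show that $\partial_{\zeta}h_{\theta}$ is a bounded linear operator $L^{2}\to H^{1}([0,T])$ uniformly in $\theta\in\Theta$, and then propagate this through (\ref{eq:ExpressionS}) and the second differentiation in $\theta$ — all while tracking that $E_{\theta}(T)^{-1}$ stays bounded under C2a. Once $\Psi$ is known to be bounded and continuous, the plug--in machinery for regression splines (bias/variance decomposition, Lindeberg CLT for weighted sums of the $\epsilon_{i}$, control of $L^{2}$ norm of the spline estimator) delivers both the rate and the asymptotic normality essentially automatically, and the conditions in C9--C10 are exactly those needed for these two pieces to be $o_{P}(n^{-1/2})$ and $O_{P}(n^{-1/2})$ respectively.
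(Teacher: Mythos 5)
Your proposal follows essentially the same route as the paper: an M-estimator linearization that reduces $\widehat{\theta}^{K}-\theta^{*}$ to a smooth linear functional $\Gamma(\widehat{Y}-Y^{*})$ (exploiting the quadratic dependence of $S$ on $\zeta$ through the affine adjoint $h_{\theta}(\cdot,\zeta)$), followed by the Newey-type plug-in bias/variance/CLT analysis for regression splines under C7--C10 to get the $\sqrt{n}$-rate and normality. This is exactly the two-step strategy the paper announces (and relegates in detail to its Supplementary Materials), so your argument is in line with the authors' proof.
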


\section{State Estimation \label{sec:StateEstimation}}

Once the unknown model has been estimated with $\hat{\theta}^{K}$,
we focus on the problem of state estimation. From the definition (\ref{thm:kalman_existence_unicity}),
the criterion $S$ is built with an estimation of the
state based on the solution of the pertubed ODE $X_{\hat{\theta},\widehat{x_{0}},\bar{u}}$.
The estimate of the initial condition $\widehat{x_{0}}$ is
derived from a Final Value Problem with the final state $X_{\hat{\theta},\widehat{x_{0}},\bar{u}}(T)=-\left[E_{\hat{\theta}}(T)\right]^{-1}h_{\hat{\theta}}(T,\hat{Y})$.\\
The state estimate $t\mapsto X_{\hat{\theta},\widehat{x_{0}},\bar{u}}(t)$ that we have used, 
is different from the state estimation classically done when using the Deterministic
Kalman Filter. The classical DKF state estimate is $\hat{X}^{DKF}(t)=-\left[E_{\hat{\theta}}(t)\right]^{-1}h_{\hat{\theta}}(t,\hat{Y}_{\vert[0,t]})$,
and it corresponds to the best estimate of $X^{*}(t)$ computed from
the available information at time $t$, $\hat{Y}_{\vert[0,t]}=\left\{ \hat{Y}(s),s\in\left[0,t\right]\right\} $.
Whereas the estimate can be very bad at the beginning for small $t$,
the quality of $\hat{X}^{DKF}(t)$ improves as we get more data. A
remarkable feature of the filter $t\mapsto\hat{X}^{DKF}(t)$ is that
it can be computed recursively with an Ordinary Differential Equation
$\dot{X}(t)=A_{\theta}X(t)+r_{\theta}(t)+L_{\theta,\lambda}(t)\left(CX(t)-\hat{Y}(t)\right)$.
The matrix $L_{\theta,\lambda}$ is the continuous counterpart of
the classical Kalman Gain Matrix, derived from the Filtering Riccati
Differential Equation, see page 313 in \cite{Sontag1998}. In that
recursive form, the Deterministic Kalman Filter is somehow similar
to the Kalman-Bucy Filter, which is the continuous version of
the usual Kalman Filter. Nevertheless, there is a
huge difference in the assumptions because
the Kalman-Bucy Filter assumes that $X(t)$ is a Stochastic Differential
Equation, driven by a Brownian Motion $W(t)$. This means that  the
deterministic perturbation $u(t)$ is replaced by a random
pertubation $\sigma dW(t)$. The state estimate is then different
from the one we consider as it can be shown that the filter is the
solution of a stochastic differential equation driven by the stochastic
process $\left(CX(t)-\hat{Y}(t)\right)$, see for instance \cite{Bensoussan2004}.
The state estimate $t\mapsto X_{\hat{\theta},\widehat{x_{0}},\bar{u}}(t)$
is solution of the pertubed ODE, with the control $\bar{u}$ computed
from all the data $\left\{ \hat{Y}(s),s\in\left[0,T\right]\right\} $:
hence, our state estimation is based on Kalman Smoothing and not on
Filtering, as we have a backward integration step. In the rest of
that section, we show that the estimator $X_{\hat{\theta},\widehat{x_{0}},\bar{u}}(t)$
is also a consistent estimator of the state $X^{*}(t)$. In order
to do that, we show first that $\widehat{X}(T)=-E_{\widehat{\theta}}(T)^{-1}h_{\widehat{\theta}}(T,\widehat{Y})$
is a consistent estimator of the final state.

\subsection{Final state estimation }

In a way, the consistency of the final state estimator is a rather
obvious conclusion. The Deterministic Kalman Filter is initially designed
for getting the best possible estimate of the final state, starting
from any initial condition $x_{0}$. It is then normal that we have
a good estimator of $X^{*}(T)$ when $\hat{Y}$ is close to $Y^{*}$
and $\hat{\theta}^{K}$ is close to $\theta^{*}$. 
\begin{prop}
\label{prop:final_state_estimator_consistency} We assume that conditions
C1-C4 are satisfied and that $\widehat{Y}$ is a consistent estimator of $Y^{*}$.
Then, the final state estimator $\widehat{X}(T)=-E_{\widehat{\theta}}(T)^{-1}h_{\widehat{\theta}}(T,\widehat{Y})$
converges in probability to $X^{*}(T)$.\textup{ }\end{prop}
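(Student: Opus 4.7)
The plan is to view $\widehat{X}(T)$ as the image of the random pair $(\widehat{\theta}^{K}, \widehat{Y})$ under a deterministic continuous functional $F$ that, when evaluated at the ideal pair $(\theta^{*}, Y^{*})$, returns exactly $X^{*}(T)$. Consistency then follows by a continuous mapping argument, using Theorem \ref{thm:consistency} for $\widehat{\theta}^{K} \overset{P}{\to} \theta^{*}$ and the hypothesis that $\widehat{Y}$ is a consistent estimator of $Y^{*}$ (in $L^{2}$).

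The first step is to establish the identity $X^{*}(T) = -E_{\theta^{*}}(T)^{-1} h_{\theta^{*}}(T, Y^{*})$. Since $Y^{*} = C X^{*}$ and $Q = 0$ under C2a, plugging $(u \equiv 0,\, x_{0} = x_{0}^{*})$ into the cost $C(Y^{*};\cdot,\cdot,\theta^{*},\lambda)$ yields the value zero, so this pair is a global minimizer. Any other minimizer must have $u \equiv 0$ (since $\lambda > 0$) and must produce the same observed output as $(\theta^{*}, x_{0}^{*})$; by the observability statement of Proposition \ref{prop:invertibility_general_criteria} applied at $\theta^{*}$, this forces $x_{0} = x_{0}^{*}$. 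Invoking equation (\ref{eq:final_state_expression}) at the unique minimizer gives the identity.

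The second step is to show that the functional
\[
F : (\theta, \zeta) \in \Theta \times L^{2}([0,T], \mathbb{R}^{d'}) \longmapsto -E_{\theta}(T)^{-1} h_{\theta}(T, \zeta)
\]
is continuous. By C3 the coefficients of the Riccati system (\ref{eq:Riccati_ode}) depend continuously on $\theta$, so standard continuous-dependence theory for ODEs gives continuity of $\theta \mapsto E_{\theta}(T)$; combined with the nonsingularity guaranteed by C2a and the compactness of $\Theta$ (C1), the map $\theta \mapsto E_{\theta}(T)^{-1}$ is continuous and uniformly bounded. For $h$, letting $\Psi_{\theta}$ denote the resolvent of $\dot{h} = -\alpha_{\theta} h$, one has the explicit representation
\[
h_{\theta}(T, \zeta) = -\int_{0}^{T} \Psi_{\theta}(T, s)\bigl(C^{\top}\zeta(s) + E_{\theta}(s) r_{\theta}(s)\bigr) ds,
\]
which is linear in $\zeta$, continuous on $L^{2}$ by Cauchy--Schwarz, and jointly continuous in $\theta$ with $\Theta$-uniform operator-norm bounds supplied by Proposition \ref{prop:existence_A_E_h}.

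Combining the two ingredients and invoking the continuous mapping theorem with $(\widehat{\theta}^{K}, \widehat{Y}) \overset{P}{\to} (\theta^{*}, Y^{*})$, we conclude $\widehat{X}(T) = F(\widehat{\theta}^{K}, \widehat{Y}) \overset{P}{\to} F(\theta^{*}, Y^{*}) = X^{*}(T)$. The delicate point in this plan is not the decomposition itself but the verification that the continuity of $F$ with respect to the $L^{2}$-topology on $\zeta$ is compatible with joint continuity in $\theta$ uniformly on $\Theta$; this is precisely what the uniform bounds of Proposition \ref{prop:existence_A_E_h} provide, which is why the argument reduces cleanly to a continuous mapping statement.
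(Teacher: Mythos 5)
Your proposal is correct and follows essentially the same route as the paper: first the identity $X^{*}(T)=-E_{\theta^{*}}(T)^{-1}h_{\theta^{*}}(T,Y^{*})$, obtained from the fact that the zero-cost minimizer at $(\theta^{*},Y^{*})$ forces $\bar{u}\equiv 0$ and a unique initial condition, and then convergence of $-E_{\theta}(T)^{-1}h_{\theta}(T,\zeta)$ as $(\theta,\zeta)\to(\theta^{*},Y^{*})$ combined with the consistency of $(\widehat{\theta}^{K},\widehat{Y})$. The only difference is cosmetic: the paper makes the continuity step quantitative through the explicit Lipschitz bounds of Lemmas \ref{lem:h_discrepancy_lambda} and \ref{lem:E_inverse_discrepancy_wrt_lambda}, whereas you argue it qualitatively via continuous dependence of the Riccati system and the linear representation of $h_{\theta}(T,\cdot)$, which is equivalent here.
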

\begin{proof}
We show first that the true final state value is reached for $Y=Y^{*}$
and $\theta=\theta^{*}$ i.e $X^{*}(T)=-E_{\theta^{*}}(T)^{-1}h_{\theta^{*}}(T,Y^{*})$.
We recall that 
\[
S\left(Y;\theta,\lambda\right)=\inf_{x_{0}\in\mathbb{R}^{d}}\left(\inf_{u\in L^{2}}C(Y;x_{0},u,\theta,\lambda)\right),
\]
and that $S\left(Y^{*};\theta^{*},\lambda\right)=0$. The identifiability
condition 2b implies that the reconstructed state is the exact one.
In our case, the minimum is reached when the optimal control $\overline{u}$
is equal to $0$, i.e. 
\[
\bar{u}_{\theta^{*},\lambda}(t)=\frac{1}{\lambda}\left(E_{\theta^{*}}(t)X^{*}(t)+h_{\theta^{*}}(t,Y^{*})\right)
\]
which implies that $X^{*}(T)=-E_{\theta^{*}}(T)^{-1}h_{\theta^{*}}(T,Y^{*})$
($E_{\theta^{*}}(T)$ is nonsingular). We can decompose the difference
$\widehat{X}(T)-X^{*}(T)$:
\[
\begin{array}{lll}
\widehat{X}(T)-X^{*}(T) & = & E_{\theta^{*}}(T)^{-1}h_{\theta^{*}}(T,Y^{*})-E_{\widehat{\theta}}(T)^{-1}h_{\widehat{\theta}}(T,\widehat{Y})\\
 & = & E_{\theta^{*}}(T)^{-1}\left(h_{\theta^{*}}(T,Y^{*})-h_{\widehat{\theta}}(T,\widehat{Y})\right)\\
 & + & \left(E_{\theta^{*}}(T)^{-1}-E_{\widehat{\theta}}(T)^{-1}\right)h_{\widehat{\theta}}(T,\widehat{Y})
\end{array}
\]
The convergence will come from the consistency of $h_{\widehat{\theta}}(T,\widehat{Y})$
and $E_{\widehat{\theta}}(T)^{-1}$: 

\[
\begin{array}{lll}
\left|\widehat{X}(T)-X^{*}(T)\right| & \leq & \sqrt{d}\left\Vert E_{\theta^{*}}(T)^{-1}\right\Vert _{2}\left\Vert h_{\theta^{*}}(T,Y^{*})-h_{\widehat{\theta}}(T,\widehat{Y})\right\Vert _{2}\\
 & + & \sqrt{d}\left\Vert h_{\widehat{\theta}}(T,\widehat{Y})\right\Vert _{2}\left\Vert E_{\theta^{*}}(T)^{-1}-E_{\widehat{\theta}}(T)^{-1}\right\Vert _{2}
\end{array}
\]
The two right-hand side terms can be controlled easily by the $\hat{Y}-Y^{*}$
and $\hat{\theta}-\theta^{*}$, as it is shown in Lemma \ref{lem:h_discrepancy_lambda}
and \ref{lem:E_inverse_discrepancy_wrt_lambda}. We end up with the
following inequalities:
\[
\begin{array}{lll}
\left\Vert h_{\theta}(t,Y)-h_{\theta^{'}}(t,Y')\right\Vert _{2} & \leq & K_{6}e^{L_{1}\frac{\overline{E_{\lambda}}}{\lambda}}\left\Vert Y-Y'\right\Vert _{L^{2}}\\
 & + & \left(K_{7}+K_{8}\overline{E_{\lambda}}\right)\left(K_{4}+\frac{K_{5}}{\lambda}\overline{E_{\lambda}}e^{L_{1}\frac{\overline{E_{\lambda}}}{\lambda}}\right)e^{2L_{1}\frac{\overline{E_{\lambda}}}{\lambda}}\left\Vert \theta-\theta^{'}\right\Vert \\
 & + & \left(K_{9}e^{2L_{1}\frac{\overline{E_{\lambda}}}{\lambda}}+K_{10}e^{L_{1}\frac{\overline{E_{\lambda}}}{\lambda}}\right)\overline{E_{\lambda}}\left\Vert \theta-\theta^{'}\right\Vert \\
\left\Vert E_{\theta}^{-1}(T)-E_{\theta'}^{-1}(T)\right\Vert _{2} & \leq & \left(\frac{K_{12}}{\lambda}+K_{11}\right)e^{K_{13}+\frac{K_{14}}{\lambda}}\left\Vert \theta-\theta'\right\Vert \\
\left\Vert E_{\theta}^{-1}(T)\right\Vert  & \leq & \frac{K_{15}}{\lambda}\\
\left\Vert h_{\theta}(T,Y)\right\Vert _{2} & \leq & \sqrt{T}d^{2}\left\Vert C\right\Vert _{2}e^{\sqrt{d}\left(\overline{A}+\frac{\overline{E_{\lambda}}}{\lambda}\right)T}\left\Vert Y\right\Vert _{L^{2}}+\sqrt{d}\overline{E_{\lambda}}\overline{r_{\theta}}
\end{array}
\]
Under our conditions, we have $\left(\hat{\theta},\widehat{Y}\right)\longrightarrow\left(\theta^{*},Y^{*}\right)$,
which implies that $\widehat{X}(T)$ converges also in probability.
\end{proof}
By plug-in principle, we can also derive the asymptotic normality
and the rate of $\hat{X}(T)$ as described in the next proposition. 
\begin{prop}
\label{prop:final_state_estimator_asymptotic_normality} Under conditions
C1-C10, the final state estimator $\widehat{X}(T)$ is asymptotically
normal and 
\[
\widehat{X}(T)-X^{*}(T)=O_{P}(n^{-1/2})
\]
\end{prop}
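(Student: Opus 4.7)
The plan is to apply the delta method to the map $F(\theta, Y) := -E_\theta(T)^{-1} h_\theta(T, Y)$, combining the decomposition already used in Proposition \ref{prop:final_state_estimator_consistency} with the parametric rate and asymptotic normality of $\widehat{\theta}^K$ furnished by Theorem \ref{thm:asymptotic}. Since $\widehat{X}(T) = F(\widehat{\theta}^K, \widehat{Y})$ and the proof of the previous proposition gives $X^{*}(T) = F(\theta^{*}, Y^{*})$, the task reduces to linearizing $F$ about $(\theta^{*}, Y^{*})$ and checking that both the $\theta$-part and the $Y$-part contribute $O_{P}(n^{-1/2})$ asymptotically-normal fluctuations.

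First I would establish the required smoothness of $F$. The dependence in $Y$ is linear: from the ODE (\ref{eq:Riccati_ode}), $Y$ enters the equation for $h_\theta$ only through the affine source $\beta_\theta(\cdot, Y) = C^{\top} Y + E_\theta r_\theta$, so $Y \mapsto h_\theta(T, Y)$ is a bounded linear operator on $L^{2}$ whose norm is uniformly controlled on $\Theta$ by Proposition \ref{prop:existence_A_E_h}. The dependence in $\theta$ of $E_\theta(T)$ and of $h_\theta(T, Y)$ is $C^{1}$ under C3--C5 via the standard differentiability of ODE flows with respect to parameters, and $E_\theta(T)^{-1}$ stays nonsingular by C2a together with Proposition \ref{prop:invertibility_general_criteria}. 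Hence $F$ is jointly $C^{1}$ in a neighborhood of $(\theta^{*}, Y^{*})$.

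A first-order Taylor expansion then yields
\begin{equation*}
\widehat{X}(T) - X^{*}(T) = D_\theta F(\theta^{*}, Y^{*})(\widehat{\theta}^{K} - \theta^{*}) + D_Y F(\theta^{*}, Y^{*})(\widehat{Y} - Y^{*}) + R_n .
\end{equation*}
From the proof of Theorem \ref{thm:asymptotic} we know that $\widehat{\theta}^{K} - \theta^{*} = \Gamma(\widehat{Y} - Y^{*}) + o_{P}(n^{-1/2})$ for a continuous linear functional $\Gamma$, so the whole expansion rewrites as $\Psi(\widehat{Y} - Y^{*}) + o_{P}(n^{-1/2})$ with $\Psi := D_\theta F(\theta^{*}, Y^{*}) \circ \Gamma + D_Y F(\theta^{*}, Y^{*})$ again a continuous linear functional. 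Invoking the plug-in result for bounded linear functionals of the B-spline estimator $\widehat{Y}$ under C7--C10, which is the same tool that concludes Theorem \ref{thm:asymptotic}, delivers $\Psi(\widehat{Y} - Y^{*}) = O_{P}(n^{-1/2})$ and asymptotic normality, yielding the claim.

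The main technical obstacle I foresee is controlling the remainder $R_n$. The Lipschitz bounds collected at the end of the proof of Proposition \ref{prop:final_state_estimator_consistency} are first-order only; to guarantee $R_n = o_{P}(n^{-1/2})$ I would combine a quadratic Taylor remainder, which needs the second derivatives of $E_\theta$ and $h_\theta$ in $\theta$ (available through C5 and differentiation of the Riccati and adjoint systems), with the joint rates $\widehat{\theta}^{K} - \theta^{*} = O_{P}(n^{-1/2})$ and $\|\widehat{Y} - Y^{*}\|_{L^{2}} \to 0$, localizing the expansion to a shrinking neighbourhood of $(\theta^{*}, Y^{*})$. The manipulations are of the same nature as those deferred to the \emph{Supplementary Materials} for Theorem \ref{thm:asymptotic}, and I would carry them out there rather than in the main text.
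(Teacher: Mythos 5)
Your proposal is correct, but it follows a genuinely different route from the paper. The paper starts from the same identity $X^{*}(T)=-E_{\theta^{*}}(T)^{-1}h_{\theta^{*}}(T,Y^{*})$ and splits $\widehat{X}(T)-X^{*}(T)$ into three terms: one involving $h_{\theta^{*}}(T,Y^{*})-h_{\theta^{*}}(T,\widehat{Y})$, one involving $h_{\theta^{*}}(T,\widehat{Y})-h_{\widehat{\theta}}(T,\widehat{Y})$, and one involving $E_{\theta^{*}}(T)^{-1}-E_{\widehat{\theta}}(T)^{-1}$. It then discards the two $\theta$-dependent terms (via consistency of $\widehat{\theta}$, the Lipschitz bounds of Lemmas \ref{lem:h_discrepancy_lambda} and \ref{lem:E_inverse_discrepancy_wrt_lambda}, and the continuous mapping theorem) and keeps only the term linear in $\widehat{Y}-Y^{*}$, writing it through the integral representation $h_{\theta^{*}}(T,Y)=-\int_{0}^{T}R_{\theta^{*}}(T,s)C^{\top}Y(s)ds-\dots$ and invoking Theorem 9 of Newey (1997) for the $\sqrt{n}$-normality of that single linear functional of the regression spline. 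You instead linearize the full map $F(\theta,Y)=-E_{\theta}(T)^{-1}h_{\theta}(T,Y)$ jointly in $(\theta,Y)$, substitute the asymptotically linear representation $\widehat{\theta}^{K}-\theta^{*}=\Gamma(\widehat{Y}-Y^{*})+o_{P}(n^{-1/2})$ from Theorem \ref{thm:asymptotic}, and collapse everything into one linear functional $\Psi(\widehat{Y}-Y^{*})$ before applying the same Newey-type plug-in theorem. What your version buys is an explicit accounting of the parameter-estimation channel: since $\widehat{\theta}^{K}-\theta^{*}$ is itself $O_{P}(n^{-1/2})$ and not $o_{P}(n^{-1/2})$, your $\Psi$ carries the contribution of $D_{\theta}F$ into the limit law, whereas the paper's argument, which only claims those terms are $o_{p}(1)$, retains just the direct $Y$-channel in the stated asymptotic distribution; your treatment is therefore the more careful one for the asymptotic variance, at the price of needing the second-order regularity in $\theta$ (condition C5, differentiating the Riccati and adjoint systems) to control the Taylor remainder $R_{n}$, where the paper gets by with the first-order Lipschitz bounds already collected in the consistency proof.
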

\begin{proof}
We have the following decomposition: 
\[
\begin{array}{lll}
\widehat{X}(T)-X^{*}(T) & = & E_{\theta^{*}}(T)^{-1}h_{\theta^{*}}(T,Y^{*})-E_{\widehat{\theta}}(T)^{-1}h_{\widehat{\theta}}(T,\widehat{Y})\\
 & = & E_{\theta^{*}}(T)^{-1}\left(h_{\theta^{*}}(T,Y^{*})-h_{\theta^{*}}(T,\widehat{Y})\right)\\
 & + & E_{\theta^{*}}(T)^{-1}\left(h_{\theta^{*}}(T,\widehat{Y})-h_{\widehat{\theta}}(T,\widehat{Y})\right)\\
 & + & \left(E_{\theta^{*}}(T)^{-1}-E_{\widehat{\theta}}(T)^{-1}\right)h_{\widehat{\theta}}(T,\widehat{Y})
\end{array}
\]
According to Theorem 7 in \cite{Newey1997} $\widehat{Y}$ is a consistent
estimator of $Y^{*}$ hence using proposition \ref{prop:final_state_estimator_consistency}
and continuous mapping theorem we have: 
\[
\widehat{X}(T)-X^{*}(T)=E_{\theta^{*}}(T)^{-1}\left(h_{\theta^{*}}(T,Y^{*})-h_{\theta^{*}}(T,\widehat{Y})\right)+o_{p}(1)
\]
Using the linear representation for $h_{\theta^{*}}$ we obtain:

\[
E_{\theta^{*}}(T)^{-1}\left(h_{\theta^{*}}(T,Y^{*})-h_{\theta^{*}}(T,\widehat{Y})\right)=-E_{\theta^{*}}(T)^{-1}\int_{0}^{T}R_{\theta^{*}}(T,s)C^{T}\left(\widehat{Y}(s)-Y^{*}(s)\right)ds
\]
We define 
\[
H(t,\theta).Y=E_{\theta}(T)^{-1}R_{\theta}(T,t)C^{T}Y(t)
\]
the linear form such that 
\[
-E_{\theta^{*}}(T)^{-1}\left(h_{\theta^{*}}(T,Y^{*})-h_{\theta^{*}}(T,\widehat{Y})\right)=\int_{0}^{T}\left(H(s,\theta^{*}).Y^{*}-H(s,\theta^{*}).\widehat{Y}\right)ds
\]
As for the normality of $\hat{\theta}^{K}$, we can use theorem 9
in \cite{Newey1997} in order the obtain the asymptotic normality
of $\int_{0}^{T}\left(H(s,\theta^{*}).Y^{*}-H(s,\theta^{*}).\widehat{Y}\right)ds$
with $\sqrt{n}-$rate. 
\end{proof}

\subsection{Estimation of the states on $\left[0,T\right]$ and influence of
$\lambda$\label{sub:StatesEstimation}}

We can estimate the trajectory $X^{*}(t)$ with the smoothed trajectory
$t\mapsto X_{\hat{\theta},\hat{x}_{0},\bar{u}}(t)$ or with the exact
model $t\mapsto X_{\hat{\theta},\hat{x}_{0},0}$, without the perturbation
$\bar{u}$. We need then to have a better understanding of the quality
of these two estimates, and in particular of the relevancy of $\hat{x}_{0}$,
defined as the unknown initial condition of the Final Value Problem
(\ref{eq:BackwardClosedLoop}). We have profiled the initial condition
in the definition of $S$, in order to separate the estimation of
$\theta^{*}$ from the estimation of the initial condition. Nevertheless,
the estimation of the states is a by-product of the parameter estimation,
and the remaining point in our analysis is to ensure that $\hat{x}_{0}$
is really a good estimator for $x_{0}^{*}$. This is the case, and
we will show more generally that $X_{\hat{\theta},\hat{x}_{0},\bar{u}}$
is a good estimator of the trajectory $X^{*}$. Quite remarkably,
the consistency of $X_{\hat{\theta},\hat{x}_{0},\bar{u}}$ is the
first result that relies on a assumption on the hyperparameter $\lambda$.
This is due to the fact that $\bar{u}$ is a perturbation computed
for tracking $\hat{Y}$, while taking into account the model uncertainty
estimated by $\hat{\theta}$ instead of $\theta^{*}$. The convergence
of $\hat{Y}$ to $Y^{*}$ and the identifiability conditions 2a and
2b (plus regularity conditions) are sufficient to ensure the convergence
of $\hat{\theta}$ to $\theta^{*}$, without particular assumptions
on $\lambda$. This is possible because the true model $X_{\theta,x_{0},0}$
is included into the perturbed model $X_{\theta,x_{0},u}$. 

If $\lambda$ is not big enough, the size of the perturbation $\left\Vert \bar{u}\right\Vert _{L^{2}}^{2}$
is not highly constrained in the cost function $S$, and we can have
overfitting: the estimator $CX_{\hat{\theta},\hat{x}_{0},\bar{u}}$
can be quite close to $\hat{Y}$ with a ``big'' $\bar{u}$ that
makes $X_{\hat{\theta},\hat{x}_{0},\bar{u}}$ far from of $X^{*}$.
This problem can be even more important, if we have errors on $\hat{\theta}$,
because $\bar{u}$ will have to compensate the errors in the parameter
estimation. In that case, we cannot guarantee to have a consistent
estimate for $x_{0}$, if we don't have $\lambda\longrightarrow\infty$.
Indeed, the trajectory $X_{\hat{\theta},\hat{x}_{0},\bar{u}}$ is
the solution to the pertubed initial value problem

\begin{equation}
\begin{cases}
\dot{x}(t)= & \left\{ A_{\hat{\theta}}(t)+\frac{1}{\lambda}E_{\hat{\theta},\lambda}(t)\right\} x(t)+r_{\hat{\theta}}(t)+\frac{1}{\lambda}h(t,\widehat{Y})\\
x(T)= & -E_{\hat{\theta},\lambda}(T)^{-1}h(T,\widehat{Y})
\end{cases}\label{eq:EstimatedPertubedODE}
\end{equation}
Because of the convergence of $\left(\hat{\theta},\widehat{Y}\right)\longrightarrow\left(\theta^{*},Y^{*}\right)$,
we can ensure the convergence to the right trajectory if we control
$\lambda$. 
\begin{prop}
Under conditions C1-C10 and if $\lambda_{n}\longrightarrow\infty$,
then 
\[
X_{\hat{\theta},\hat{x}_{0},\bar{u}}(t)\longrightarrow X^{*}(t)
\]
 for all $t\in[0,T]$. Moreover, 
\[
\widehat{x}_{0}-x_{0}^{*}=O_{P}(n^{-1/2}).
\]
\end{prop}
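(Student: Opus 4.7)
The plan is to work with the closed-loop backward formulation of $X_{\hat\theta,\hat x_0,\bar u}$ given by (\ref{eq:EstimatedPertubedODE}) and compare it to $X^*$, which I will rewrite as a solution of the same type of backward ODE at $(\theta^*,Y^*)$. Indeed, the identifiability at $(\theta^*,x_0^*)$ implies $S(Y^*;\theta^*,\lambda)=0$, and, as already used in the proof of Proposition~\ref{prop:final_state_estimator_consistency}, this forces $\bar u_{\theta^*,\lambda}\equiv 0$, i.e.\ the key cancellation identity
\[
E_{\theta^*}(t)X^*(t)+h_{\theta^*}(t,Y^*)\equiv 0,\qquad t\in[0,T].
\]
As a consequence, $X^*$ satisfies (\ref{eq:EstimatedPertubedODE}) with $(\hat\theta,\hat Y)$ replaced by $(\theta^*,Y^*)$, and its terminal value coincides with $-E_{\theta^*}(T)^{-1}h_{\theta^*}(T,Y^*)$. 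I can then subtract the two backward IVPs to get an error equation for $D(t):=X_{\hat\theta,\hat x_0,\bar u}(t)-X^*(t)$ of the form
\[
\dot D(t)=\Bigl(A_{\hat\theta}(t)+\tfrac1{\lambda_n}E_{\hat\theta}(t)\Bigr)D(t)+\Delta(t),\qquad D(T)=\widehat X(T)-X^*(T),
\]
where $\Delta(t)$ collects the parameter and data perturbations
\[
\Delta(t)=(A_{\hat\theta}-A_{\theta^*})X^*+(r_{\hat\theta}-r_{\theta^*})+\tfrac1{\lambda_n}\bigl[(E_{\hat\theta}-E_{\theta^*})X^*+(h_{\hat\theta}(\cdot,\hat Y)-h_{\theta^*}(\cdot,Y^*))\bigr].
\]

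For the consistency part, I would apply a backward Gronwall inequality to this linear equation: since the drift $A_{\hat\theta}+\lambda_n^{-1}E_{\hat\theta}$ is uniformly bounded in probability by Proposition~\ref{prop:existence_A_E_h}, one has $\|D(t)\|\lesssim \|D(T)\|+\int_t^T\|\Delta(s)\|\,ds$. The terminal discrepancy vanishes by Proposition~\ref{prop:final_state_estimator_consistency}; the parameter-driven terms $A_{\hat\theta}-A_{\theta^*}$ and $r_{\hat\theta}-r_{\theta^*}$ tend to $0$ by $\hat\theta\xrightarrow{P}\theta^*$ and the continuity of C3; finally the $\lambda_n^{-1}$-weighted terms tend to $0$ because $E_\theta$ and $h_\theta(\cdot,\hat Y)$ stay $L^2$-bounded uniformly in $\theta$ while $\lambda_n\to\infty$. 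This yields $X_{\hat\theta,\hat x_0,\bar u}(t)\xrightarrow{P}X^*(t)$ pointwise, and in fact uniformly. For the $O_P(n^{-1/2})$ rate at $t=0$, I would linearize each piece of $\Delta$ using C4 and C5, combine with $\hat\theta-\theta^*=O_P(n^{-1/2})$ from Theorem~\ref{thm:asymptotic} and with $\widehat X(T)-X^*(T)=O_P(n^{-1/2})$ from Proposition~\ref{prop:final_state_estimator_asymptotic_normality}, and control the $\hat Y$-dependent contributions through the same linear functional argument à la Newey already used for the asymptotic normality of $\widehat\theta^K$. Since $1/\lambda_n\le 1$, each integrated perturbation is then $O_P(n^{-1/2})$, and a last Gronwall bound at $t=0$ gives $\hat x_0-x_0^*=D(0)=O_P(n^{-1/2})$.

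The main obstacle is the term $\lambda_n^{-1}h_{\hat\theta}(\cdot,\hat Y)$, which forbids a naïve estimate: evaluated at an arbitrary pair, $\lambda_n^{-1}h$ only carries the trivial $O(1/\lambda_n)$ bound, which is insufficient for the parametric rate unless $\lambda_n\succeq\sqrt n$. The cancellation identity above is what rescues the argument: it turns $\lambda_n^{-1}[E_{\hat\theta}X^*+h_{\hat\theta}(\cdot,\hat Y)]$ into a \emph{difference} between the value at $(\hat\theta,\hat Y)$ and the value at $(\theta^*,Y^*)$, which inherits the $O_P(n^{-1/2})$ rate in $\hat\theta$ and in the relevant linear functionals of $\hat Y-Y^*$ and is then further damped by $\lambda_n^{-1}$. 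Making this step rigorous requires Lipschitz bounds on $(\theta,\zeta)\mapsto h_\theta(\cdot,\zeta)$ and $\theta\mapsto E_\theta$ of the type already proved in Lemmas~\ref{lem:h_discrepancy_lambda} and~\ref{lem:E_inverse_discrepancy_wrt_lambda}, and a careful bookkeeping of the $\lambda_n$-dependence of the constants so that the remainder is genuinely $o_P(n^{-1/2})$; this is the technical heart of the proof.
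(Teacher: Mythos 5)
Your argument is correct, but it follows a different route from the paper's own proof, and the comparison is instructive. The paper proves the consistency part by studying the smooth dependence of the adjoint variables on $\lambda^{-1}$: it shows $E_{\theta,\lambda}\to E_{\theta,\infty}$ and $h_{\theta,\lambda}(\cdot,\zeta)\to h_{\theta,\infty}(\cdot,\zeta)$ as $\lambda\to\infty$, so that the closed-loop final value problem (\ref{eq:EstimatedPertubedODE}) converges to the unperturbed system with terminal value $X^*(T)$, and then invokes the continuous mapping theorem applied to $(\hat{\theta}_n,\hat{Y}_n,\lambda_n)$; for the rate it writes $\hat{x}_0$ in closed form through the resolvant $\psi_{\hat\theta,\lambda}$ of the drift $A_{\hat\theta}+\lambda^{-1}E_{\hat\theta,\lambda}$, views $\hat{x}_0$ as a smooth transformation of $(\hat\theta,\hat{X}(T))$, and concludes by the parametric delta-method. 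You instead exploit the cancellation identity $E_{\theta^*}(t)X^*(t)+h_{\theta^*}(t,Y^*)\equiv 0$ (valid for every $\lambda$, exactly as in the proof of Proposition \ref{prop:final_state_estimator_consistency}) to recast $X^*$ as a solution of the same closed-loop backward system at $(\theta^*,Y^*)$, and then control the error ODE for $D=X_{\hat\theta,\hat{x}_0,\bar u}-X^*$ by Gronwall/Duhamel, using the rates of $\hat\theta$, $\hat X(T)$ and the Lipschitz bounds of Lemmas \ref{lem:h_discrepancy_lambda} and \ref{lem:E_inverse_discrepancy_wrt_lambda}. What your route buys is transparency on exactly the delicate point: it makes explicit that the $\lambda_n^{-1}$-weighted terms enter only through \emph{differences} $(E_{\hat\theta}-E_{\theta^*})X^*$ and $h_{\hat\theta}(\cdot,\hat Y)-h_{\theta^*}(\cdot,Y^*)$ that are already $O_P(n^{-1/2})$, so that no growth condition linking $\lambda_n$ to $n$ is needed beyond $\lambda_n\to\infty$ — a point the paper's delta-method step leaves implicit, since a naive treatment of the $\lambda_n^{-1}h_{\hat\theta,\lambda_n}(\cdot,\hat Y)$ term in the closed form would only yield an extra $O(1/\lambda_n)$ remainder. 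One caution for the rate step: do not pass to norms via a plain Gronwall inequality, because bounding the $\hat{Y}$-dependent contribution by $\Vert\hat{Y}-Y^*\Vert_{L^2}$ only gives the nonparametric rate; you must keep $D(0)$ in its variation-of-constants (Duhamel) representation so that the $\hat{Y}$-part remains an explicit linear functional of $\hat{Y}-Y^*$ (with kernel frozen at $\theta^*$ up to an $o_P(n^{-1/2})$ cross-term), to which the Newey-type plug-in result applies — you indicate this, so the plan is sound, but the phrase ``a last Gronwall bound at $t=0$'' should be understood in that linear-representation sense.
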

\begin{proof}
We first need to show that for all $\theta\in\Theta$, and $\zeta$,
the functions $E_{\theta,\lambda}$ and $h_{\theta,\lambda}(\cdot,\zeta)$
are bounded (as they converge) when $\lambda\longrightarrow\infty$.
As $E_{\theta,\lambda}$ is solution of the matrix equation $\dot{E}=C^{\top}C-A^{\top}E-EA-\frac{1}{\lambda}E^{2}$
that depends smoothly in $\lambda^{-1}$; hence $E_{\theta,\lambda}\longrightarrow E_{\theta,\infty}$
defined as the solution of the linear matrix ODE $\dot{E}=C^{\top}C-A^{\top}E-EA$
(with $E(0)=0$). Moreover, $h_{\theta,\lambda}(\cdot,\zeta)$ is
solution of the linear ODE $\dot{h}=-\alpha_{\theta,\lambda}h-\beta_{\theta,\lambda}(\cdot,\zeta)$
with $\alpha_{\theta,\lambda}\longrightarrow A_{\theta}^{\top}$ and
$\beta_{\theta,\lambda}(\cdot,\zeta)\longrightarrow\beta_{\theta,\infty}=C^{\top}\zeta+E_{\theta,\infty}r_{\theta}$.
As the dependency in $\lambda^{-1}$ is smooth, the solution $h_{\theta,\lambda}(\cdot,\zeta)$
converges to $h_{\theta,\infty}$, solution of $\dot{h}=-\alpha_{\theta,\infty}h-\beta_{\theta,\infty}(\cdot,\zeta)$.
Additionally, the dependency in $\left(\lambda,\zeta\right)$ is smooth
on $\mathbb{R}^{+}\times L^{2}$ and $h_{\theta,\lambda}(\cdot,\zeta)$
converges to $h_{\theta,\infty}(\cdot,Y^{*})$ as $(\lambda,\zeta)$
tends to $(\infty,Y^{*})$. This means that if $(\theta,Y)$ converges
to $(\theta^{*},Y^{*})$ as $\lambda\longrightarrow\infty$, then
$X_{\theta,x_{0},\lambda}$ converges to the solution of the final
value problem 
\begin{equation}
\begin{cases}
\dot{x}(t)= & A_{\theta^{*}}(t)x(t)+r_{\theta^{*}}(t)\\
x(T)= & X_{\theta^{*},x_{0}^{*},0}(T)
\end{cases}\label{eq:EstimatedPertubedODE-1}
\end{equation}
as $\lambda^{-1}E_{\theta,\lambda}$ and $\lambda^{-1}h_{\theta,\lambda}(t,Y)$
tends to zero when $\lambda\longrightarrow\infty$, and if $x(T)\longrightarrow X^{*}(T)$.
Because of the uniqueness of the solutions to Initial or Final Value
Problem, we have $X_{\theta,x_{0},\lambda}\longrightarrow X^{*}$
for all $t\in\left[0,T\right]$. 

In proposition \ref{prop:final_state_estimator_consistency}, under
conditions C1-C8, we have shown that $(\hat{\theta}_{n}^{K},\hat{Y}_{n})$
converges in probability to $(\theta^{*},Y^{*})$ for all $\lambda$
on $\Theta\times L^{2}$. By the continuous mapping theorem applied
to $(\hat{\theta}_{n}^{K},\hat{Y}_{n},\lambda_{n})$, with $\lambda_{n}\longrightarrow\infty$,
we have $X_{\hat{\theta},\hat{x}_{0},\bar{u}_{\lambda_{n}}}(t)\longrightarrow X^{*}(t)$
for all $t\in\left[0,T\right]$ in probability. In particular, we
obtain the convergence of $\hat{x}_{0}$ to $x_{0}^{*}$. 

The asymptotic normality and root-n rate of $\hat{x}_{0}$ comes from
the asymptotic normality and rates of $\hat{\theta}$ and $\hat{X}(T)$.
If $\psi_{\theta,\lambda}(t,0)$ is the resolvant of the ODE $\dot{x}(t)=\left\{ A_{\theta}(t)+\frac{1}{\lambda}E_{\theta,\lambda}(t)\right\} x(t)$,
we have a closed form for the smoother 
\[
X_{\hat{\theta},\hat{x}_{0},\bar{u}_{\lambda_{n}}}(t)=\psi_{\hat{\theta},\lambda}(t,0)\hat{x}_{0}+\psi_{\hat{\theta},\lambda}\int_{0}^{t}\left[\psi_{\hat{\theta},\lambda}(s,0)\right]^{-1}\left\{ r_{\hat{\theta}}(s)+\frac{1}{\lambda}h_{\hat{\theta},\lambda_{n}}(s,\hat{Y})\right\} ds
\]
When we evaluate at $t=T$, we obtain the following formula for the
initial state $\hat{x}_{0}=\left[\psi_{\hat{\theta},\lambda}(T,0)\right]^{-1}\hat{X}(T)-\int_{0}^{T}\left[\psi_{\hat{\theta},\lambda}(s,0)\right]^{-1}\left\{ r_{\hat{\theta}}(s)+\frac{1}{\lambda_{n}}h_{\hat{\theta},\lambda_{n}}(s,\hat{Y})\right\} ds$.
Hence, $\hat{x}_{0}$ is a smooth transformation of $\left(\hat{\theta},\hat{X}(T)\right)$,
and we can conclude by the parametric delta-method. 
\end{proof}

\subsection{Choice of $\lambda$ and cross-validation}

Our theoretical analysis shows that when $n$ tends
to infinity, we have a family of good estimates $(\hat{\theta}_{\lambda_{n}}^{K},\widehat{x_{0,\lambda_{n}}})$,
with $\lambda_{n}\longrightarrow\infty$. The remaining question is
 to define an appropriate selection procedure for $\lambda$,
that could be used in practice with a finite number of observations
$(y_{1},\dots,y_{n})$. A straightforward way of selecting $\lambda$
is to use a cross-validation selection procedure. Indeed, our criterion
$S\left(\hat{Y};\theta,\lambda\right)$ is based on a balance between
data fidelity and model fidelity, and a rough analysis shows that
when $\lambda\longrightarrow0$, we can select any $u$ in order to
interpolate $\widehat{Y}$ and $\theta$ has almost no influence on
$S\left(\hat{Y};\theta,\lambda\right)$. Whereas when $\lambda\longrightarrow\infty$,
the optimal perturbation $\bar{u}\longrightarrow0$, and we get a
NLS-like criterion where the observations $Y_{i}$'s are replaced
by the proxy $\widehat{Y}$. 

A good hyperparameter $\lambda_{n}$
should give a good estimate of the states $X^{*}(t)$ (and of the
output $Y^{*}(t)$), even if we are only interested in parameter estimation.
Anyway, if we want to use the minimization of prediction error for selecting
$\lambda$ and $\hat{\theta}_{\lambda}^{K}$, we need to have a good
estimate of the initial condition $x_{0}$ as it is necessary for
computing the predictions. We propose then to select $\lambda$ by
minimizing the Sum of Squared Errors 
\begin{equation}
SSE(\lambda)=\sum_{i=1}^{n}\left\Vert Y_{i}-CX_{\widehat{\theta}_{\lambda},\widehat{x_{0,\lambda}},0}(t_{i})\right\Vert _{2}^{2}.\label{eq:sse_cv}
\end{equation}
Moreover, this criterion gives a way to reduce the influence of the
nonparametric estimate $\hat{Y}$, as we use the original noisy data.
This is the selection procedure that we implemented in the experiments
part.


\section{Experiments}

We use two test beds for evaluating the practical efficiency of the
deterministic Kalman filter estimator $\hat{\theta}^{K}$; we compare
it with the NLS estimator $\hat{\theta}^{NLS}$ and the estimator
obtained by Generalized Smoothing $\hat{\theta}^{GS}$. The two models
are linear in the states, and they can be linear or nonlinear w.r.t
parameters. We use several sample size and several variance error
for comparing robustness and efficiency.

\subsection{Experimental design}

For a given sample size $n$ and noise level $\sigma$, we estimate
the Mean Square Error and the mean Absolute Relative Error (ARE) $\mathbb{E}_{\theta^{\text{*}}}\left[\frac{\left|\theta^{*}-\widehat{\theta}\right|}{\left|\theta^{*}\right|}\right]$
by Monte Carlo, based on $N_{MC}=100$ runs. For each run, we simulate
an ODE solution with a Runge-Kutta algorithm (ode45 in Matlab), and
a centered Gaussian noise (with variance $\sigma$) is added, in order
to obtain the $Y_{i}$'s. We compare the accuracy of the 3 parameters
$\hat{\theta}^{K}$ ,$\widehat{\theta}^{GS}$ and $\widehat{\theta}^{NLS}$,
but we are also interested in their mean prediction error defined
as 
\begin{equation}
E_{P}\left(\hat{X}\right)=\mathbb{E}_{(Y_{1},\dots,Y_{n})}\left[\mathrm{\mathbb{E}_{\theta^{\text{*}},\sigma}}\left[\left\Vert Y^{*}-\hat{X}\right\Vert \right]\right]\label{eq:prediction_error}
\end{equation}
where $Y^{*}$ is a new observation generated with the parameters
$(\theta^{*},x_{0}^{*},\sigma)$, and $\hat{X}$ is an estimator of
the trajectory, based on one of the three estimates $\hat{\theta}^{K}$
,$\widehat{\theta}^{GS}$ and $\widehat{\theta}^{NLS}$. For the three
estimators, the initial condition is estimated consistently: 
\begin{description}
\item [{NLS:}] $\widehat{x_{0}}$ is obtained simultaneously with the parameter
estimation (as an additional parameter), 
\item [{Kalman:}] $\widehat{x_{0}}$ and $\lambda_{n}$ are selected as
described in section 5.3, 
\item [{Generalized~~Smoothing:}] $\widehat{x_{0}}$ is the initial value
of the estimated curve corresponding to the estimated parameter $\widehat{\theta}^{GS}$,
with smoothing parameter $\lambda_{n}$ selected adaptively as described
in \cite{Ramsay2007}.
\end{description}
We insist on the fact that parameter estimation and prediction are
two different statistical tasks, that are evaluated by different criteria.
Parameter estimation is required when the parameter has an interest
by itself or when the model has an explicative purpose, whereas the
prediction error is dedicated to estimation of the state $X$, in
the most efficient way. Our primary interest is parameter estimation
but we also discuss prediction for the three methods; as we have seen
in section 5, parameter estimation and state estimation are tightly
related in particular for the selection of $\lambda$. We will consider
two possible estimators for the state: a parametric estimator $X_{\hat{\theta},\widehat{x}_{0}}$
and a smoothed (or corrected) estimator $X_{\hat{\theta},\widehat{x}_{0},\bar{u}}$.
For the NLS estimator, the parametric and corrected state estimator
are the same, whereas $X_{\hat{\theta}^{GS},\widehat{x}_{0}}$ and
$\hat{X}(\cdot,\hat{\theta}^{GS})$ are different, as $X_{\hat{\theta},\widehat{x}_{0}}$
differs from $X_{\hat{\theta},\widehat{x}_{0},\bar{u}}$. 

The two test beds are partially observed models with one missing state
variable. We compare the ability of the different methods to accurately
reconstruct the hidden state. Thus, we compute for each estimator
the $L^{2}-$distance between the true missing state and the obtained
reconstruction after parameter estimation:
\begin{equation}
\Delta\left(\hat{X}^{unobs}\right)=\mathrm{\mathbb{E}}\left[\left\Vert X_{\theta^{*},x_{0}^{*}}^{unobs}-\hat{X}^{unobs}\right\Vert _{L^{2}}\right]\label{eq:l2_discrepancy}
\end{equation}
The nonparametric estimate $\hat{Y}$ is a regression spline, with
a B-spline basis defined on a uniform knot sequence $\xi_{k},k=1,\dots,K$.
For each run and each state variables, the number of knots is selected
by minimizing the GCV criterion, \cite{ruppert2003semiparametric}.
For optimizing the criterion $S$, we use the Matlab function 'fminunc'
that implements a trust region algorithm for which gradient expression
is required. The computation of the gradient of $S$ w.r.t the parameter
$\theta$ is computationally involved and is based on the sensitivity
equations of the ODE model. The computational details are left in
appendix \ref{sec:Gradient-Computation}.

\subsection{Toy Examples: Partially Observed ODE in 3 D}

We consider the autonomous ODE 
\begin{equation}
\left\{ \begin{array}{l}
\dot{x_{1}}=-(k_{1}+k_{2})x_{1}\\
\dot{x_{2}}=k_{1}x_{1}\\
\dot{x_{3}}=k_{2}x_{1}
\end{array}\right.\label{eq:part_obs_model}
\end{equation}
where we observe only the variables $x_{2}$ and $x_{3}$. Using the
notation introduced in this paper, we have 
\[
A_{\theta}=\left(\begin{array}{ccc}
-(k_{1}+k_{2}) & 0 & 0\\
k_{1} & 0 & 0\\
k_{2} & 0 & 0
\end{array}\right)
\]
and 
\[
C=\left(\begin{array}{ccc}
0 & 1 & 0\\
0 & 0 & 1
\end{array}\right).
\]
With that model, we show that the conditions introduced in the statistical
analysis are workable on some simple models, in particular the conditions
for identifiability C2a and C2b that needs to be checked.  In the
case of autonomous system (i.e when $A_{\theta}$ and $r_{\theta}$
do not depend on time), a simple sufficient and necessary criteria
is the so-called Kalman criterion: 
\begin{prop}
\label{prop:autonomous_invertibility_criteria-1} In the case of an
autonomous model, the matrix $E_{\theta}(T)$ is nonsingular if and
only if the matrix 
\begin{equation}
K_{A,C}=\left(\begin{array}{l}
C\\
CA_{\theta}\\
\vdots\\
CA_{\theta}^{d-1}
\end{array}\right)\label{eq:KalmanMatrix}
\end{equation}
has a rank equals to $d$. 
\end{prop}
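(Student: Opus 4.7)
The plan is to leverage Proposition \ref{prop:invertibility_general_criteria}, which states that $E_\theta(T)$ is nonsingular if and only if the observability Gramian $O_\theta(T)$ is nonsingular, or equivalently if and only if the map $x_0 \mapsto C\Phi_\theta(\cdot,0)x_0$ is injective on $L^2([0,T],\mathbb{R}^{d'})$. So it suffices to show that, in the autonomous case, this injectivity is equivalent to $\mathrm{rank}(K_{A,C}) = d$, i.e.\ to $\ker K_{A,C} = \{0\}$.

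In the autonomous setting, the resolvant takes the explicit form $\Phi_\theta(t,0) = e^{tA_\theta}$. The core of the argument is therefore to establish the pointwise identity of kernels
\[
\{ x_0 \in \mathbb{R}^d : C e^{tA_\theta} x_0 = 0 \ \forall t \in [0,T]\} = \ker K_{A,C}.
\]
I will prove the two inclusions separately. For $(\supseteq)$, suppose $K_{A,C}x_0 = 0$, so that $CA_\theta^k x_0 = 0$ for $k=0,1,\dots,d-1$. By the Cayley--Hamilton theorem, every power $A_\theta^k$ with $k\geq d$ is a linear combination of $I,A_\theta,\dots,A_\theta^{d-1}$, hence $CA_\theta^k x_0 = 0$ for all $k\geq 0$. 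Plugging into the power series $Ce^{tA_\theta}x_0 = \sum_{k\geq 0} \frac{t^k}{k!}CA_\theta^k x_0$ yields $Ce^{tA_\theta}x_0 = 0$ for every $t$, in particular on $[0,T]$.

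For the converse $(\subseteq)$, suppose $Ce^{tA_\theta}x_0 = 0$ on $[0,T]$. Since $t\mapsto Ce^{tA_\theta}x_0$ is real-analytic and vanishes on an interval of positive length, all its derivatives at $t=0$ must vanish. The $k$-th derivative at $0$ equals $CA_\theta^k x_0$, so $CA_\theta^k x_0 = 0$ for $k=0,1,\dots,d-1$ (indeed for all $k\geq 0$), giving $K_{A,C}x_0 = 0$.

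Combining the two directions, the injectivity condition from Proposition \ref{prop:invertibility_general_criteria} is equivalent to $\ker K_{A,C} = \{0\}$, i.e.\ $\mathrm{rank}(K_{A,C}) = d$, and the equivalence with nonsingularity of $E_\theta(T)$ follows. There is no real obstacle here; the only subtlety is the analyticity argument, which is essential to pass from vanishing on $[0,T]$ to vanishing of all Taylor coefficients --- this is what makes the result hold for any $T>0$, however small.
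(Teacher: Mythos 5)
Your core argument --- that in the autonomous case $\Phi_\theta(t,0)=e^{tA_\theta}$ and that
\[
\left\{ x_{0}\,:\, Ce^{tA_\theta}x_{0}=0\ \forall t\in[0,T]\right\} =\ker K_{A,C}
\]
via Cayley--Hamilton in one direction and real-analyticity (vanishing of all Taylor coefficients at $0$) in the other --- is correct, and it is exactly the classical Kalman rank criterion that the proposition invokes; this part needs no repair and is in the spirit of what the paper relies on.

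There is, however, one gap in how you connect this to $E_\theta(T)$. You quote Proposition \ref{prop:invertibility_general_criteria} as an equivalence ``$E_\theta(T)$ nonsingular $\Longleftrightarrow$ $O_\theta(T)$ nonsingular'', but as stated it only gives one implication: if the observability condition holds, then $E_\theta(T)$ is nonsingular (for $Q\geq 0$). The converse is precisely what the ``only if'' half of the present proposition requires, and it is not free: for $Q>0$ the Theorem--Definition guarantees that $E_\theta(T)$ is \emph{always} nonsingular, so the claimed equivalence can only hold in the $Q=0$ setting and must use that hypothesis somewhere. The missing step can be supplied by the value-function interpretation of $E_\theta$: with $Q=0$, $\zeta=0$ and $r_\theta=0$, the minimal cost over controls steering to a prescribed final state $z$ equals $z^{\top}E_{\theta}(T)z\geq0$. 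If $O_\theta(T)$ is singular, take $x_{0}\neq0$ with $C\Phi_{\theta}(t,0)x_{0}\equiv0$ on $[0,T]$ and $u=0$; the cost is zero, so $z=\Phi_{\theta}(T,0)x_{0}\neq0$ satisfies $z^{\top}E_{\theta}(T)z=0$, and positive semidefiniteness forces $E_{\theta}(T)z=0$, i.e.\ $E_\theta(T)$ is singular. Adding this short argument (or explicitly restricting the statement to $Q=0$ and proving the converse of Proposition \ref{prop:invertibility_general_criteria} there) closes the proof; everything else in your proposal stands.
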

The matrix $K_{A,C}$ is usually called the Kalman matrix. In order
to define properly our criterion $S$, we need to check that condition
2b is also satisfied (joint identifiability of $\theta$ and $x_{0}$).
For this model, the analysis is relatively easy and we can use the
characterisation proposed by \cite{Pohjanpalo1978} based on the power
series expansion. As, the Kalman matrix (\ref{eq:KalmanMatrix}) is
\[
\left(\begin{array}{l}
C\\
CA_{\theta}\\
CA_{\theta}^{2}
\end{array}\right)=\left(\begin{array}{ccc}
0 & 1 & 0\\
0 & 0 & 1\\
k_{1} & 0 & 0\\
k_{2} & 0 & 0\\
-k_{1}(k_{1}+k_{2}) & 0 & 0\\
-k_{2}(k_{1}+k_{2}) & 0 & 0
\end{array}\right),
\]
the Kalman condition is fullfilled (i.e the matrix rank is $3$) if
$k_{1}\neq0$ or $k_{2}\neq0$. Hence, C2a holds for all relevant
cases ( $k_{1}=0$ or $k_{2}=0$ correspond to the case where $x_{2}$
and $x_{3}$ variations are disconnected from $x_{1}$ which makes
the model useless for explanation or prediction purposes). 

For condition C2b, we use the result shown by  Pohjanpalo et al. If
the model is $\dot{x}=f(t,x,\theta)$ and the observation function
is $h(t,\theta,x)$, condition C2b is satisfied if the nonlinear system
\begin{equation}
h^{(j)}(x_{0},\theta,x)=a_{j}(x_{0})\: j=0,1,\ldots\label{eq:PhojanpaloSystem}
\end{equation}
 has a unique solution $\theta$. Pohjanpalo et al. showed that for
linear autonomous system, this condition is sufficient and necessary.
In our case, the equation (\ref{eq:PhojanpaloSystem}) can be written
as 
\[
CA_{\theta}^{j}X_{0}=a_{j}\: j=0,1,\ldots
\]
Since the initial condition $X_{0}=\left(X_{0,1},X_{0,2},X_{0,3}\right)$
is unknown, we have to consider the extended parameter $\theta=(k_{1}k_{2},X_{0}^{T})$.
The equations for $j=0$ 
\[
CX_{0}=a_{j}\: j=0,1,\ldots
\]
 allow us to identify $X_{0,2}=a_{0,2}\,,X_{0,3}=a_{0,3}$. For $j=1$,
we have 
\[
\left\{ \begin{array}{lll}
k_{1}X_{0,1} & = & a_{1,2}\\
k_{2}X_{0,1} & = & a_{1,3}
\end{array}\right.
\]
and the solutions are $X_{0,1}=\frac{a_{1,3}}{k_{2}}$ and $k_{1}=\frac{a_{1,2}}{a_{1,3}}k_{2}$.
Finally, we have a unique solution for $k_{2}$, if we consider the
additional equation (\ref{eq:PhojanpaloSystem}) for $j=2$. In that
case, the system 
\[
\left\{ \begin{array}{lll}
-k_{1}(k_{1}+k_{2})X_{0,1} & = & a_{2,2}\\
-k_{2}(k_{1}+k_{2})X_{0,1} & = & a_{2,3}
\end{array}\right.
\]
has a unique solution $k_{2}=-\frac{a_{2,3}}{a_{1,2}+a_{1,3}}$.

\paragraph{Well-specified model (Toy Model 1)}

We test two sample sizes $n=200$ and $n=100$ (observations times
are uniformely sampled between $t=0$ and $t=100$) and two noise
levels $\sigma=3$ and $\sigma=6$. For the computation of the regression
splines $\hat{Y}$, we select manually the knots location instead
of using the GCV driven selection (to avoid overfitting). We have
placed four equispaced knots respectively at time $t=0,\,33,\,66$
and $100$. The true parameter is $\theta^{*}=\left(k_{1}^{*},k_{2}^{*}\right)=\left(0.0593,0.0296\right)$
and the initial condition $x_{0}^{*}$ equals $\left(0,\,0,\,100\right)$.
For the Kalman estimator, we select $\lambda_{n}$ by cross-validation
among the values $\lambda^{v}=\left\{ 10^{k}\right\} _{k\in\left[5\,16\right]}$. 

\begin{center}
{\small{}}
\begin{table}[h]
\centering{}{\small{}}%
\begin{tabular}{|c|c|c|c|c|c|}
\hline 
{\small{}$\left(n,\sigma\right)$} &  & \multicolumn{1}{c|}{{\small{}MSE ($10^{-6}$)}} & {\small{}ARE ($10^{-2}$)} & $E_{P}\left(X_{\widehat{\theta},\widehat{x_{0}}}\right)$ & $\Delta\left(X_{\widehat{\theta},\widehat{x_{0}}}\right)$\tabularnewline
\hline 
\hline 
\multirow{3}{*}{{\small{}$\left(200,3\right)$}} & {\small{}$\widehat{\theta}^{NLS}$} & 4.20 & 5.16 & 43.56 & 4.18\tabularnewline
\cline{2-6} 
 & {\small{}$\widehat{\theta}^{K}$} & 3.97 & 4.77 & 42.79 & 4.13\tabularnewline
\cline{2-6} 
 & $\widehat{\theta}^{GS}$ & 12.87 & 12.23 & 46.80 & 9.28\tabularnewline
\hline 
\multirow{3}{*}{{\small{}$\left(200,6\right)$}} & {\small{}$\widehat{\theta}^{NLS}$} & 17.09 & 9.92 & 87.76 & 8.43\tabularnewline
\cline{2-6} 
 & {\small{}$\widehat{\theta}^{K}$} & 16.49 & 9.43 & 85.45 & 8.28\tabularnewline
\cline{2-6} 
 & $\widehat{\theta}^{GS}$ & 77.87 & 23.28 & 93.69 & 17.77\tabularnewline
\hline 
\multirow{3}{*}{{\small{}$\left(100,3\right)$}} & {\small{}$\widehat{\theta}^{NLS}$} & 8.21 & 7.43 & 44.95 & 6.04\tabularnewline
\cline{2-6} 
 & {\small{}$\widehat{\theta}^{K}$} & 8.78 & 7.37 & 43.03 & 6.15\tabularnewline
\cline{2-6} 
 & $\widehat{\theta}^{GS}$ & 22.32 & 12.60 & 48.01 & 9.45\tabularnewline
\hline 
\multirow{3}{*}{{\small{}$\left(100,6\right)$}} & {\small{}$\widehat{\theta}^{NLS}$} & 36.89 & 15.27 & 90.76 & 12.24\tabularnewline
\cline{2-6} 
 & {\small{}$\widehat{\theta}^{K}$} & 34.98 & 14.91 & 86.19 & 12.36\tabularnewline
\cline{2-6} 
 & $\widehat{\theta}^{GS}$ & 86.74 & 24.39 & 94.91 & 18.63\tabularnewline
\hline 
\end{tabular}{\small{}\protect\caption{\label{tab:exp_res_part_obs_model_wellspe}Results for the Toy Model
1 ; partially observed.}
}
\end{table}

\par\end{center}{\small \par}

The results are presented in table \ref{tab:exp_res_part_obs_model_wellspe}.
The GS estimator is outperformed by the Kalman and NLS estimators,
moreover our approach improves the parameter estimation accuracy in
terms of MSE and ARE in almost every cases comparing to the NLS and
also minimizes prediction error. Regarding the missing state reconstruction
both methods gives similar results.

\paragraph{Misspecified model (Toy Model 2)}

In our simulation, we give also some insight in the case of misspecified
models. Indeed, our perturbed ODE framework permits to consider naturally
the problem of model misspecification, when the true model is $\dot{x}(t)=A_{\theta}(t)x(t)+r_{\theta}(t)+v(t)$,
with $v\in L^{2}(\left[0,\, T\right],\mathbb{R}^{d})$ an unknown
function. We do not provide any theoretical analysis for this kind
of model misspecification. The Kalman estimator gives more accurate
estimation than the NLS estimator in that case, as we consider pertubations
of the initial model. Moreover, the optimal control $\overline{u}$
obtained along the parameter estimation can be used as a correction
term to add to the initial model to counter-balance misspecification.
This implies potentially a better prediction power. The true model
is nearly the same model as above 

\begin{equation}
\dot{X}=A_{\theta}X+v(t)\label{eq:true_model}
\end{equation}
with $\theta^{*}=\left(k_{1}^{*},k_{2}^{*}\right)=\left(0.0593,0.0296\right)$
and $x_{0}^{*}=\left(0,\,0,\,100\right)$, but we add a pertubation
$v:\left[0,\, T\right]\longmapsto\mathbb{R}^{3}$ with entries equal
to $0.4\times\sin(\frac{t}{5})$. Nevertheless for parameter estimation,
we still use the unperturbed model $\dot{X}=A_{\theta}X$. 

In the case of the Kalman estimator $\hat{\theta}^{K}$, the optimal
control $\overline{u}$ can be used for correcting the model and for
defining a new model 
\begin{equation}
\dot{X}=A_{\widehat{\theta}}X+\overline{u}.\label{eq:corrected_model}
\end{equation}
We are then interested in evaluating the prediction error of $X_{\widehat{\theta},\widehat{x_{0}},\overline{u}}$,
defined as $E_{P}\left(X_{\widehat{\theta},\widehat{x_{0}},\overline{u}}\right)$.
 We also estimate the error between the true first state value and
the obtained reconstruction with the corrected model. As shown in
the introduction, Generalized Smoothing can also evaluate a correction
term for $\widehat{\theta}^{GS}$, defined as $\overline{u}(t)=\dot{\widehat{X}}(t,\widehat{\theta}^{GS})-A_{\widehat{\theta}}\widehat{X}(t,\widehat{\theta}^{GS})$
(where $\widehat{X}(t,\widehat{\theta}^{GS})$ is the spline corresponding
to the estimated parameter $\widehat{\theta}^{GS}$ with adaptive
$\widehat{\lambda}$). In the case of NLS, we cannot compute a correction
$\bar{u}$, as the estimated trajectories are exactly solution of
the ODE for $\widehat{\theta}^{NLS}$. In the case of Generalized
Smoothing, we have $\Delta\left(X_{1}^{*};X_{\widehat{\theta},\widehat{x_{0}}}\right)\approx\Delta\left(X_{1}^{*};X_{\widehat{\theta},\widehat{x_{0}},\bar{u}}\right)$
because the hidden parts are (almost) exactly trajectories of the
ODE with parameter $\hat{\theta}^{GS}$. The estimates that change
is the Kalman-based one. 

\begin{center}
{\small{}}
\begin{table}[h]
\centering{}{\small{}}%
\begin{tabular}{|c|c|c|c|c|c|c|c|}
\hline 
{\small{}$\left(n,\sigma\right)$} &  & \multicolumn{1}{c|}{{\small{}MSE ($10^{-5}$)}} & {\small{}ARE ($10^{-2}$)} & $E_{P}\left(X_{\widehat{\theta},\widehat{x_{0}}}\right)$ & $E_{P}\left(X_{\widehat{\theta},\widehat{x_{0}},\bar{u}}\right)$ & $\Delta\left(X_{\widehat{\theta},\widehat{x_{0}}}^{unobs}\right)$ & $\Delta\left(X_{\widehat{\theta},\widehat{x_{0}},\bar{u}}^{unobs}\right)$\tabularnewline
\hline 
\hline 
\multirow{3}{*}{{\small{}$\left(200,3\right)$}} & {\small{}$\widehat{\theta}^{NLS}$} & {\small{}4.14} & {\small{}19.13} & 52.24 & 52.24 & 19.08 & 19.08\tabularnewline
\cline{2-8} 
 & {\small{}$\widehat{\theta}^{K}$} & {\small{}3.66} & {\small{}17.46} & 47.78 & 47.75 & 18.82 & 18.87\tabularnewline
\cline{2-8} 
 & $\widehat{\theta}^{GS}$ & 7.56 & 27.34 & 55.13 & 50.99 & 22.27 & 22.26\tabularnewline
\hline 
\multirow{3}{*}{{\small{}$\left(200,6\right)$}} & {\small{}$\widehat{\theta}^{NLS}$} & 4.99 & 18.65 & 92.95 & 92.95 & 19.90 & 19.90\tabularnewline
\cline{2-8} 
 & {\small{}$\widehat{\theta}^{K}$} & 4.68 & 18.14 & 88.30 & 88.02 & 20.66 & 19.79\tabularnewline
\cline{2-8} 
 & $\widehat{\theta}^{GS}$ & 13.21 & 29.66 & 97.25 & 94.64 & 26.09 & 26.08\tabularnewline
\hline 
\multirow{3}{*}{{\small{}$\left(100,3\right)$}} & {\small{}$\widehat{\theta}^{NLS}$} & 4.88 & 19.56 & 52.66 & 52.66 & 19.48 & 19.48\tabularnewline
\cline{2-8} 
 & {\small{}$\widehat{\theta}^{K}$} & 4.56 & 18.53 & 48.07 & 47.86 & 19.76 & 19.23\tabularnewline
\cline{2-8} 
 & $\widehat{\theta}^{GS}$ & 10.04 & 29.32 & 56.33 & 55.06 & 23.71 & 23.66\tabularnewline
\hline 
\multirow{3}{*}{{\small{}$\left(100,6\right)$}} & {\small{}$\widehat{\theta}^{NLS}$} & 7.96 & 23.32 & 96.63 & 96.63 & 21.69 & 21.69\tabularnewline
\cline{2-8} 
 & {\small{}$\widehat{\theta}^{K}$} & 7.59 & 22.36 & 89.19 & 88.65 & 23.88 & 21.56\tabularnewline
\cline{2-8} 
 & $\widehat{\theta}^{GS}$ & 15.63 & 32.77 & 101.41 & 98.81 & 26.14 & 26.15\tabularnewline
\hline 
\end{tabular}{\small{}\protect\caption{\label{tab:exp_res_part_obs_model_misspe}Results for Toy Model 2,
partially observed model; misspecified case}
}
\end{table}

\par\end{center}{\small \par}

The GS parameter estimator is outperformed by the Kalman and the NLS
estimator. Our approach improves the estimation accuracy for $\theta$
(lower MSE and ARE in every cases)on the NLS estimator. This difference
is bigger than in the well specified case (Toy Model 1), as we are
more robust to the presence of a perturbation than the NLS. The Kalman
estimator gives also better prediction error in every cases and the
correction $\overline{u}$ slightly improves the prediction errors.
Nevertheless, the NLS estimator provides the smallest $\Delta\left(X_{1}^{*};X_{\widehat{\theta},\widehat{x_{0}}}\right)$
among all estimation methods in every cases but the first one. Nonetheless
using $\overline{u}$ minimizes in most of case the error for $X_{1}$
estimation for our approach and allows us to obtain slightly better
result than the NLS estimator. 

The correction term $\bar{u}$ is related (correlated) to the perturbation
$t\mapsto v(t)$ as we can in figure \ref{fig:mean_residual_control},
where we plot the mean of each component of $\overline{u}$, when
$\left(n,\sigma\right)=\left(200,\,3\right)$. Even though the scale
is not the same (we need to rescale by $10^{-5}$ for easing comparisons),
the correction $\bar{u}$ exhibits some important features of the
true one, such as oscillations with a period close to the period of
$v$. The analysis of $\bar{u}$ is beyond the scope of that paper,
but the presence of strong patterns in $u$ can be used to detect
misspecification, in the same way that the analysis of residuals permits
to detect lack of fit in regression models.

\begin{center}
\begin{figure}[h]
\centering{}\includegraphics[scale=0.5]{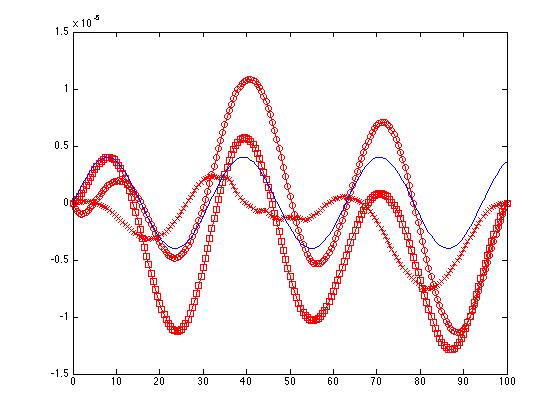}\protect\caption{\label{fig:mean_residual_control}Toy Model 2, $\left(n,\sigma\right)=\left(200,\,3\right)$:
Mean correction $\bar{u}$ (red curve); rescaled true perturbation
$10^{-5}\times v$ (blue curve)}
\end{figure}

\par\end{center}

\subsection{Real case example: Methanation reaction}

We consider an ODE model introduced in \cite{Happel1980} for describing
the dynamics of carbon monoxide and hydrogen methanation over a supported
nickel catalyst by transient isotopic tracer in a gradientless circulating
reactor. This ``Methanation reaction'' model is a linear autonomous
equation in $\mathbb{R}^{4}$, with a forcing term. A important difference
w.r.t the previous is the nonlinearity in parameters as we have 
\[
A_{\theta}=\left(\begin{array}{cccc}
-\frac{V+V'+F_{0}^{C0}/W}{\beta C^{C0}/W+C^{COl}} & 0 & 0 & 0\\
\frac{V+V'}{\beta C^{H_{2}0}/W} & -\frac{V+V'+v_{5}}{\beta C^{H_{2}0}/W} & 0 & \frac{v_{5}}{\beta C^{H_{2}0}/W}\\
\frac{V'}{\beta C^{CO_{2}}/W} & 0 & -\frac{V'+v_{6}}{\beta C^{CO_{2}}/W} & \frac{v_{6}}{\beta C^{CO_{2}}/W}\\
0 & \frac{v_{5}}{C^{O_{s}}} & \frac{v_{6}}{C^{O_{s}}} & -\frac{v_{5}+v_{6}}{C^{O_{s}}}
\end{array}\right)
\]
 and $r_{\theta}=\left(\frac{F_{i}^{C0}z_{i}^{CO}}{\beta C^{C0}/W+C^{COl}},\,0,\,0,\,0\right)^{\top}$.
{\small{}The state $X$ is defined as $X^{\top}=\left(X^{CO},X^{H_{2}O},X^{CO_{2}},X^{O_{s}}\right)$,
and represents the quantity o}f the chemical species involved in the
reaction.  A constant inlet $CO$ flow rate with constant and known
fraction of isotope $^{18}O$ is introduced within the reactor; the
fraction of $^{18}O$ present in oxygen atoms for each component is
measured at different timeframe using a mass spectrometer. In the
model, $X^{j}(t)$ represents the measured fraction of $^{18}O$ present
in oxygen atoms of the chemical species $j$ at time $t$. The total
amount of oxygen $X^{O_{s}}$ cannot be measured. Some of the parameter
are already known:
\begin{itemize}
\item $F_{i}^{CO}/F_{0}^{CO}$ : inlet/outlet flow rates of CO ($0.59/0.45$)
\item $z_{i}^{CO}$ : the constant fraction of $^{18}O$ present in oxygen
atoms of the CO inlet flow rate ($0.132$)
\item $V/V'$: rates of production ($0.124/0.01$)
\item $C^{j}$: concentrations of gas phases in the reaction system ($j=CO,\, H_{2}0,\, CO_{2}$)
\item $W$ total weight of catalyst within system ($0.744$)
\item $\beta$ volume of dead space ($206.1$)
\end{itemize}
Our aim is to estimate the parameter $\theta=\left(C^{COl},C^{Os},v_{5},v_{6}\right)$. 

For simulating the datasets, we use two sample sizes $n=100$ and
$n=50$ (observations are uniformely sampled the time interval $\left[0,40\right]$),
with 2 noise levels $\sigma=0.002$ and $\sigma=0.004$. The true
parameter value is the estimate provided in \cite{Happel1980}, i.e
$\theta^{*}=\left(0.1,\,11.1,\,0.35,\,0.008\right)$ and with initial
condition equals to $x_{0}^{*}=\left(0,\,0,\,0,\,0\right)$. For the
computation of the Kalman estimator, we select $\lambda$ among $1,\,5,\,20,\,50,\,100,\,200,\,$$300,\,400,\,500,\,600,\,700,\,800,\,900,\,1000$.
Finally, the nonparametric estimate $\hat{Y}$ is a regression splines,
with knots selected manually (instead of GCV selection, because of
overfitting): we use three equispaced knots at times $t=0,\,20,\,40$.

\begin{center}
{\small{}}
\begin{table}[h]
\centering{}{\small{}}%
\begin{tabular}{|c|c|c|c|c|c|c|c|}
\hline 
{\small{}$\left(n,\sigma\right)$} &  & \multicolumn{1}{c|}{{\small{}MSE }} & {\small{}ARE } & $E_{P}\left(X_{\widehat{\theta},\widehat{x_{0}}}\right)$ & $E_{P}\left(X_{\widehat{\theta},\widehat{x_{0}},\bar{u}}\right)$ & $\Delta\left(X_{\widehat{\theta},\widehat{x_{0}}}^{unobs}\right)$ & $\Delta\left(X_{\widehat{\theta},\widehat{x_{0}},\bar{u}}^{unobs}\right)$\tabularnewline
\hline 
\hline 
\multirow{3}{*}{{\small{}$\left(100,0.002\right)$}} & {\small{}$\widehat{\theta}^{NLS}$} & 17.28 & 1.09 & 19.43 & 19.43 & 6.15 & 6.15\tabularnewline
\cline{2-8} 
 & {\small{}$\widehat{\theta}^{K}$} & 3.60 & 1.06 & 8.22 & 1.11 & 2.16 & 0.70\tabularnewline
\cline{2-8} 
 & $\widehat{\theta}^{GS}$ & 21.54 & 1.14 & 19.45 & 19.45 & 6.41 & 6.41\tabularnewline
\hline 
\multirow{3}{*}{{\small{}$\left(50,0.002\right)$}} & {\small{}$\widehat{\theta}^{NLS}$} & 57.23 & 2.19 & 31.49 & 31.49 & 12.62 & 12.62\tabularnewline
\cline{2-8} 
 & {\small{}$\widehat{\theta}^{K}$} & 21.38 & 2.05 & 9.26 & 3.49 & 2.58 & 1.38\tabularnewline
\cline{2-8} 
 & $\widehat{\theta}^{GS}$ & 58.05 & 2.11 & 44.40 & 44.39 & 12.40 & 12.40\tabularnewline
\hline 
\multirow{3}{*}{{\small{}$\left(100,0.004\right)$}} & {\small{}$\widehat{\theta}^{NLS}$} & 50.98 & 1.55 & 41.60 & 41.60 & 12.00 & 12.00\tabularnewline
\cline{2-8} 
 & {\small{}$\widehat{\theta}^{K}$} & 26.76 & 1.44 & 7.54 & 2.11 & 2.50 & 1.35\tabularnewline
\cline{2-8} 
 & $\widehat{\theta}^{GS}$ & 55.61 & 1.59 & 33.66 & 33.66 & 15.96 & 15.96\tabularnewline
\hline 
\multirow{3}{*}{{\small{}$\left(50,0.004\right)$}} & {\small{}$\widehat{\theta}^{NLS}$} & 80.03 & 2.25 & 43.13 & 43.13 & 14.75 & 14.75\tabularnewline
\cline{2-8} 
 & {\small{}$\widehat{\theta}^{K}$} & 35.87 & 2.16 & 28.69 & 3.06 & 7.79 & 1.57\tabularnewline
\cline{2-8} 
 & $\widehat{\theta}^{GS}$ & 94.30 & 2.29 & 44.59 & 44.59 & 17.86 & 17.86\tabularnewline
\hline 
\end{tabular}{\small{}\protect\caption{\label{tab:exp_res_methanation}Methanation Model}
}
\end{table}

\par\end{center}{\small \par}

The results are presented in table \ref{tab:exp_res_methanation},
that gathers the statistics about the parameter estimation accuracy,
and the prediction of the complete state, and in particular the estimation
of the hidden variable $X^{O_{s}}$.  The Kalman estimator gives
more accurate parameter estimates than Nonlinear Least Squares or
Generalized Smoothing. The dramatic difference for the MSE comes from
the estimation of $C^{O_{s}}$ that is of greater magnitude than the
other parameters, thus ARE seems more relevant for comparisons. However,
the MSE enlighten the difficulty for NLS and GS estimator to correctly
estimate $C^{Os}$; moreover, a great number of outliers for $C^{Os}$
estimates have been removed for the NLS estimation before computing
ARE and MSE. Additionaly, state estimation improves dramatically,
as the prediction error $E_{P}$ and missing state reconstruction
$\Delta$ of the Kalman estimator outperforms the two others. This
improvement is even more significative when the correction $\bar{u}$
is used. 

The difference can be partly explained by the nonlinearity in parameters
that makes their estimation more difficult. We can have estimates
that are far from the true parameter value, but in the case of the
Kalman estimator, the important errors for the parameter are balanced
by a more important correction term $\bar{u}$ that ameliorates significantly
state estimation and prediction. 

\section{Discussions}

We have considered the statistical problem of parameter and state estimation of a linear Ordinary Differential Equations as an Optimal Control problem. By doing this, we follow the lines drawn in \cite{Ramsay2007} or in the two-step approaches, that consist in defining a statistical criterion more adapted to ordinary differential equation than the likelihood. A new theory was needed in order to assess the statistical efficiency of this new estimator, that heavily relies on the Linear-Quadratic Theory. Indeed, the linear structure of the model gives a closed-form for the criterion $S$ which permits to establish the needed regularity properties for statistical analysis. An important question is to determine the conditions under which we can apply the same methodology for nonlinear ODEs. It is probably more involved but the characterization used here is directly generalized by the Pontryagin's Maximum Principle, that gives also a tractable way to solve the optimal control problem.      

An important feature of our approach is that we can cope with model misspecification, and the estimation process gives a way to evaluate the lack-of-fit thanks to the analysis of the control $\bar{u}$. Thanks to that, we are able to estimate properly the parameters, but also to do prediction and state estimation. Our experiments show that we can have better performance than the classical NLS and Generalized Smoothing and that it is beneficial to account for possible perturbation. A good choice for the trade-off hyperparameter $\lambda$ is then necessary, and our selection methodology is satisfying in practice but needs more insight to explain its influence for the selection of good predictors, in particular for hidden states. The penalty term $\Vert u \Vert_{L^2}$ is an energy related to the degrees of freedom of the predictor  $X_{\theta,x_0,u}$, but it is not related to the usual  criterion of model complexity for smoothing. 

In our analysis, we assume that the observability $O_\theta(T)$ is nonsingular, which avoids the use of the quadratic form $x_0\top Q x_0$ in the criterion. If this regularization term is used, the mechanics of the proof would be the same, but with  $Q=Q_n$ that should tend to $0$ , as $n$ tends to infinity. Nevertheless, it can have consequences on the asymptotics of the estimators, as it corresponds to cases where the loss of information is too big and needs additional information. Quite interestingly, our criterion about identifiability remains tractable, and can be relatively easy to check in practice.

\newpage{}

\appendix

\part*{Appendix: State and Parameter Inference for Partially Observed ODE}

\section{Derivation of deterministic Kalman filter estimator using Linear-Quadratic
Theory \label{sec: DerivationKalman}}

In this section we describe more precisely how the deterministic Kalman
Filter is constructed (see \cite{Sontag1998} for an introduction),
it involves two steps: 
\begin{enumerate}
\item For a given initial condition $x_{0}$ we determine the minimum
cost expression thanks to theorem \ref{thm:LQ_existence_unicity}
(subsection \ref{sub:x0-fixed}). 
\item  inimal cost is a quadratic form w.r.t final
condition and hence it exists an unique final condition (and hence
a unique initial condition by ODE solution uniqueness) minimizing
this minimal cost (subsection \ref{sub:Optimal_xo}). 
\end{enumerate}

\subsection{\label{sub:x0-fixed}$x_{0}$ fixed, minimal cost expression}
To derive a closed form for the minimal cost
for a given $x_{0}$. For that we define the reverse time functions:
\begin{equation}
\begin{array}{l}
\widetilde{X}_{\theta,x_{0},u}(t)=X_{\theta,x_{0},u}(T\text{\textminus}t),\,\widetilde{A_{\theta}}(t)=-A_{\theta}(T\text{\textminus}t)\\
\widetilde{r_{\theta}}(t)=-r_{\theta}(T-t),\,\widetilde{B}(t)=-B(T\text{\textminus}t),\,\widetilde{Y}(t)=\widehat{Y}(T\text{\textminus}t)
\end{array}\label{eq:gen_writing_function}
\end{equation}

And by denoting 
\begin{equation}
\widetilde{W_{1}}=\left(\begin{array}{cc}
C^{T}C & -C^{T}\widetilde{Y}\\
-\widetilde{Y}^{T}C & \widetilde{Y}^{T}\widetilde{Y}
\end{array}\right),\, Q_{1}=\left(\begin{array}{cc}
Q & 0\\
0 & 0
\end{array}\right),Z_{\theta,x_{0},u}=\left(\begin{array}{c}
\widetilde{X}_{\theta,x_{0},u}\\
1
\end{array}\right)\label{eq:gen_writing_weight_matrix}
\end{equation}
we can rewrite our cost under the form: 
\begin{equation}
\begin{array}{lll}
C\left(\hat{Y};x_{0},u,\theta,\lambda\right) & = & \widetilde{C}(\hat{Y};Z_{\theta,x_{0},u}(T),u,\theta,\lambda)\\
 & := & Z_{\theta,x_{0},u}(T)^{T}Q_{1}Z_{\theta,x_{0},u}(T)+\int_{0}^{T}\left\Vert Z_{\theta,x_{0},u}(t)\right\Vert _{\widetilde{W}_{1}}^{2}dt+\lambda\int_{0}^{T}\left\Vert u(t)\right\Vert _{2}^{2}dt
\end{array}\label{eq:generalized_cost_function}
\end{equation}
The issue here is to minimize (\ref{eq:generalized_cost_function})
in a non-finite dimensional space but thanks to results coming from
Optimal control and Riccati theory we know that for a given $\theta$
and a given $Z(T)$ it exists a unique control $\bar{u}$ such that
\[
\widetilde{C}(\hat{Y};Z(T),\bar{u},\theta,\lambda)=\min_{u\in L^{2}}\widetilde{C}(\hat{Y};Z(T),u,\theta,\lambda)
\]
It is the main point of the following theorem for a given $\theta$
and $Z(T)$ it ensures the existence, the uniqueness of this control
$\bar{u}$ and gives a closed form for both $\bar{u}$ and $\widetilde{C}(\hat{Y};Z(T),\bar{u},\theta,\lambda)$. 
\begin{thm}
\label{thm:LQ_existence_unicity}Let $A\in L^{2}(\left[0,\, T\right],\mathbb{R}^{d\times d})$
and $B\in L^{2}(\left[0,\, T\right],\mathbb{R}^{d\times d})$ We consider
$z_{u}$ the solution of the following ODE: 
\[
\dot{z_{u}}(t)=A(t)z_{u}(t)+B(t)u(t),\: z(t_{0})=z_{0}
\]

and the cost: 
\[
C(t_{0},u,U)=z_{u}(T)^{T}Qz_{u}(T)+\int_{t_{0}}^{T}z_{u}(t)^{T}W(t)z_{u}(t)+u(t)^{T}U(t)u(t)dt
\]
with $Q$ positive,$W\in L^{\infty}(\left[0,\, T\right],\mathbb{R}^{d\times d})$
positive matrix for all $t\in\left[0,\, T\right]$ and $U(t)$ definite
positive matrix for all $t\in\left[0,\, T\right]$ respecting the
coercivity condition: 
\[
\exists\alpha>0\, s.t\,\forall u\in L^{2}(\left[0,\, T\right],\mathbb{R}^{d})\,:\,\int_{0}^{T}u(t)^{T}U(t)u(t)dt\geq\alpha\int_{0}^{T}\left\Vert u(t)\right\Vert _{2}^{2}dt
\]

For a given $t_{0}$ we want to minimize the cost $C(t_{0},u,U)$
on $L^{2}(\left[0,\, T\right],\mathbb{R}^{d})$.

We know it exists an unique control $\bar{u}$, called optimal control,
associated to the trajectory $z_{\bar{u}}$, called optimal trajectory,
minimizing this cost. Moreover $\bar{u}$ is under the closed-feedback
loop form $\overline{u}(t)=U^{-1}(t)E(t)B(t)z_{\overline{u}}(t)$
where $E$ is the matricial solution of the ODE:\textup{ 
\[
\begin{array}{l}
\dot{E}(t)=W(t)-A(t)^{t}E(t)-E(t)A(t)-E(t)B(t)U(t)^{-1}B(t)^{T}E(t)\\
E(T)=-Q
\end{array}
\]
}this ODE its called Ricatti equation associated to LQ problem composed
of the cost $C(t_{0},u,U)$ and the ODE $\dot{z_{u}}(t)=A(t)z_{u}(t)+B(t)u(t),\: z(t_{0})=z_{0}$.
Moreover $E(t)$ is symetric and the minimal cost is equal to: $C(t_{0},\overline{u},U)=-z_{0}^{T}E(t_{0})z_{0}$. 
\end{thm}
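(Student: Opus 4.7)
The plan is to proceed in the classical framework of Linear-Quadratic optimal control, combining an abstract existence argument in $L^2$ with an explicit construction via the value function and Riccati equation. First, I would observe that $u \mapsto z_u$ is an affine continuous map from $L^2([t_0,T],\mathbb{R}^d)$ into $C([t_0,T],\mathbb{R}^d)$ (by Duhamel's formula), so that $u \mapsto C(t_0,u,U)$ is a continuous quadratic functional on $L^2$. The coercivity hypothesis on $U$ together with the nonnegativity of the terminal and integral state terms imply that $C$ is strictly convex and coercive, and the direct method of the calculus of variations then yields existence and uniqueness of a minimizer $\bar{u}$ in $L^2$.

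To derive the closed-form expression, I would use dynamic programming and introduce the value function $V(t,z) = \inf_u C(t,u,U \mid z_u(t)=z)$. Motivated by the quadratic structure of the problem, I would postulate the ansatz $V(t,z) = -z^T E(t) z$, plug into the Hamilton--Jacobi--Bellman equation, and perform the pointwise minimization in $u$, which is unconstrained and strictly convex thanks to $U > 0$. The minimizer is $\bar{u}(t) = U(t)^{-1} B(t)^T E(t) z$, and substituting back produces exactly the matrix Riccati equation stated in the theorem, together with the terminal condition $E(T) = -Q$ inherited from the boundary term.

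The main technical obstacle is to ensure that the Riccati equation admits a solution on the \emph{whole} interval $[t_0,T]$, since its right-hand side is quadratic in $E$ and could in principle produce a finite-time blowup. Local existence and uniqueness are immediate from Cauchy--Lipschitz, and I would extend to the full interval by a soft bound: for any $z_0$, the infimum $V(t_0,z_0)$ is finite (it is bounded above, e.g., by the cost at $u \equiv 0$) and bounded below by $0$; pairing this with the verification argument below, which gives $-z^T E(t) z = V(t,z)$ on the maximal interval, one deduces an a priori bound on $E(t)$ that forbids blowup.

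The verification step replaces the formal HJB calculation by a rigorous completion-of-squares identity. For any admissible $u$ with trajectory $z_u$, differentiating $t \mapsto z_u(t)^T E(t) z_u(t)$, using the Riccati equation and the ODE satisfied by $z_u$, and integrating on $[t_0,T]$ yields
\[
C(t_0,u,U) \;=\; -z_0^T E(t_0) z_0 + \int_{t_0}^{T} \bigl( u(t) - \bar{u}_{u}(t) \bigr)^T U(t) \bigl( u(t) - \bar{u}_{u}(t) \bigr) \, dt,
\]
where $\bar{u}_{u}(t) := U(t)^{-1} B(t)^T E(t) z_u(t)$. Since the integrand is nonnegative and vanishes only when $u$ coincides with the feedback control, this identity gives at once the optimality of $\bar{u}$, the formula $C(t_0,\bar{u},U) = -z_0^T E(t_0) z_0$, and, by strict positivity of $U$, the uniqueness of the minimizer. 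Finally, symmetry of $E(t)$ follows because $t \mapsto E(t)^T$ solves the same Riccati equation with the symmetric terminal datum $-Q$, and uniqueness forces $E(t) = E(t)^T$.
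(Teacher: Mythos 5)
Your argument is essentially the standard textbook proof of the LQ theorem, and it is correct; note, however, that the paper itself gives no proof of this statement at all --- it is invoked as the classical result of Linear-Quadratic/Riccati theory (with a pointer to Sontag's book), so there is nothing in the paper to compare against step by step. What your write-up adds is precisely the missing argument, organized in the usual way: (i) existence/uniqueness of a minimizer by strict convexity and coercivity of the quadratic functional on $L^{2}$, (ii) the Riccati equation and feedback law obtained from the HJB ansatz, (iii) global existence of $E$ on $[t_{0},T]$ via the a priori bound $0\leq -z^{T}E(t)z\leq$ (cost of the zero control), and (iv) the completion-of-squares identity as the rigorous verification step, which by itself already yields existence, uniqueness, optimality of the feedback control and the formula $C(t_{0},\bar{u},U)=-z_{0}^{T}E(t_{0})z_{0}$ (so your step (i) is in fact redundant, though harmless). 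Three small points deserve attention. First, the completion-of-squares identity as you wrote it, with the correction term $\left(u-U^{-1}B^{T}Ez_{u}\right)^{T}U\left(u-U^{-1}B^{T}Ez_{u}\right)$, requires $E(t)=E(t)^{T}$; since you prove symmetry only at the end, you should move that argument (transpose the Riccati equation and use uniqueness of the terminal value problem, exactly as you say) before the verification step --- there is no circularity, it is only a matter of ordering. Second, since $A$ and $B$ are only $L^{2}$ in time (and the coercivity hypothesis gives $U(t)\geq\alpha I$ a.e., hence $BU^{-1}B^{T}\in L^{1}$), local existence and uniqueness for the Riccati equation and for the closed-loop trajectory are of Carath\'eodory type rather than classical Cauchy--Lipschitz; the conclusion is unchanged but the invocation should be adjusted. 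Third, your feedback formula $\bar{u}(t)=U(t)^{-1}B(t)^{T}E(t)z_{\bar{u}}(t)$ is the one consistent with the stated Riccati equation; the expression $U^{-1}(t)E(t)B(t)z_{\overline{u}}(t)$ appearing in the theorem as printed is a typographical slip.
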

By identifying in the last theorem $A$ with $\left(\begin{array}{cc}
\widetilde{A_{\theta}}(t) & \widetilde{r_{\theta}}(t)\\
0 & 0
\end{array}\right)$, $Q$ with $Q_{1}$, $W$ with $W_{1}$ and $U$ with $\lambda I_{d}$
we obtain the corresponding minimal cost reached for the optimal cost
$\overline{u}$ for a given initial condition $x_{0}$
\begin{equation}
\widetilde{C}(\hat{Y};Z_{\theta,x_{0},\overline{u}}(T),\overline{u},\theta,\lambda)=-\widetilde{X}_{\theta,x_{0},\overline{u}}(0)^{T}\widetilde{E}_{\theta}(0)\widetilde{X}_{\theta,x_{0},\overline{u}}(0)\text{\textminus}2\widetilde{X}_{\theta,x_{0},\overline{u}}(0)^{T}\widetilde{h}_{\theta}(0)\text{\textminus}\widetilde{\alpha_{\theta}}(0).\label{eq:minimal_cost_x0_fixed}
\end{equation}
with the associated ODE:
\[
\begin{array}{l}
\dot{\widetilde{E_{\theta}}}(t)=C^{T}C-\widetilde{A}_{\theta}\widetilde{E_{\theta}}-\widetilde{E_{\theta}}\widetilde{A_{\theta}}-\frac{1}{\lambda}\widetilde{E_{\theta}}^{2},\:\widetilde{E_{\theta}}(T)=-Q\\
\dot{\widetilde{h_{\theta}}}(t)=-C^{T}\widetilde{Y}-\widetilde{A_{\theta}}^{T}\widetilde{h_{\theta}}-\widetilde{E_{\theta}}\widetilde{r_{\theta}}-\frac{1}{\lambda}\widetilde{E_{\theta}}\widetilde{h_{\theta}},\:\widetilde{h}_{\theta}(T)=0\\
\dot{\widetilde{\alpha_{\theta}}}=\widetilde{Y}^{T}\widetilde{Y}-2\tilde{r_{\theta}}^{T}\tilde{h_{\theta}}-\frac{1}{\lambda}\widetilde{h}_{\theta}^{T}\widetilde{h_{\theta}},\:\widetilde{\alpha_{\theta}}(T)=0
\end{array}
\]
To be able to apply Theorem \ref{thm:LQ_existence_unicity} we need
$W_{1}$ to belong to $L^{\infty}(\left[0,\, T\right],\mathbb{R}^{d\times d})$,
that is why we require $\widehat{Y}\in L^{\infty}(\left[0,\, T\right],\mathbb{R}^{d^{'}})$.

\subsection{\label{sub:Optimal_xo}Optimal $x_{0}$ selection}

For a given $x_{0}$ we have obtained the minimal cost expression
w.r.t control. How can we choose $x_{0}$ in order to minimize this
minimal cost?

We recall that $\widetilde{X}_{\theta,x_{0},\overline{u}}(0)=X_{\theta,x_{0},\overline{u}}(T)$
so $\widetilde{C}(\hat{Y};Z_{\theta,x_{0},\overline{u}}(T),\overline{u},\theta,\lambda)$
defined by (\ref{eq:minimal_cost_x0_fixed}) is a quadratic form w.r.t
the final condition ($\widetilde{\alpha_{\theta}}(0)$ do not depend
on $\widetilde{X}_{\theta,x_{0},\overline{u}}(0)$). Since it makes
no difference to minimize $\widetilde{C}(\hat{Y};Z_{\theta,x_{0},\overline{u}}(T),\overline{u},\theta,\lambda)$
w.r.t the final condition instead of $x_{0}$ because of unicity of
ODE solution we look for the final condition minimizing (\ref{eq:minimal_cost_x0_fixed}).
Hence if $\widetilde{E_{\theta}}(0)$ is invertible the minimum is
reached for 
\begin{equation}
-\widetilde{E_{\theta}}(0)^{-1}\widetilde{h}_{\theta}(0)\label{eq:optimal_CF}
\end{equation}
we denote $\widehat{x_{0}}$ the unique initial condition such that
$X_{\theta,\widehat{x_{0}},\overline{u}}(T)=-\widetilde{E_{\theta}}(0)^{-1}\widetilde{h}_{\theta}(0)$.

In that case the minimal cost is equal to:
\[
\widetilde{C}(\hat{Y};Z_{\theta,\widehat{x_{0}},\overline{u}}(T),\overline{u},\theta,\lambda)=\widetilde{h_{\theta}}(0)^{T}\widetilde{E_{\theta}}(0)^{-1}\widetilde{h}_{\theta}(0)\text{\textminus}\widetilde{\alpha}_{\theta}(0)
\]
and for a given parameter $\theta$ we have: 
\[
\begin{array}{lll}
S\left(\hat{Y};\theta,\lambda\right) & = & \min_{x_{0}\in\mathbb{R}^{d}}\left(\min_{u\in L^{2}}C(\hat{Y};x_{0},u,\theta,\lambda)\right)\\
 & = & \widetilde{h_{\theta}}(0)^{T}\widetilde{E_{\theta}}(0)^{-1}\widetilde{h}_{\theta}(0)\text{\textminus}\widetilde{\alpha}_{\theta}(0)\\
 & = & \widetilde{h_{\theta}}(0)^{T}\widetilde{E_{\theta}}(0)^{-1}\widetilde{h}_{\theta}(0)+\int_{0}^{T}\left(\widetilde{Y}(t)^{T}\widetilde{Y}(t)-2\tilde{r_{\theta}}(t)^{T}\widetilde{h_{\theta}}(t)-\frac{1}{\lambda}\widetilde{h_{\theta}}(t)^{T}\widetilde{h_{\theta}}(t)\right)dt
\end{array}
\]

\subsection{Minimal cost expression}

By posing $E_{\theta}(t)=-\widetilde{E_{\theta}}(T-t),\:\widehat{h_{\theta}}(t)=-\widetilde{h_{\theta}}(T-t)$
we define our estimator as: 
\[
\widehat{\theta}=\arg\min_{\theta\in\Theta}S\left(\hat{Y};\theta,\lambda\right)
\]

with the functional criteria: 
\begin{equation}
\begin{array}{lll}
S\left(Y;\theta,\lambda\right) & = & -h_{\theta}(T,Y)^{T}E_{\theta}(T)^{-1}h_{\theta}(T,Y)\\
 & + & \int_{0}^{T}\left(Y(t)^{T}Y(t)-2r_{\theta}(t)^{T}h_{\theta}(t,Y)-\frac{1}{\lambda}h_{\theta}(t,Y)^{T}h_{\theta}(t,Y)\right)dt
\end{array}\label{eq:criteria_function-1}
\end{equation}
the associated ODE: 
\begin{equation}
\begin{array}{l}
\dot{E_{\theta}}(t)=C^{T}C-A_{\theta}^{T}E_{\theta}-E_{\theta}A_{\theta}-\frac{1}{\lambda}E_{\theta}^{2},\\
\dot{h_{\theta}}(t,Y)=-\alpha_{\theta}(t)h_{\theta}(t,Y)-\beta_{\theta}(t,Y)\\
\left(E_{\theta}(0),h_{\theta}(0,Y)\right)=\left(Q,0\right)
\end{array}\label{eq:Riccati_ode-1}
\end{equation}
and the functions $\alpha$ and $\beta$ defined by: 
\[
\left\{ \begin{array}{l}
\alpha_{\theta}(t)=\left(A_{\theta}(t)^{T}+\frac{E_{\theta}(t)}{\lambda}\right)\\
\beta_{\theta}(t,Y)=C^{T}Y+E_{\theta}r_{\theta}
\end{array}\right.
\]
Hence we have obtained the expression for the optimal control, the
minimal cost and the final state value presented in Theorem \ref{thm:kalman_existence_unicity}.

\section{State Estimation: Controls of the variations of the adjoint variables}
\begin{lem}
\label{prop:E_discrepancy_wrt_lambda}We have $\left\Vert E_{\theta}(t)-E_{\theta'}(t)\right\Vert _{2}\leq K_{1}\overline{E_{\lambda}}e^{L_{1}\frac{\overline{E_{\lambda}}}{\lambda}}\left\Vert \theta-\theta^{'}\right\Vert $
by denoting $\overline{E_{\lambda}}=\sup_{t,\theta\in\left[0,\, T\right]\times\Theta}\left\Vert E_{\theta}(t)\right\Vert _{2}$
and $\overline{E_{\lambda}}\leq K_{2}e^{\frac{L_{1}}{\lambda}}$\end{lem}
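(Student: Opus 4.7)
The plan is to handle both inequalities by massaging the matrix Riccati ODE and closing it with Grönwall's lemma. For the first bound, I would introduce the difference $D(t) = E_{\theta}(t)-E_{\theta'}(t)$ and subtract the two Riccati equations. Using the algebraic identities
\[
A_\theta^T E_\theta - A_{\theta'}^T E_{\theta'} = (A_\theta - A_{\theta'})^T E_\theta + A_{\theta'}^T D,\quad E_\theta A_\theta - E_{\theta'} A_{\theta'} = E_\theta (A_\theta - A_{\theta'}) + D A_{\theta'},
\]
together with $E_\theta^2 - E_{\theta'}^2 = E_\theta D + D E_{\theta'}$, I get a linear (in $D$) matrix ODE of the form
\[
\dot D = -(A_\theta - A_{\theta'})^T E_\theta - E_\theta (A_\theta - A_{\theta'}) - A_{\theta'}^T D - D A_{\theta'} - \tfrac{1}{\lambda}(E_\theta D + D E_{\theta'}),
\]
with the crucial initial condition $D(0)=Q-Q=0$.

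Next I would take the Frobenius norm, using the submultiplicativity $\|MN\|_2\le \|M\|_2\|N\|_2$ and the definition of $\overline{E_\lambda}$, to obtain the integral inequality
\[
\|D(t)\|_2 \;\le\; 2\,\overline{E_\lambda}\int_0^t \|A_\theta(s)-A_{\theta'}(s)\|_2\,ds \;+\; \int_0^t\!\Bigl(2\,\overline{A}+\tfrac{2\,\overline{E_\lambda}}{\lambda}\Bigr)\|D(s)\|_2\,ds.
\]
Condition C4 plus compactness of $\Theta$ gives $\|A_\theta - A_{\theta'}\|_2 \le L_A\|\theta-\theta'\|$ uniformly in $t$, so the forcing term is $\le 2T L_A \overline{E_\lambda}\|\theta-\theta'\|$. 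Grönwall's lemma then yields
\[
\|D(t)\|_2 \;\le\; 2T L_A\,\overline{E_\lambda}\,\|\theta-\theta'\|\,\exp\!\Bigl(2\overline{A}\,T + \tfrac{2\overline{E_\lambda} T}{\lambda}\Bigr),
\]
which is the claimed bound $K_1 \overline{E_\lambda}\,e^{L_1 \overline{E_\lambda}/\lambda}\|\theta-\theta'\|$ after absorbing the $\theta$-independent factor $e^{2\overline{A}T}$ into $K_1$.

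For the second inequality, I would take the Frobenius norm of the Riccati equation itself:
\[
\|\dot{E_\theta}(t)\|_2 \;\le\; \|C\|_2^2 + 2\,\overline{A}\,\|E_\theta(t)\|_2 + \tfrac{1}{\lambda}\|E_\theta(t)\|_2^{2},
\]
and compare $f(t):=\|E_\theta(t)\|_2$ to the solution of the scalar Riccati ODE $g'=\|C\|^2 + 2\overline{A}g + g^2/\lambda$ with $g(0)=\|Q\|_2$. Solving this scalar Riccati explicitly (or bounding it via the standard linearization trick $g=-\lambda w'/w$) gives a closed-form solution on $[0,T]$ that is dominated by $K_2\,e^{L_1/\lambda}$ for constants depending only on $\overline{A}$, $\|C\|$, $\|Q\|$ and $T$.

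The main technical nuisance is the nonlinear term $E_\theta^2/\lambda$ in the Riccati equation for the \emph{second} bound, since standard Grönwall requires linear dependence. The clean way around it is the Riccati-comparison argument above; alternatively, one can use the positive-semidefiniteness of $E_\theta$ (coming from LQ theory and $Q\succeq 0$) to discard the beneficial $-E_\theta^2/\lambda$ term after taking the trace, which yields an even sharper $\lambda$-independent bound. For the first inequality, the subtraction is routine and the only thing to be careful about is that $D(0)=0$, so no constant enters the exponential factor beyond the Grönwall kernel.
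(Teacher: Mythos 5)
Your handling of the first inequality is essentially the paper's own argument: subtract the two Riccati equations, exploit $D(0)=Q-Q=0$, bound the bilinear and quadratic terms through $\overline{E_\lambda}$ and $\overline{A}$, and close with Gr\"onwall together with the Lipschitz dependence of $\theta\mapsto A_\theta$ coming from C4 and compactness (the paper keeps $\Vert A_\theta-A_{\theta'}\Vert_{L^2}$ via Cauchy--Schwarz before invoking regularity in $\theta$, which is only a cosmetic difference from your pointwise Lipschitz bound). That part is fine.

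The second inequality is where your main argument has a genuine gap. The scalar majorant $g'=\Vert C\Vert_2^2+2\overline{A}\,g+g^2/\lambda$ has its quadratic term with a \emph{positive} sign, because taking norms destroys the stabilizing sign of $-E_\theta^2/\lambda$; consequently it blows up in finite time. For instance with $\overline{A}=0$ and $g(0)=0$ the solution is $g(t)=\Vert C\Vert_2\sqrt{\lambda}\,\tan\bigl(t\,\Vert C\Vert_2/\sqrt{\lambda}\bigr)$, which escapes to $+\infty$ at time $\tfrac{\pi}{2}\sqrt{\lambda}/\Vert C\Vert_2$, strictly before $T$ as soon as $\lambda$ is small. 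Past the blow-up time the comparison principle gives no information at all, so this route cannot produce a bound of the form $K_2e^{L_1/\lambda}$ valid on the whole of $[0,T]$ for every $\lambda>0$; the claim that the closed-form solution is ``dominated by $K_2e^{L_1/\lambda}$'' is false for small $\lambda$. (The paper does not use a comparison ODE here: it integrates the Riccati equation and applies Gr\"onwall with the kernel $2\sqrt{d}\,(\overline{A}+\overline{E_\lambda}/\lambda)$, arriving at the self-referential inequality $\overline{E_\lambda}\le \vert C\vert^2 e^{2\sqrt{d}\,\overline{A}T}e^{2\sqrt{d}\,\overline{E_\lambda}/\lambda}$, which is of the announced exponential form.) Your parenthetical alternative --- use $E_\theta(t)\succeq 0$ to discard the $-E_\theta^2/\lambda$ term and compare with the linear Lyapunov flow --- is in fact the sound way to obtain a bound (even a $\lambda$-independent one, which trivially implies the statement), but as written it is an unproved aside: you would need to justify positive semidefiniteness, e.g.\ from the appendix's LQ construction, where $-\widetilde{E}_\theta(t)$ is the matrix of a nonnegative minimal cost and $E_\theta(t)=-\widetilde{E}_\theta(T-t)$, and then carry out the trace or Lyapunov comparison you sketch. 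Promote that aside to the main argument and the lemma follows; as it stands, your stated proof of $\overline{E_\lambda}\le K_2e^{L_1/\lambda}$ does not go through.
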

\begin{proof}
Thanks to condition 3 $\forall\theta\in\Theta\: t\longmapsto A_{\theta}(t)$
is continuous on $\left[0\,,\, T\right]$ and $\forall\theta\in\Theta\: t\longmapsto E_{\theta}(t)$
is defined on $\left[0\,,\, T\right]$ and obviously continuous on
the same interval as an ODE solution.

$\forall(\theta,\theta')\in\Theta^{2}$ we have:

\[
\begin{array}{lll}
\dot{E_{\theta}}(t)-\dot{E_{\theta'}}(t) & = & A_{\theta'}(t)^{T}E_{\theta'}(t)-A_{\theta}(t)^{T}E_{\theta}(t)\\
 & + & E_{\theta'}(t)A_{\theta'}(t)-E_{\theta}(t)A_{\theta}(t)\\
 & + & \frac{1}{\lambda}\left(E_{\theta'}^{2}(t)-E_{\theta}^{2}(t)\right)
\end{array}
\]
and by integrating between $0$ and $t$, taking the norm gives us:
\[
\begin{array}{lll}
\left\Vert E_{\theta}(t)-E_{\theta'}(t)\right\Vert _{2} & \leq & \int_{0}^{t}\left\Vert A_{\theta'}(s)^{T}E_{\theta'}(s)-A_{\theta}(s)^{T}E_{\theta}(s)\right\Vert _{2}ds\\
 & + & \int_{0}^{t}\left\Vert E_{\theta'}(s)A_{\theta'}(s)-E_{\theta}(s)A_{\theta}(s)\right\Vert _{2}ds\\
 & + & \frac{1}{\lambda}\int_{0}^{t}\left\Vert E_{\theta'}^{2}(s)-E_{\theta}^{2}(s)\right\Vert _{2}ds
\end{array}
\]
and:
\[
\begin{array}{lll}
\int_{0}^{t}\left\Vert E_{\theta}^{2}(s)-E_{\theta'}^{2}(s)\right\Vert _{2}ds & \leq & \sqrt{d}\int_{0}^{t}\left\Vert E_{\theta}(s)\right\Vert _{2}\left\Vert E_{\theta}(s)-E_{\theta'}(s)\right\Vert _{2}ds\\
 & + & \sqrt{d}\int_{0}^{t}\left\Vert E_{\theta'}(s)\right\Vert _{2}\left\Vert E_{\theta}(s)-E_{\theta'}(s)\right\Vert _{2}ds\\
 & \leq & 2\sqrt{d}\overline{E_{\lambda}}\int_{0}^{t}\left\Vert E_{\theta}(s)-E_{\theta'}(s)\right\Vert _{2}ds
\end{array}
\]
by denoting $\overline{E_{\lambda}}=\sup_{t,\theta\in\left[0,\, T\right]\times\Theta}\left\Vert E_{\theta}(t)\right\Vert _{2}$
.

Now we bound the remaining term:
\[
\begin{array}{lll}
\int_{0}^{t}\left\Vert A_{\theta}(s)^{T}E_{\theta}(s)-A_{\theta'}(s)^{T}E_{\theta'}(s)\right\Vert _{2}ds & \leq & \sqrt{d}\int_{0}^{t}\left\Vert A_{\theta}(s)\right\Vert _{2}\left\Vert E_{\theta}(s)-E_{\theta'}(s)\right\Vert _{2}ds\\
 & + & \sqrt{d}\int_{0}^{t}\left\Vert E_{\theta'}(s)\right\Vert _{2}\left\Vert A_{\theta}(s)-A_{\theta'}(s)\right\Vert _{2}ds\\
 & \leq & \sqrt{d}\overline{A}\int_{0}^{t}\left\Vert E_{\theta}(s)-E_{\theta'}(s)\right\Vert _{2}\\
 & + & \sqrt{d}\overline{E_{\lambda}}\int_{0}^{t}\left\Vert A_{\theta}(s)-A_{\theta'}(s)\right\Vert _{2}d
\end{array}
\]

Using these bounds in the main inequality drive us to the following
inequality:

\[
\begin{array}{lll}
\left\Vert E_{\theta}(t)-E_{\theta'}(t)\right\Vert _{2} & \leq & 2\sqrt{d}(\frac{\overline{E_{\lambda}}}{\lambda}+\overline{A})\int_{0}^{t}\left\Vert E_{\theta}(s)-E_{\theta'}(s)\right\Vert _{2}ds\\
 & + & 2\sqrt{d}\overline{E_{\lambda}}\int_{0}^{t}\left\Vert A_{\theta}(s)-A_{\theta'}(s)\right\Vert _{2}ds
\end{array}
\]
then Gronwall's lemma gives us
\[
\left\Vert E_{\theta}(t)-E_{\theta'}(t)\right\Vert _{2}\leq2\sqrt{d}\overline{E_{\lambda}}\int_{0}^{T}\left\Vert A_{\theta}(s)-A_{\theta'}(s)\right\Vert _{2}dt.e^{\int_{0}^{t}2\sqrt{d}(\frac{\overline{E_{\lambda}}}{\lambda}+\overline{A})dt}
\]
and we obtain thanks to Cauchy-Schwarz inequality:
\[
\left\Vert E_{\theta}(t)-E_{\theta'}(t)\right\Vert _{2}\leq2\sqrt{dT}\overline{E_{\lambda}}\left\Vert A_{\theta}-A_{\theta'}\right\Vert _{L^{2}}e^{2\sqrt{d}(\frac{\overline{E_{\lambda}}}{\lambda}+\overline{A})T}
\]
which gives us the proper results using $\theta\longmapsto A_{\theta}$
continuity.

The bound for $\overline{E_{\lambda}}$ is obtained by a direct application
of Gronwall's lemma:

\[
\begin{array}{lll}
\left\Vert E_{\theta}(t)\right\Vert _{2} & \leq & \left|C\right|^{2}+\int_{0}^{t}\left\Vert A_{\theta'}(s)^{T}E_{\theta'}(s)+A_{\theta}(s)^{T}E_{\theta}(s)+\frac{1}{\lambda}E_{\theta}(s)^{T}E_{\theta}(s)\right\Vert _{2}ds\\
 & \leq & \left|C\right|^{2}+2\sqrt{d}\int_{0}^{t}\left(\left\Vert A_{\theta'}(s)\right\Vert _{2}+\frac{\overline{E_{\lambda}}}{\lambda}\right)\left\Vert E_{\theta'}(s)\right\Vert _{2}ds\\
 & \leq & \left|C\right|^{2}e^{2\sqrt{d}\left(\overline{A}+\frac{\overline{E_{\lambda}}}{\lambda}\right)T}
\end{array}
\]
hence 
\[
\overline{E_{\lambda}}\leq\left|C\right|^{2}e^{2\sqrt{d}\overline{A}T}e^{2\sqrt{d}\frac{\overline{E_{\lambda}}}{\lambda}}
\]
\end{proof}
\begin{lem}
\label{lem:h_discrepancy_lambda}Assuming condition C3 and C4 we know
it exists constants $K_{i}$ such that: \textup{ 
\[
\begin{array}{lll}
\left\Vert h_{\theta}(t,Y)-h_{\theta^{'}}(t,Y')\right\Vert _{2} & \leq & K_{6}e^{L_{1}\frac{\overline{E_{\lambda}}}{\lambda}}\left\Vert Y-Y'\right\Vert _{L^{2}}\\
 & + & \left(K_{7}+K_{8}\overline{E_{\lambda}}\right)\left(K_{4}+\frac{K_{5}}{\lambda}\overline{E_{\lambda}}e^{L_{1}\frac{\overline{E_{\lambda}}}{\lambda}}\right)e^{2L_{1}\frac{\overline{E_{\lambda}}}{\lambda}}\left\Vert \theta-\theta^{'}\right\Vert \\
 & + & \left(K_{9}e^{2L_{1}\frac{\overline{E_{\lambda}}}{\lambda}}+K_{10}e^{L_{1}\frac{\overline{E_{\lambda}}}{\lambda}}\right)\overline{E_{\lambda}}\left\Vert \theta-\theta^{'}\right\Vert 
\end{array}
\]
and 
\[
\left\Vert h_{\theta}(t,Y)\right\Vert _{2}\leq\sqrt{T}d^{2}\left\Vert C\right\Vert _{2}e^{\sqrt{d}\left(\overline{A}+\frac{\overline{E_{\lambda}}}{\lambda}\right)T}\left\Vert Y\right\Vert _{L^{2}}+\sqrt{d}\overline{E_{\lambda}}\overline{r_{\theta}}
\]
}\end{lem}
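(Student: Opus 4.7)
The plan is to derive both inequalities by integrating the defining ODE for $h_{\theta}$, decomposing differences via standard ``add-subtract'' tricks, and then applying Gronwall's lemma. The starting point is the integral form of the ODE in \eqref{eq:Riccati_ode}, namely
\[
h_{\theta}(t,Y) \;=\; -\int_{0}^{t}\alpha_{\theta}(s)h_{\theta}(s,Y)\,ds \;-\; \int_{0}^{t}\beta_{\theta}(s,Y)\,ds,
\]
with initial value $h_{\theta}(0,Y)=0$. The bound on $\|h_{\theta}(t,Y)\|_2$ is the warm-up: taking norms, using $\|\alpha_{\theta}(s)\|_2 \le \sqrt{d}(\overline{A}+\overline{E_{\lambda}}/\lambda)$ and $\|\beta_{\theta}(s,Y)\|_2 \le \sqrt{d}\|C\|_2\|Y(s)\|_2 + \sqrt{d}\,\overline{E_{\lambda}}\,\|r_{\theta}(s)\|_2$, then invoking Cauchy--Schwarz on the $Y$ and $r_{\theta}$ terms and Gronwall's lemma, delivers the second stated inequality with the claimed exponential factor in $\overline{E_{\lambda}}/\lambda$.

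For the main inequality, I would write the integrated difference
\[
h_{\theta}(t,Y)-h_{\theta'}(t,Y') \;=\; -\int_{0}^{t}\bigl[\alpha_{\theta}h_{\theta}-\alpha_{\theta'}h_{\theta'}\bigr](s)\,ds \;-\; \int_{0}^{t}\bigl[\beta_{\theta}(s,Y)-\beta_{\theta'}(s,Y')\bigr]\,ds
\]
and then split $\alpha_{\theta}h_{\theta}-\alpha_{\theta'}h_{\theta'}=(\alpha_{\theta}-\alpha_{\theta'})h_{\theta}+\alpha_{\theta'}(h_{\theta}-h_{\theta'})$, together with
\[
\beta_{\theta}(s,Y)-\beta_{\theta'}(s,Y') \;=\; C^{T}(Y-Y') \;+\; (E_{\theta}-E_{\theta'})r_{\theta} \;+\; E_{\theta'}(r_{\theta}-r_{\theta'}).
\]
Each of the ``constant'' factors is controlled by material that has already been established: $\|A_{\theta}-A_{\theta'}\|_2$ and $\|r_{\theta}-r_{\theta'}\|_2$ by the $C^{1}$ regularity of conditions C3--C4 (hence Lipschitz in $\theta$ with constants $K_{4}$, $K_{5}$ say), while $\|E_{\theta}(s)-E_{\theta'}(s)\|_2$ is controlled by Lemma \ref{prop:E_discrepancy_wrt_lambda} through the factor $\overline{E_{\lambda}}e^{L_{1}\overline{E_{\lambda}}/\lambda}\|\theta-\theta'\|$. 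This yields
\[
\|h_{\theta}(t,Y)-h_{\theta'}(t,Y')\|_2 \;\le\; M_{1}(\lambda)\,\|\theta-\theta'\| + M_{2}\,\|Y-Y'\|_{L^{2}} + \int_{0}^{t}\|\alpha_{\theta'}(s)\|_2\,\|h_{\theta}-h_{\theta'}\|_2\,ds,
\]
where $M_{1}(\lambda)$ collects the $\theta$-perturbation contributions (each of which already carries its own exponential factor $e^{L_{1}\overline{E_{\lambda}}/\lambda}$ coming from Lemma \ref{prop:E_discrepancy_wrt_lambda} and from the bound on $\|h_{\theta}\|_2$ just proved) and $M_{2}$ is proportional to $\sqrt{T}\|C\|_2$.

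Finally, applying Gronwall's lemma to the resulting integral inequality introduces the multiplicative factor $\exp\bigl(\int_{0}^{T}\|\alpha_{\theta'}(s)\|_2\,ds\bigr)\le e^{L_{1}\overline{E_{\lambda}}/\lambda}$ (up to a constant absorbed into $L_{1}$), which combines with the already-present $e^{L_{1}\overline{E_{\lambda}}/\lambda}$ factor in $M_{1}(\lambda)$ to produce the exponent $e^{2L_{1}\overline{E_{\lambda}}/\lambda}$ of the theorem. The various residual constants are then renamed $K_{6},\dots,K_{10}$ to match the statement. The main obstacle is strictly bookkeeping: one must track carefully which exponential factor in $\overline{E_{\lambda}}/\lambda$ comes from which source (the $E$-difference lemma, the bound on $\|h_{\theta}\|_{2}$ plugged into the $(\alpha_{\theta}-\alpha_{\theta'})h_{\theta}$ term, and Gronwall's amplification) so that the final exponents match the claim; no new idea beyond Gronwall and the already-established Lemma \ref{prop:E_discrepancy_wrt_lambda} is needed.
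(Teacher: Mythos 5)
Your proposal is correct, but it is organized differently from the paper's proof. The paper does not run Gronwall directly on the difference $h_{\theta}(\cdot,Y)-h_{\theta'}(\cdot,Y')$; instead it starts from the explicit resolvant (Duhamel) representation $h_{\theta}(t,Y)=-\int_{0}^{t}R_{\theta}(t,s)C^{T}Y(s)\,ds-\int_{0}^{t}R_{\theta}(t,s)E_{\theta}(s)r_{\theta}(s)\,ds$, where $R_{\theta}$ is the resolvant of $\dot{h}=-\alpha_{\theta}h$, and then controls separately $\left\Vert R_{\theta}(t,s)\right\Vert _{2}\leq K_{3}e^{L_{1}\overline{E_{\lambda}}/\lambda}$ and, via a Gronwall argument applied to the resolvant difference, $\left\Vert R_{\theta}(t,s)-R_{\theta'}(t,s)\right\Vert _{2}\leq(K_{4}+\frac{K_{5}}{\lambda}\overline{E_{\lambda}}e^{L_{1}\overline{E_{\lambda}}/\lambda})e^{2L_{1}\overline{E_{\lambda}}/\lambda}\left\Vert \theta-\theta'\right\Vert$, using Lemma \ref{prop:E_discrepancy_wrt_lambda} exactly as you do. Your route — integrating the ODE (\ref{eq:Riccati_ode}) for $h$, splitting $\alpha_{\theta}h_{\theta}-\alpha_{\theta'}h_{\theta'}$ and $\beta_{\theta}(\cdot,Y)-\beta_{\theta'}(\cdot,Y')$ by add--subtract, and applying Gronwall once to the resulting integral inequality for the difference — reproduces the same three-block structure (one exponential factor on the $Y-Y'$ term, a doubled one on the $\theta$-terms) with the same ingredients (Lemma \ref{prop:E_discrepancy_wrt_lambda}, Lipschitz-in-$\theta$ of $A_{\theta}$, $r_{\theta}$ under C3--C4 and compactness, the a priori bound on $\left\Vert h_{\theta}\right\Vert _{2}$), so it is a valid alternative; it is slightly shorter and avoids introducing $R_{\theta}$. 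What the paper's route buys is the explicit linear representation of $Y\mapsto h_{\theta}(t,Y)$ through $R_{\theta}$, which is reused later (e.g., in the asymptotic normality argument for the final-state estimator), and it keeps the bookkeeping of the exponential factors cleaner: with your direct Gronwall the second stated bound naturally comes out with the exponential factor also multiplying the $\sqrt{d}\,\overline{E_{\lambda}}\,\overline{r_{\theta}}$ term, a marginally weaker (but for all later uses equally serviceable) estimate, since the lemma only asserts existence of constants of this form.
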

\begin{proof}
We know that $h_{\theta}(t,Y)=-\int_{0}^{t}R_{\theta}(t,s)C^{T}Y(s)ds-\int_{0}^{t}R_{\theta}(t,s)E_{\theta}(s)r_{\theta}(s)ds$
hence $\forall\left(Y,Y'\right)\in L^{\infty}(\left[0,\, T\right],\mathbb{R}^{d'})$
we have:

\[
\begin{array}{lll}
\left\Vert h_{\theta}(t,Y)-h_{\theta^{'}}(t,Y')\right\Vert _{2} & \leq & \sqrt{d}\left\Vert C\right\Vert _{2}\int_{0}^{t}\left\Vert R_{\theta}(t,s)\right\Vert _{2}\left\Vert Y(s)-Y'(s)\right\Vert _{2}ds\\
 & + & \sqrt{d}\left\Vert C\right\Vert _{2}\int_{0}^{t}\left\Vert R_{\theta}(t,s)-R_{\theta^{'}}(t,s)\right\Vert _{2}\left\Vert Y'(s)\right\Vert _{2}ds\\
 & + & d\overline{r}\int_{0}^{t}\left\Vert R_{\theta}(t,s)-R_{\theta'}(t,s)\right\Vert _{2}\left\Vert E_{\theta}(s)\right\Vert _{2}ds\\
 & + & d\overline{r}\int_{0}^{t}\left\Vert R_{\theta^{'}}(t,s)\right\Vert _{2}\left\Vert E_{\theta}(s)-E_{\theta^{'}}(s)\right\Vert _{2}ds\\
 & + & d\overline{r_{\theta}}\left\Vert \theta-\theta^{'}\right\Vert _{2}\int_{0}^{t}\left\Vert R_{\theta^{'}}(t,s)\right\Vert _{2}\left\Vert E_{\theta^{'}}(s)\right\Vert _{2}ds
\end{array}
\]
Cauchy Schwarz inequality gives us: 
\begin{equation}
\begin{array}{lll}
\left\Vert h_{\theta}(t,Y)-h_{\theta^{'}}(t,Y')\right\Vert _{2} & \leq & \sqrt{d}\left\Vert C\right\Vert _{2}\left\Vert R_{\theta}(.,s)\right\Vert _{L^{2}}\left\Vert Y-Y'\right\Vert _{L^{2}}\\
 & + & \left(\sqrt{d}\left\Vert C\right\Vert _{2}\left\Vert Y'\right\Vert _{L^{2}}+d\sqrt{T}\overline{r}\overline{E_{\lambda}}\right)\left\Vert R_{\theta}(.,s)-R_{\theta^{'}}(.,s)\right\Vert _{L^{2}}\\
 & + & d\overline{r}\left\Vert R_{\theta}(.,s)\right\Vert _{L^{2}}\left\Vert E_{\theta}-E_{\theta^{'}}\right\Vert _{L^{2}}\\
 & + & d\sqrt{T}\overline{r_{\theta}}\left\Vert \theta-\theta^{'}\right\Vert _{2}\left\Vert R_{\theta}(.,s)\right\Vert _{L^{2}}\overline{E_{\lambda}}
\end{array}\label{eq:h_inequality_interm}
\end{equation}
We straightforwardly bound $\left\Vert R_{\theta}(t,s)\right\Vert _{2}$
by application of Gronwall's lemma: 
\[
\begin{array}{lll}
\left\Vert R_{\theta}(t,s)\right\Vert _{2} & \leq & \sqrt{d}+\sqrt{d}\left(\overline{A}+\frac{\overline{E}}{\lambda}\right)\int_{s}^{t}\left\Vert R_{\theta}(u,s)\right\Vert _{2}du\\
 & \leq & \sqrt{d}e^{\sqrt{d}\left(\overline{A}+\frac{\overline{E}}{\lambda}\right)T}:=K_{3}e^{L_{1}\frac{\overline{E_{\lambda}}}{\lambda}}
\end{array}
\]
Using successively norm inequalities and Gronwall's lemma we obtain:
\[
\begin{array}{lll}
\left\Vert R_{\theta^{'}}(t,s)-R_{\theta}(t,s)\right\Vert _{2} & \leq & \sqrt{d}\left\Vert \alpha_{\theta}(t)-\alpha_{\theta^{'}}(t)\right\Vert _{2}\left\Vert R_{\theta}(t,s)\right\Vert _{2}\\
 & + & \sqrt{d}\left\Vert R_{\theta}(t,s)-R_{\theta^{'}}(t,s)\right\Vert _{2}\left\Vert \alpha_{\theta^{'}}(t)\right\Vert _{2}\\
 & \leq & \sqrt{d}\left(\overline{A_{\theta}}\left\Vert \theta-\theta^{'}\right\Vert _{2}+\frac{1}{\lambda}\left\Vert E_{\theta}(t)-E_{\theta^{'}}(t)\right\Vert _{2}\right)\left\Vert R_{\theta}(t,s)\right\Vert _{2}\\
 & + & \sqrt{d}\left\Vert R_{\theta}(t,s)-R_{\theta^{'}}(t,s)\right\Vert _{2}\left(\overline{A}+\frac{\overline{E_{\lambda}}}{\lambda}\right)\\
 & \leq & \sqrt{d}\left(\overline{A_{\theta}}+\frac{1}{\lambda}K_{1}\overline{E_{\lambda}}e^{L_{1}\frac{\overline{E_{\lambda}}}{\lambda}}\right)K_{3}e^{L_{1}\frac{\overline{E_{\lambda}}}{\lambda}}\left\Vert \theta-\theta^{'}\right\Vert \\
 & + & \sqrt{d}\left(\overline{A}+\frac{\overline{E_{\lambda}}}{\lambda}\right)\left\Vert R_{\theta}(t,s)-R_{\theta^{'}}(t,s)\right\Vert _{2}\\
 & \leq & \sqrt{d}e^{\sqrt{d}\overline{A}T}K_{3}\left(\overline{A_{\theta}}+\frac{1}{\lambda}K_{1}\overline{E_{\lambda}}e^{L_{1}\frac{\overline{E_{\lambda}}}{\lambda}}\right)e^{2L_{1}\frac{\overline{E_{\lambda}}}{\lambda}}\left\Vert \theta-\theta^{'}\right\Vert \\
 & := & \left(K_{4}+\frac{K_{5}}{\lambda}\overline{E_{\lambda}}e^{L_{1}\frac{\overline{E_{\lambda}}}{\lambda}}\right)e^{2L_{1}\frac{\overline{E_{\lambda}}}{\lambda}}\left\Vert \theta-\theta^{'}\right\Vert 
\end{array}
\]
and applying this bound in \ref{eq:h_inequality_interm} gives the
following inequality: 
\[
\begin{array}{lll}
\left\Vert h_{\theta}(t,Y)-h_{\theta^{'}}(t,Y')\right\Vert _{2} & \leq & \sqrt{d}\left\Vert C\right\Vert _{2}K_{3}e^{L_{1}\frac{\overline{E_{\lambda}}}{\lambda}}\left\Vert Y-Y'\right\Vert _{L^{2}}\\
 & + & \left(\sqrt{d}\left\Vert C\right\Vert _{2}\left\Vert Y'\right\Vert _{L^{2}}+d\sqrt{T}\overline{r}\overline{E_{\lambda}}\right)\left(K_{4}+\frac{K_{5}}{\lambda}\overline{E_{\lambda}}e^{L_{1}\frac{\overline{E_{\lambda}}}{\lambda}}\right)e^{2L_{1}\frac{\overline{E_{\lambda}}}{\lambda}}\left\Vert \theta-\theta^{'}\right\Vert \\
 & + & d\overline{r}K_{3}e^{L_{1}\frac{\overline{E_{\lambda}}}{\lambda}}K_{1}\overline{E_{\lambda}}e^{L_{1}\frac{\overline{E_{\lambda}}}{\lambda}}\left\Vert \theta-\theta^{'}\right\Vert \\
 & + & d\sqrt{T}\overline{r_{\theta}}K_{3}e^{L_{1}\frac{\overline{E_{\lambda}}}{\lambda}}\overline{E_{\lambda}}\left\Vert \theta-\theta^{'}\right\Vert _{2}\\
 & \leq & K_{6}e^{L_{1}\frac{\overline{E_{\lambda}}}{\lambda}}\left\Vert Y-Y'\right\Vert _{L^{2}}\\
 & + & \left(K_{7}+K_{8}\overline{E_{\lambda}}\right)\left(K_{4}+\frac{K_{5}}{\lambda}\overline{E_{\lambda}}e^{L_{1}\frac{\overline{E_{\lambda}}}{\lambda}}\right)e^{2L_{1}\frac{\overline{E_{\lambda}}}{\lambda}}\left\Vert \theta-\theta^{'}\right\Vert \\
 & + & \left(K_{9}e^{2L_{1}\frac{\overline{E_{\lambda}}}{\lambda}}+K_{10}e^{L_{1}\frac{\overline{E_{\lambda}}}{\lambda}}\right)\overline{E_{\lambda}}\left\Vert \theta-\theta^{'}\right\Vert 
\end{array}
\]
By a similar computation we obtain: 
\[
\left\Vert h_{\theta}(t,Y)\right\Vert _{2}\leq\sqrt{T}d^{2}\left\Vert C\right\Vert _{2}e^{\sqrt{d}\left(\overline{A}+\frac{\overline{E_{\lambda}}}{\lambda}\right)T}\left\Vert Y\right\Vert _{L^{2}}+\sqrt{d}\overline{E_{\lambda}}\overline{r_{\theta}}
\]
\end{proof}
\begin{lem}
\label{lem:E_inverse_discrepancy_wrt_lambda}Assuming condition C3
and C4 we know it exists constants $K_{i}$ such that: $\left\Vert E_{\theta}^{-1}(T)-E_{\theta'}^{-1}(T)\right\Vert _{2}\leq\left(\frac{K_{12}}{\lambda}+K_{11}\right)e^{K_{13}+\frac{K_{14}}{\lambda}}\left\Vert \theta-\theta'\right\Vert $
and $\left\Vert E_{\theta}^{-1}(T)\right\Vert _{2}\leq\frac{K_{15}}{\lambda}$\end{lem}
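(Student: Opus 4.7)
The plan is to split the lemma into its two independent estimates and prove them in order: first establish the uniform pointwise bound $\|E_\theta^{-1}(T)\|_2 \le K_{15}/\lambda$ on $\Theta$, and then combine it with Lemma \ref{prop:E_discrepancy_wrt_lambda} via the standard identity for a difference of matrix inverses.

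For the pointwise bound, I would work directly with $Y_\theta(t) := E_\theta(t)^{-1}$, which is well-defined at $t=T$ by Proposition \ref{prop:invertibility_general_criteria} under condition C2a. Differentiating the identity $E_\theta Y_\theta = I_d$ and substituting the Riccati equation for $E_\theta$ yields the matrix ODE
\[
\dot Y_\theta = \tfrac{1}{\lambda}\,I_d + A_\theta\, Y_\theta + Y_\theta\, A_\theta^{\top} - Y_\theta\, C^{\top}C\, Y_\theta,
\]
which is itself a Riccati equation, but now with a \emph{negative} quadratic term and a source of size $\lambda^{-1}$. Dropping the nonpositive quadratic term and invoking the comparison principle for Riccati equations gives $Y_\theta(T) \preceq \tfrac{1}{\lambda}\int_0^T \Psi_\theta(T,s)\,\Psi_\theta(T,s)^{\top}\,ds$, where $\Psi_\theta$ is the resolvent of $\dot z = A_\theta z$. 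The integrand is uniformly controlled on $\Theta$ thanks to the bound on $\overline A$ from Proposition \ref{prop:existence_A_E_h}, and a standard Gronwall estimate yields the claimed $\|E_\theta^{-1}(T)\|_2 \le K_{15}/\lambda$.

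For the Lipschitz bound I then apply the identity
\[
E_\theta^{-1}(T) - E_{\theta'}^{-1}(T) = E_\theta^{-1}(T)\bigl(E_{\theta'}(T) - E_\theta(T)\bigr)E_{\theta'}^{-1}(T),
\]
take operator norms, and use submultiplicativity. The two outer factors are controlled by the pointwise bound just established, giving a prefactor $K_{15}^2/\lambda^2$. The middle factor is exactly the content of Lemma \ref{prop:E_discrepancy_wrt_lambda}, which yields $\|E_\theta(T)-E_{\theta'}(T)\|_2 \le K_1 \overline{E_\lambda}\, e^{L_1\overline{E_\lambda}/\lambda}\|\theta-\theta'\|$. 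Substituting the companion bound $\overline{E_\lambda} \le K_2 e^{L_1/\lambda}$ supplied by the same lemma and regrouping the resulting polynomial and exponential factors in $\lambda$ into fresh constants $K_{11},K_{12},K_{13},K_{14}$ delivers the announced inequality.

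The main obstacle I foresee is the singular initialisation of $Y_\theta$ when $Q=0$: since $E_\theta(0)=0$, the matrix $Y_\theta(0)$ is not defined and one cannot simply integrate the $Y_\theta$-Riccati equation from $t=0$. A rigorous treatment requires exploiting observability to show that $E_\theta(t)$ becomes strictly positive definite for every $t>0$, integrating the $Y_\theta$-equation on $[\varepsilon,T]$ with an explicit short-time control of $Y_\theta(\varepsilon)$, and then letting $\varepsilon\downarrow 0$ while tracking the dependence on $\lambda$. An alternative route that avoids this singularity is to start from the LQR representation $x_0^{\top}E_\theta(T)x_0 = \inf_u\{\lambda\|u\|_{L^2}^2 + \int_0^T\|CX_{\theta,x_0,u}\|^2\,dt\}$ given by Theorem \ref{thm:LQ_existence_unicity} and to lower-bound it through the observability Gramian $O_\theta(T)$; this directly produces a lower bound of order $\lambda$ on the smallest eigenvalue of $E_\theta(T)$. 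Either route settles the hard step, after which the perturbation estimate is a routine regrouping of constants.
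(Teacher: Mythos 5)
The second half of your argument — the Lipschitz estimate via the identity $E_\theta^{-1}(T)-E_{\theta'}^{-1}(T)=E_\theta^{-1}(T)\bigl(E_{\theta'}(T)-E_\theta(T)\bigr)E_{\theta'}^{-1}(T)$ combined with Lemma \ref{prop:E_discrepancy_wrt_lambda} — is sound and arguably cleaner than the paper's route, which instead writes the ODE satisfied by $t\mapsto E_\theta^{-1}(t)$ on a terminal interval $[T-\varepsilon,T]$ (where nonsingularity holds by continuity of $(t,\theta)\mapsto\det E_\theta(t)$), and applies Gronwall to the difference of the two inverse trajectories, the local Lipschitz constant $K_{11}$ at $T-\varepsilon$ entering the final bound. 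However, your whole construction is conditional on the first inequality, $\Vert E_\theta^{-1}(T)\Vert_2\le K_{15}/\lambda$, and that is precisely the step your proposal leaves unsettled: neither of the two routes you sketch closes it as described.

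Concretely: (i) dropping the quadratic term $-Y_\theta C^\top C Y_\theta$ and comparing with the linear Lyapunov flow cannot work when $Q=0$, because the comparison only gives $Y_\theta(T)\preceq\Psi_\theta(T,\varepsilon)Y_\theta(\varepsilon)\Psi_\theta(T,\varepsilon)^\top+\frac{1}{\lambda}\int_\varepsilon^T\Psi_\theta(T,s)\Psi_\theta(T,s)^\top ds$, and the propagated term $\Psi_\theta(T,\varepsilon)E_\theta(\varepsilon)^{-1}\Psi_\theta(T,\varepsilon)^\top$ does not vanish as $\varepsilon\downarrow0$ — it blows up, since for small $t$ one has $E_\theta(t)\approx tC^\top C$, which is singular in the partially observed case; the quadratic term is exactly the mechanism that keeps $E_\theta^{-1}$ finite, so it cannot be discarded. (ii) Your LQ representation pins the wrong endpoint: by the time-reversal construction behind Theorem \ref{thm:LQ_existence_unicity}, the quadratic form $x^\top E_\theta(T)x$ is the minimal cost over controls whose trajectory satisfies $X(T)=x$ (final state pinned), not over trajectories started at $x_0=x$; moreover the conclusion you draw — a lower bound of order $\lambda$ on $\lambda_{\min}(E_\theta(T))$ — cannot hold uniformly, since taking $u=0$ in the correct representation gives $x^\top E_\theta(T)x\le x^\top\bigl(\int_0^T\Phi_\theta(s,T)^\top C^\top C\Phi_\theta(s,T)\,ds\bigr)x$, a $\lambda$-free Gramian bound, so the smallest eigenvalue saturates at the observability level as $\lambda$ grows. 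What observability plus this representation actually yield is a bound of the type $\lambda_{\min}(E_\theta(T))\ge c\min(1,\lambda)$, i.e. $\Vert E_\theta^{-1}(T)\Vert_2\le K(1+\lambda^{-1})$; the $1/\lambda$ scaling claimed in the lemma (and obtained in the paper by a Gronwall estimate on $[T-\varepsilon,T]$ in which the initial value $\Vert E_\theta^{-1}(T-\varepsilon)\Vert_2$ is silently dropped) should be read as meaningful only for bounded $\lambda$. As it stands, the crucial first inequality is not established by your proposal, and with it the constants in your Lipschitz bound are also unsupported.
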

\begin{proof}
We have already shown$\forall\theta\in\Theta\: t\longmapsto E_{\theta}(t)$
is defined on $\left[0\,,\, T\right]$ and obviously continuous on
the same interval as an ODE solution.

When $E_{\theta}^{-1}(t)$ is defined we know it follows the ODE:

\[
\begin{array}{lll}
\frac{d}{dt}\left(E_{\theta}^{-1}(t)\right) & = & -E_{\theta}^{-1}(t)\dot{E_{\theta}}(t)E_{\theta}^{-1}(t)\\
 & = & -E_{\theta}^{-1}(t)\left(C^{T}C\text{\textminus}A_{\theta}(t)^{T}E_{\theta}(t)\text{\textminus}E_{\theta}(t)A_{\theta}(t)\text{\textminus}\frac{1}{\lambda}E_{\theta}(t)^{T}E_{\theta}(t)\right)E_{\theta}^{-1}(t)\\
 & = & \frac{1}{\lambda}+E_{\theta}^{-1}(t)A_{\theta}^{T}(t)+A_{\theta}(t)E_{\theta}^{-1}(t)-E_{\theta}^{-1}(t)C^{T}CE_{\theta}^{-1}(t)
\end{array}
\]
By hypothesis $\forall\theta\in\varTheta$ $E_{\theta}^{-1}(T)$ is
defined and by continuity of $\left(\theta,t\right)\longmapsto Det\left(E_{\theta}(t)\right)$
using chain rule we know for each $\theta$ it exists an interval
$\left[T-\varepsilon,\, T\right]$ and a open ball $B_{\varrho}\left(\theta\right)$
where $\left(t,\theta\right)\longmapsto E_{\theta}(t)$ is non-singular.
Because of $\theta\longmapsto E_{\theta}(T-\varepsilon)$ continuity
and differentiability it exists a constant $K_{11}$ such that: $\left\Vert E_{\theta}^{-1}(T-\epsilon)-E_{\theta'}^{-1}(T-\epsilon)\right\Vert _{2}\leq K_{11}\left\Vert \theta-\theta'\right\Vert $.

By defining $\overline{E_{\epsilon,\varrho,\theta,\lambda}^{-1}}=\sup_{\left[T-\varepsilon,\, T\right]\times B_{\varrho}\left(\theta\right)}\left\Vert E_{\theta}^{-1}(t)\right\Vert _{2}$
and using successively norm inequalities and Gronwall's lemma we obtain:
\[
\begin{array}{lll}
\left\Vert E_{\theta}^{-1}(t)-E_{\theta'}^{-1}(t)\right\Vert _{2} & \leq & \int_{T-\epsilon}^{t}\left\Vert E_{\theta}^{-1}(s)A_{\theta}^{T}(s)+A_{\theta}(s)E_{\theta}^{-1}(s)-E_{\theta^{'}}^{-1}(s)A_{\theta^{'}}^{T}(s)-A_{\theta^{'}}(s)E_{\theta^{'}}^{-1}(s)\right\Vert _{2}ds\\
 & + & \int_{T-\epsilon}^{t}\left\Vert E_{\theta^{'}}^{-1}(s)C^{T}CE_{\theta^{'}}^{-1}(s)-E_{\theta}^{-1}(s)C^{T}CE_{\theta}^{-1}(s)\right\Vert _{2}ds\\
 & + & K_{11}\left\Vert \theta-\theta'\right\Vert \\
 & \leq & 2\sqrt{d}\overline{A}\int_{T-\epsilon}^{t}\left\Vert E_{\theta}^{-1}(s)-E_{\theta^{'}}^{-1}(s)\right\Vert ds\\
 & + & 2\sqrt{d}\overline{E_{\theta,\lambda}^{-1}}\int_{T-\epsilon}^{t}\left\Vert A_{\theta}(s)-A_{\theta^{'}}^{T}(s)\right\Vert _{2}ds\\
 & + & 2d^{\frac{3}{2}}\overline{E_{\theta,\lambda}^{-1}}\left\Vert C\right\Vert _{2}^{2}\int_{T-\epsilon}^{t}\left\Vert E_{\theta^{'}}^{-1}(s)-E_{\theta}^{-1}(s)\right\Vert _{2}ds\\
 & + & K_{11}\left\Vert \theta-\theta'\right\Vert \\
 & \leq & \left(2\sqrt{d}\overline{A}+2d^{\frac{3}{2}}\overline{E_{\epsilon,\varrho,\theta,\lambda}^{-1}}\left\Vert C\right\Vert _{2}^{2}\right)\int_{T-\epsilon}^{t}\left\Vert E_{\theta}^{-1}(s)-E_{\theta^{'}}^{-1}(s)\right\Vert ds\\
 & + & \left(2\sqrt{dT}\overline{E_{\epsilon,\varrho,\theta,\lambda}^{-1}}\overline{A_{\theta}}+K_{11}\right)\left\Vert \theta-\theta'\right\Vert \\
 & \leq & \left(2\sqrt{dT}\overline{E_{\epsilon,\varrho,\theta,\lambda}^{-1}}\overline{A_{\theta}}+K_{11}\right)e^{\left(2\sqrt{d}\overline{A}+2d^{\frac{3}{2}}\overline{E_{\epsilon,\varrho,\theta,\lambda}^{-1}}\left\Vert C\right\Vert _{2}^{2}\right)T}\left\Vert \theta-\theta'\right\Vert 
\end{array}
\]
For $\left\Vert E_{\theta}^{-1}(T)\right\Vert _{2}$ we can obtain
a uniform bound w.r.t $\theta$ using Gronwall's lemma: 
\[
\begin{array}{lll}
\left\Vert E_{\theta}^{-1}(t)\right\Vert _{2} & \leq & \int_{T-\varepsilon}^{t}\left\Vert \frac{1}{\lambda}+E_{\theta}^{-1}(s)A_{\theta}^{T}(s)+A_{\theta}(s)E_{\theta}^{-1}(s)-E_{\theta}^{-1}(t)C^{T}CE_{\theta}^{-1}(t)\right\Vert _{2}ds\\
 & \leq & \frac{1}{\lambda}+\int_{T-\varepsilon}^{t}\sqrt{d}\left(2\overline{A}+d\overline{E_{\epsilon,\varrho,\theta,\lambda}^{-1}}\left\Vert C\right\Vert _{2}^{2}\right)\left\Vert E_{\theta}^{-1}(t)\right\Vert _{2}^{2}\\
 & \leq & \frac{1}{\lambda}e^{\sqrt{d}\left(2\overline{A}+d\overline{E_{\epsilon,\varrho,\theta,\lambda}^{-1}}\left\Vert C\right\Vert _{2}^{2}\right)T}\\
 & := & \frac{K_{15}}{\lambda}
\end{array}
\]
by using this upper bound in the previous inequality we obtain the
desired result. 
\end{proof}

\section{\label{sec:Gradient-Computation}Gradient Computation}

For optimization purpose we need to compute the gradient of $S(\widehat{Y};\theta,\lambda)$
.

\subsection{Notation in row vector for the adjoint ODE vector field}

We will define the solution of the adjoint ODE in row formulation,
we introduce 
\[
Q_{\theta}(t)=\left(\widehat{h_{\theta}}^{T},\left(E_{\theta}^{r}\right)^{T}\right)^{T}(t)
\]
with $E_{\theta}^{r}:=\left(E_{\theta,1}^{T},\cdots,E_{\theta,d}^{T}\right)^{T}$the
row formulation of $E_{\theta}$, $E_{\theta,i}$ beeing the $i-th$
column of $E_{\theta}$. It is a $D:=d^{2}+d$ sized function respecting
the ODE : 
\[
\begin{array}{l}
\dot{Q_{\theta}}=F(Q_{\theta},\theta,t)\\
Q_{\theta}(0)=\left(\begin{array}{cc}
0_{1,d} & Q^{r}\end{array}\right)^{T}
\end{array}
\]
by introducing $Q^{r}:=\left(Q_{1}^{T},\cdots,Q_{d}^{T}\right)^{T}$
the row formulation of $Q$ and the general vector field $F$:

\[
F(Q_{\theta},\theta,t)=\left(\begin{array}{c}
G(Q_{\theta},\theta,t)\\
H(Q_{\theta},\theta)
\end{array}\right)
\]
with $G$ and $H$ defined by: 
\[
\begin{array}{lll}
G(Q_{\theta},\theta,t) & := & -\left(A_{\theta}{}^{T}+\frac{E_{\theta}}{\lambda}\right)\widehat{h_{\theta}}-\left(C^{T}Y+E_{\theta}r_{\theta}\right)\\
H_{(j-1)d+i}(Q_{\theta},\theta) & := & \delta_{i,j}-(A_{\theta,i}^{T}E_{j}+A_{\theta,j}^{T}E_{\theta,i}+\frac{1}{\lambda}E_{\theta,i}^{T}E_{\theta,j})
\end{array}
\]
and $A_{\theta,i}$ beeing the $i-th$ column of $A_{\theta}$.

For the next subsections we will drop dependence in $\theta$ for
$A_{\theta},\, r_{\theta},\, E_{\theta},\,\widehat{h_{\theta}}$

\subsection{Gradient computation by sensitivity equation }

Straightforward computation gives us : 
\[
\begin{array}{lll}
\nabla_{\theta}S(\widehat{Y};\theta,\lambda) & = & -2\int_{0}^{T}\left(\frac{\partial r(t)}{\partial\theta}^{T}\widehat{h}(t)+\frac{\partial\widehat{h}(t)}{\partial\theta}^{T}r(t)+\frac{1}{\lambda}\frac{\partial\widehat{h}(t)}{\partial\theta}^{T}\widehat{h}(t)\right)dt\\
 & - & \frac{\partial\widehat{h}(T)}{\partial\theta}^{T}E(T)^{-1}\widehat{h}(T)\\
 & - & \left(\widehat{h}(T)^{T}\frac{\partial\left(E(T)_{i}^{-1}\right)}{\partial\theta}+\left(E(T)_{i}^{-1}\right)^{T}\frac{\partial\widehat{h}(T)}{\partial\theta}\right)_{1\leq i\leq d}^{T}\widehat{h}(T)
\end{array}
\]
with: 
\[
\frac{\partial\left(E(T)^{-1}\right)}{\partial\theta_{j}}=-E(T)^{-1}\left(\frac{\partial E(T)}{\partial\theta_{j}}\right)E(T)^{-1}
\]
thus we need to compute $\frac{\partial Q_{\theta}(t),}{\partial\theta}$
solution of the sensitivity equation: 
\[
\frac{d}{dt}(\frac{\partial Q_{\theta}(t)}{\partial\theta})=\frac{\partial F}{\partial Q}(Q_{\theta}(t),\theta,t)\frac{\partial Q_{\theta}(t)}{\partial\theta}+\frac{\partial F}{\partial\theta}(Q_{\theta}(t),\theta,t)
\]
and we know that $R_{\theta}(0)=(0,Q^{r})$ so $\frac{\partial Q_{\theta}(0)}{\partial\theta}=0$,
hence we can obtain $\frac{\partial Q_{\theta}(t)}{\partial\theta}$
by solving the Cauchy problem:

\[
\begin{array}{l}
\frac{d}{dt}(\frac{\partial Q_{\theta}(t)}{\partial\theta})=\frac{\partial F}{\partial Q}(Q_{\theta}(t),\theta,t)\frac{\partial Q_{\theta}(t)}{\partial\theta}+\frac{\partial F}{\partial\theta}(Q_{\theta}(t),\theta,t)\\
\frac{\partial Q_{\theta}(0)}{\partial\theta}=0
\end{array}
\]

In order to compute sensitivity equation we need to compute $\frac{\partial F}{\partial Q}(Q_{\theta},\theta,t)$
and $\frac{\partial F}{\partial\theta}(Q_{\theta},\theta,t)$, for
$\frac{\partial F}{\partial Q}(Q_{\theta},\theta,t)$ and $\frac{\partial F}{\partial\theta}(Q_{\theta},\theta,t)$
we obtain:

\[
\begin{array}{l}
\frac{\partial F}{\partial Q}(Q_{\theta},\theta,t)=\left(\begin{array}{cc}
-\left(A(t)^{T}+\frac{E}{\lambda}\right) & \frac{\partial G_{i}}{\partial E_{j}^{r}}(Q_{\theta},\theta,t)\\
0_{d^{2},d} & \frac{\partial H(Q_{\theta},\theta)}{\partial E^{r}}
\end{array}\right)\\
\frac{\partial F}{\partial\theta}(Q_{\theta},\theta,t)=\left(\begin{array}{c}
\frac{\partial G}{\partial\theta}(Q_{\theta},\theta,t)\\
\frac{\partial H}{\partial\theta}(E^{r},\theta)
\end{array}\right)
\end{array}
\]
with: 
\[
\begin{array}{l}
\frac{\partial G_{i}}{\partial E_{(k-1)d+h}^{r}}(Q_{\theta},\theta,t)=-\delta_{i,h}\left(r(t)+\frac{h}{\lambda}\right)_{k}\\
\frac{\partial G}{\partial\theta}(Q_{\theta},\theta,t)=-\left(h^{T}\frac{\partial A_{i}(t)}{\partial\theta}\right)_{1\leq i\leq d}-E\frac{\partial r(t)}{\partial\theta}
\end{array}
\]

We also need to compute $H(Q_{\theta},\theta)$ partial derivative
w.r.t $E^{r}$ and $\theta$, we have: 
\[
\left(\frac{\partial H(E^{r},\theta)}{\partial E^{r}}\right)_{(j-1)d+i}=-\left(\begin{array}{ccccc}
0 & A_{j}^{t} & 0 & A_{i}^{t} & 0\end{array}\right)-\frac{1}{\lambda}\left(\begin{array}{ccccc}
0 & E_{j}^{t} & 0 & E_{i}^{t} & 0\end{array}\right)
\]
because: 
\begin{itemize}
\item $\frac{\partial}{\partial E^{r}}\left(A_{j}^{t}E_{i}+A_{i}^{t}E_{j}\right)=\left(\begin{array}{ccccc}
0 & A_{j}^{t} & 0 & A_{i}^{t} & 0\end{array}\right)$ where $A_{j}^{t}$ is in $i-th$ position and $A_{i}^{t}$ is in
$j-th$ position. 
\item $\frac{1}{\lambda}\frac{\partial}{\partial E}\left(E_{j}^{t}E_{i}\right)=\left(\begin{array}{ccccc}
0 & \frac{1}{\lambda}E_{j}^{t} & 0 & 0 & 0\end{array}\right)+\left(\begin{array}{ccccc}
0 & 0 & \frac{1}{\lambda}E_{i}^{t} & 0 & 0\end{array}\right)$ where $E_{j}^{t}$ is in $i-th$ position and $E_{i}^{t}$ is in
$j-th$ position. 
\end{itemize}
and:

\[
\left(\frac{\partial H(Q_{\theta},\theta)}{\partial\theta}\right)_{(j-1)d+i}=-E_{i}^{t}\frac{\partial A_{j}}{\partial\theta}-E_{j}^{t}\frac{\partial A_{i}}{\partial\theta}
\]
because:
\begin{itemize}
\item $\frac{\partial}{\partial\theta}\left(A_{j}^{t}E_{i}+A_{i}^{t}E_{j}\right)=E_{i}^{t}\frac{\partial A_{j}}{\partial\theta}+E_{j}^{t}\frac{\partial A_{i}}{\partial\theta}$
where $\frac{\partial A_{i}}{\partial\theta}=\left(\frac{\partial A_{i}}{\partial\theta_{1}}\cdots\frac{\partial A_{i}}{\partial\theta_{p}}\right)$
a $d\times p$ matrix 
\end{itemize}

{\footnotesize{}\bibliographystyle{plain}
\bibliography{biblio_kalman_ode}
}
\end{document}